\documentclass{article}

\def\arxiv{1}

\title{%
Computing Flows in Subquadratic Space}

\usepackage[bookmarks,colorlinks,breaklinks]{hyperref}
\hypersetup{urlcolor=blue, colorlinks=true, citecolor=green!50!black, linkcolor=blue}
\usepackage{multirow}
\usepackage[T1]{fontenc}
\usepackage{amsmath}
\usepackage{amsfonts}
\usepackage{amssymb}
\usepackage{amsthm}
\usepackage{thmtools}
\usepackage{pgffor}
\usepackage{xcolor}
\usepackage{cleveref}
\usepackage{graphicx}
\usepackage{geometry}
\usepackage{enumitem}
\usepackage{url}
\usepackage{tikz}
\usepackage{algorithm}
\usepackage{algpseudocode}
\usepackage{thm-restate}
\usepackage{circuitikz}
\usepackage{tcolorbox}

\declaretheorem[numberwithin=section,refname={Theorem,Theorems},Refname={Theorem,Theorems}]{theorem}
\declaretheorem[numberlike=theorem]{lemma}

\declaretheorem[numberlike=theorem]{definition}
\declaretheorem[numberlike=theorem]{claim}
\declaretheorem[numberlike=theorem]{fact}

\newcommand{\R}{\mathbb{R}}

\newcommand{\N}{\mathbb{N}}
\newcommand{\Z}{\mathbb{Z}}

\renewcommand{\tilde}{\widetilde}

\newcommand{\tO}{\tilde{O}}

\DeclareMathOperator*{\argmin}{argmin}
\DeclareMathOperator*{\argmax}{argmax}

\newcommand{\poly}{\operatorname{poly}}
\newcommand{\polylog}{\operatorname{polylog}}

\newcommand{\diag}{\operatorname{diag}}

\DeclareMathOperator{\round}{round}

\newcommand{\unit}{\chi}

\foreach \x in {A,...,Z}{%
	\expandafter\xdef\csname m\x\endcsname{\noexpand\mathbf{\x}}
}
\foreach \x in {A,...,Z}{%
	\expandafter\xdef\csname om\x\endcsname{\noexpand\overline{\noexpand\mathbf{\x}}}
}

\foreach \x in {A,...,Z}{%
	\expandafter\xdef\csname c\x\endcsname{\noexpand\mathcal{\x}}
}

\renewcommand{\o}[1]{\noexpand{\overline{#1}}}
\newcommand{\otau}{\noexpand{\overline{\tau}}}
\newcommand{\omu}{\noexpand{\overline{\mu}}}
\newcommand{\osigma}{\noexpand{\overline{\sigma}}}
\newcommand{\os}{\noexpand{\overline{s}}}
\newcommand{\ot}{\noexpand{\overline{t}}}
\newcommand{\og}{\noexpand{\overline{g}}}
\newcommand{\ou}{\noexpand{\overline{u}}}
\newcommand{\ov}{\noexpand{\overline{v}}}
\newcommand{\ow}{\noexpand{\overline{w}}}
\newcommand{\ox}{\noexpand{\overline{x}}}
\newcommand{\oy}{\noexpand{\overline{y}}}
\newcommand{\oz}{\noexpand{\overline{z}}}
\newcommand{\oc}{\noexpand{\overline{c}}}
\newcommand{\oh}{\noexpand{\overline{h}}}
\newcommand{\of}{\noexpand{\overline{f}}}
\newcommand{\oT}{\noexpand{\overline{\mT}}}

\newcommand{\init}{\mathrm{(init)}}
\newcommand{\target}{\mathrm{(target)}}
\newcommand{\final}{{(\mathrm{final})}}

\newcommand{\OPT}{\mathrm{OPT}}

\makeatletter
\renewcommand{\paragraph}{%
	\@startsection{paragraph}{4}%
	{\z@}{1.25ex \@plus 1ex \@minus .2ex}{-1em}%
	{\normalfont\normalsize\bfseries}%
}
\makeatother

\makeatletter
\renewcommand{\subparagraph}{%
	\@startsection{paragraph}{4}%
	{\z@}{1.25ex \@plus 1ex \@minus .2ex}{-1em}%
	{\normalfont\normalsize\bfseries}%
}
\makeatother

\author{Jan van den Brand\footnote{Georgia Institute of Technology, \texttt{vdbrand@gatech.edu}} \and Zhao Song\footnote{\texttt{magic.linuxkde@gmail.com}} \and Albert Weng\footnote{Georgia Institute of Technology, \texttt{albweng@gatech.edu}}}
\date{}

\begin{document}
\pagenumbering{roman}
\maketitle
\begin{abstract}

Space complexity is a critical factor in various computational models, including streaming, parallel/distributed computing, and communication complexity.
We study the space complexity of the minimum-cost flow problem, a generalization of the st-max flow problem, focusing on computing flows in subquadratic space. 
In the general case with arbitrary capacities, minimum cost and $st$-maximum flows can use up to $\Omega(n^2)$ edges, so computing the flow on each edge (rather than just the size/cost)
seems impossible in subquadratic space. 
Indeed, there are lower bounds proving quadratic space is needed to store the flow on every edge, which has been used to prove lower bounds on streaming algorithms.
However, we show that these lower bounds can be circumvented, opening up improvements for streaming and communication complexity.

For a directed graph with integer capacities and costs bounded by $W$, we provide a $\tilde O(n^{1.5}\log (W/\epsilon))$-space $\tilde O(\sqrt{n} \log(W/\epsilon))$-pass streaming algorithm,
which during the last pass
returns the flow on each edge
up to an additive error of $\epsilon$.
Crucially, the algorithm does not return the flow at the end of the last pass but returns the flow on an edge, as the edge is read in the stream.
This allows us to circumvent existing $\Omega(n^2)$ space lower bounds.
In the 2-party communication model, our algorithm implies $\tilde O(n^{1.5}\log^2 W)$ bits of communication.

\end{abstract}

\clearpage
\tableofcontents

\clearpage
\pagenumbering{arabic}
\section{Introduction}

Space complexity is a important factor in streaming, parallel/distributed computing, communication complexity, and data structures, requiring computation and representation of objects using limited space.
In this work, we study the space complexity of the minimum-cost flow problem, a generalization of the $st$-max flow problem. 
We focus on computing such flows in subquadratic space
in the multi-pass streaming model, contrasting quadratic space lower bounds from \cite{ChakrabartiJWY25,AssadiCK19,AssadiGLMM24}.
In particular, our goal is to compute the actual flow on each edge, rather than just the cost of the min-cost flow or the size of the maximum flow.

In the minimum-cost flow problem, we are given a directed $n$-node graph $G=(V,E)$, a vertex-demand vector $d\in\Z^V$, edge capacities $u\in\Z^E_{\ge0}$ and edge costs $c\in\Z^E$. The minimum-cost flow is a flow $f\in\R_{\ge 0}^E$ that satisfies the vertex-demands and edge capacities, while minimizing the cost $\sum_{e\in E} c_e\cdot f_e$.
While the min-cut $W\subset V$ (and thus size of a max flow) can be stored in $O(n)$ space, observe that flows in general can use $\Omega(n^2)$ edges. Thus it is not clear if one can compute the flow $f\in\R_{\ge0}^E$ (i.e., the amount of flow on each edge) in subquadratic space.

\subparagraph{Streaming Algorithms}

In the streaming model, the input (i.e., edge set) is given to the algorithm one-by-one as a stream of data. The algorithm is only allowed a small amount of space and thus cannot store the entire input. In a multi-pass algorithm, one may take multiple passes over the input stream.
This computational model is motivated by large amounts of data being faster to read sequentially than via random access, e.g., when the data is too large for working memory and is read directly from hard-drive.
Bipartite matching (a simpler special case of max flow) is one of the most successfully studied problems in streaming literature, both in the approximate 
(see, e.g.,\cite{EggertKS09,AhnG11,EggertKMS12,GoelKK12,KonradMM12,Kapralov13,AhnG18,BeckerFKL21,AssadiJJST22,AssadiS23,Assadi24} and references therein)
and exact setting \cite{lsz+23,AssadiJJST22}. However, for flows much less is known.
For undirected maximum flows, one can $(1+\epsilon)$-approximate the size of the max flow in $\tilde O(n)$ space via duality 
by constructing a cut sparsifier (see, e.g., \cite{ag09,AhnGM12,KyngPPS17,ckst19,ChakrabartiJWY25}).
\cite{RubinsteinSW18} show one can compute the exact $st$-min cut (i.e., size of max flow) on unweighted undirected graphs in $\tilde O(n^{5/3})$ space.
Computing the exact flow size/cost on weighted graphs in subquadratic space remains open.

When it comes to computing not just the size/cost of the flow but the flow $f\in\R^E_{\ge0}$ itself, existing upper bounds are substantially weaker.
While sparsification approaches allow a $(1+\epsilon)$-approximation of the size, only crude $\poly(n)$ approximations are known for computing flow $f$ in subquadratic space \cite{ChakrabartiJWY25}.
Indeed, this gap between size and flow $f$ can be explained and \cite{ChakrabartiJWY25} provide respective lower bounds.

For intuition, why representing (i.e., not just computing) flows in subquadratic space is impossible, consider 
a complete bipartite graph with edges oriented left to right, an extra vertex $s$ connected to the left vertices, and an extra vertex $t$ reachable from all right vertices.
The edges from $s$ and edges to $t$ have very high capacity, so the max flow has $f_e = u_e$ for each edge $e$ of the bipartite graph.
Thus storing/representing all (or even most, in case of approximation) $f_e$ in subquadratic space means we can store all (or most) capacities $u_e$ in subquadratic space, but that is $\Theta(n^2)$ information ($u \in \R^E$ could be any $n^2$-sized binary bit string).
This lower bound argument was formalized by \cite{ChakrabartiJWY25}\footnote{\cite{ChakrabartiJWY25} stated this as $\ell_\infty$-regression, but the statement is equivalent to max-flow. 
\ifdefined\arxiv
See \Cref{sec:lower_bound}.
\else 
Details in the full version.
\fi}, who proved that for every encoding algorithm $\cE(G)\in \{0,1\}^d$ (computing a max-flow on $G$ and then encoding it in $d$ bit) and decoding algorithm $\cA$ (where $\cA(\cE(G))$ returns a $(1-1/36)$-approximation of the max-flow) needs $d = \Omega(n^2)$. %
Thus it's impossible to encode max/min-cost flows $f$ in subquadratic space.
Importantly, the lower bound by \cite{ChakrabartiJWY25} applies not just to streaming but to any encoding scheme. Thus the argument can also be used to argue about the communication complexity for communicating $f$.

\subparagraph{Communication Complexity}
In the 2-party (or multi-party) communication model, the edges of the input graph are distributed among multiple parties.
The task is to solve some graph problem while using as little communication as possible.
This area, too, had a lot of success for various graph problems \cite{IvanyosKLSW12,BlikstadJMY25,FlinM24,DobzinskiNO14,PapadimitriouS82,BabaiFS86,HajnalMT88,DurisP89,MandePSS24}, but flows remain elusive.
For related problems like bipartite matching or transshipment, an $\tilde O(n)$ upper bound is known \cite{BlikstadBEMN22,IvanyosKLSW12}, crucially relying on the fact that the optimal solution consists of at most $O(n)$ edges.
For unweighted undirected $st$-min cut, a $\tilde O(n^{5/3})$ communication protocol is known \cite{RubinsteinSW18}, later improved to $\tilde O(n^{11/7})$ \cite{JiangNS26}. We remark that on unweighted graphs the max-flow can use at most $O(n^{1.5})$ edges\footnote{%
\ifdefined\arxiv
See \Cref{thm:unweightedflow}
\else 
See full version for details.
\fi}, 
so the problem is substantially easier in the unweighted setting.
A subquadratic communication algorithm for weighted flows (but also the simpler weighted $st$-min-cut problem where solutions have size $O(n)$) was stated as open problem in \cite{BlikstadBEMN22}.

When it comes to flow $f$, rather than its cost/size, note that not just the \emph{computation} is difficult, but just \emph{communication} of the solution is already a challenge.
Namely, assume one party already knows all the edges and the flow. How would that party communicate the flow to the other party?
This is clearly a simpler problem than jointly computing the flow since the 1st party could just throw away the extra information.
However, how would one communicate the flow if no low space representation of the flow is possible? 
In particular, the communicated messages would be an encoding of the flow.

\subparagraph{Computing Flow In Subquadratic Space} 
We show that computation of the min-cost/max flows in subquadratic space is possible.
In addition to the exact flow cost/size, our algorithm also returns the flow $f_e$ on each edge $e\in E$.
This is unexpected, as it appears to contradict the lower bound of \cite{ChakrabartiJWY25}.
We can circumvent the lower bound, because it assumes a decoding algorithm of form $\cD(M)$ for encoding $M\in\{0,1\}^*$ of the flow, i.e., decoding must be performed without access to the graph.
The lower bound does not rule out a decoding algorithm of form $\cD(M, e)$ that receives edge $e$ as part of the input in addition to some subquadratic space encoding $M\in\{0,1\}^*$ of the flow.
The assumption that we have access to the edge during decoding is naturally satisfied in many low space models.
In the communication model, each party knows their own (possibly $\Omega(n^2)$ sized) set of edges.
For multi-pass streaming algorithms, the edges are available via another pass over the input. In particular, during the last pass over the edges, the algorithm returns the flow on each edge when it is read in the stream.
This approach allows us to compute flows in subquadratic space, even though the flow uses up to $\Omega(n^2)$ many edges.

Our main technical contribution is such an encoding/decoding scheme adapted to the framework of ``robust interior point methods'' for linear programming. In the context of streaming algorithms, the mere existence of such encoding does not suffice, but we show that the computation of this encoding can be done in low space, too. %
Using the encoding with the robust interior point method of \cite{bll+21} (which solves linear programs/min-cost flow in $\tilde O(\sqrt{n})$ iterations), we present a $\tilde O(\sqrt n)$-pass, $\tilde O ( n^{1.5})$-space streaming algorithm, which can be implemented with $\tilde O(n^{1.5})$ communication in the 2-party model.
\subsection{Our Results}

We present the following result for computing min-cost flows in the multi-pass streaming model.

\begin{restatable}[Min-cost Flow, Streaming]{theorem}{mainStreamFlows}\label{thm:main_stream}
    There is a randomized multi-pass streaming algorithm that, given an input graph $G=(V,E)$ with integer capacities $u\in[1,W]^E$ and costs $c\in[1,W]^E$, vertex demand vector $b\in\R^V$ and accuracy parameter $\epsilon\in[0,1]$, computes w.h.p.~a min-cost flow in $\tilde{O}(n^{1.5}\log(W/\epsilon))$ space and $\tilde{O}(\sqrt{n}\log(W/\epsilon))$ passes over the input.
    The total time is $\tilde O(mn \log^2 (W/\epsilon))$.

    The algorithm returns a randomized $\tilde O(n^{1.5}\log (W/\epsilon))$-space data structure that supports edge-queries: For any given edge $e=((w,v),c_e,u_e)$, the query returns in $\tilde O(\sqrt{n} \log (W/\epsilon))$ time an estimate $\of_e$ of the flow on that edge, with $0 \le \of_e \le u_e$ and %
    $|\of_e - f_e| \le \epsilon$ with high probability.
    
    Alternatively, we can also assume that during the last pass, the algorithm returns for each given edge $e=((w,v),c_e,u_e)$ the flow $\of_e$ on that edge.
\end{restatable}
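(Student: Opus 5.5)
We will follow the robust interior point method (IPM) of \cite{bll+21}, which solves the min-cost flow LP in $\tilde O(\sqrt{n}\log(W/\epsilon))$ iterations. Each iteration applies a step $x \leftarrow x + \mX\,\delta$ to the primal flow $x\in\R^E$, where $\delta$ is built from the current weights and from a vertex potential $h\in\R^V$. The potential $h$ is obtained by solving a Laplacian-type system $\mA^\top \mD \mA h = b'$ with $\mA$ the incidence matrix. The main idea is that we never store $x$. Instead, the algorithm stores the whole sequence of $\tilde O(\sqrt n)$ vectors $h^{(1)},\dots,h^{(T)}\in\R^V$ (rounded to $O(\log(W/\epsilon))$ bits), plus $O(n)$ more words per iteration for the dual and slack bookkeeping. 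This totals $\tilde O(n^{1.5}\log(W/\epsilon))$ space.

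Given an edge $e=(w,v)$ with its $c_e,u_e$, the value $x^{(t)}_e$ is a deterministic function of the initial point, which is computable from $e$ alone, and of the quantities $h^{(t')}_w-h^{(t')}_v$ for $t'<t$. So the decoder recomputes $x_e$ by replaying the $T$ iterations locally, in $\tilde O(\sqrt n\log(W/\epsilon))$ time. This is exactly the form $\cD(M,e)$ that escapes the lower bound of \cite{ChakrabartiJWY25}.

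Concretely, the steps are as follows.

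\begin{enumerate}[label=(\arabic*)]
\item Fix the LP formulation with box constraints $0\le x\le u$ and the interior starting point of \cite{bll+21}, after the usual perturbation so that the initial point is computable per edge.
\item Make the IPM update per-edge local. The per-edge weights, the robust approximate $\bar x_e$, and the step on $x_e$ and $s_e$ must depend only on $(c_e,u_e)$ and the potential differences along $e$. Any global quantities, such as the step size and the $\ell_\infty$ or softmax potential normalization in the robust IPM, are aggregated in one pass and stored as $O(1)$ scalars per iteration.
\item Implement each iteration in $O(\polylog n)$ passes. Simulating all current $x_e$ and $s_e$ on the fly during a pass, we build a spectral sparsifier of $\mA^\top\mD\mA$ in $\tilde O(n)$ space, for instance by streaming effective-resistance sampling or linear sketches over a few passes. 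We then solve the system to high accuracy with preconditioned iterations, each needing one pass to compute a matrix-vector product.
\item Handle the feasibility correction. The final $x$ is only approximately feasible, so we apply the standard rounding: route the residual demand along a spanning tree or low-space structure stored in $O(n)$ space. Edges off that structure keep their decoded value, and edges on it get a stored correction. Clipping to $[0,u_e]$ and the $\epsilon$ additive accuracy follow from running the IPM to duality gap $\epsilon/\poly(nW)$.
\end{enumerate}

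Randomness enters through the sparsifier and the sketches, so all guarantees hold w.h.p. For the time bound, each of the $\tilde O(\sqrt n\log(W/\epsilon))$ passes touches $m$ edges, and each touch replays up to $\tilde O(\sqrt n\log(W/\epsilon))$ previous steps. This gives $\tilde O(mn\log^2(W/\epsilon))$ total time.

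The main obstacle is step (2) combined with robustness. The robust IPM of \cite{bll+21} normally maintains $\bar x$ by lazily updating only coordinates that drift, using data structures over all edges. We must show that a deterministic, per-edge rule suffices, for example ``update $\bar x_e$ whenever $x_e/\bar x_e$ leaves $[1-\epsilon_0,1+\epsilon_0]$''. We must also show that the errors from rounding $h^{(t)}$ and from inexact Laplacian solves stay within the slack the robust analysis tolerates. For the inexact solves, we would try to absorb them as a bounded perturbation $\|\mX^{-1}\Delta x\|_\infty$ small, which the robust framework explicitly permits. We must then verify that the rounding is consistent between the construction phase and the decoding phase, so that the decoder reproduces exactly the same $x_e$. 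Ensuring this bit-exact reproducibility is the step needing the most care.
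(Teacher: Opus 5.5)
Your replay mechanism is the paper's core idea: store $O(n)$-dimensional vectors per iteration, then recompute $x_e$ inductively from $(e,c_e,u_e)$ at query time. Step (2), however, has a genuine gap. You require every per-edge quantity in the \cite{bll+21} step to be a function of $(c_e,u_e)$, the potential differences along $e$, and $O(1)$ global scalars per iteration. This fails for the weights that give the $\tilde O(\sqrt n)$ iteration count. They are regularized Lewis weights $\tau(x)=\sigma(\mT^{1/2-1/p}\Phi''(x)^{-1/2}\mA)+n/m$, i.e.\ leverage scores (generalized effective resistances) of a reweighted incidence matrix, which are global. They enter $\delta_x$ through $\omT^{-1}$, through the sparsifier weights, and through $v_e$, hence $g_e$. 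If you drop them and use a plain log barrier, you get $\tilde O(\sqrt m)=\tilde O(n)$ iterations on dense graphs, so your scheme uses $\tilde O(n^2)$ space and gains nothing. The missing ingredient is the paper's implicit Lewis-weight representation. In each IPM iteration it runs $\tilde O(1)$ Cohen--Peng rounds $w\gets(w^{2/p-1}(\osigma+n/m))^{p/2}$. In each round it stores the $\tilde O(1)\times n$ matrix $\mP=\mR\omT^{1/2-1/p}\Phi''(x)^{-1/2}\mA\mM^{-1}$, where $\mR$ is a JL sketch and $\mM$ a one-pass spectral sparsifier. Then for $e=(a,b)$ one recovers $\osigma_e=\|\mP(\chi_b-\chi_a)\|_2^2\,\phi''_e(x_e)^{-1}\otau_e^{1-2/p}$ from $x_e$ and the previous round's $\otau_e$. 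This costs $\tilde O(n)$ words per iteration, not $O(1)$ scalars, and it is what makes $\tau^t_e$, and thus $x^t_e$, queryable.

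A second, smaller gap concerns $g=-\gamma\nabla\Psi(\ov)^{\flat(\otau)}$. The maximizer over the $\|\cdot\|_{\otau+\infty}$ ball is determined by thresholds that depend on the joint distribution of $(\nabla\Psi(\ov)_e,\otau_e)$ over all $m$ edges, and you give no way to find them in one pass. The paper rounds $\ov$ to multiples of $\epsilon'$ and $\otau$ to powers of $1+\epsilon'$, which robustness absorbs. It then counts the $O(\epsilon'^{-2}\log m)$ buckets in one pass and solves the compressed problem exactly via \cite{bln+20}.

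The obstacles you single out are not the real ones. Since $x^{t-1}_e$ and $s^{t-1}_e$ are recomputed \emph{exactly} by replay, no lazily maintained $\ox$ is needed; robustness is spent only on $\otau$, on $\mH\approx_\gamma$, and on the bucket rounding. Consistency between construction and decoding is automatic, because both evaluate the same deterministic function of $(e,c_e,u_e)$ and the stored transcript. High-accuracy Laplacian solves are also unnecessary: one sparsifier plus the $\delta_c$ term suffices, and $1/\poly(nW/\epsilon)$ accuracy would add a $\log(W/\epsilon)$ factor to the pass count.

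Finally, the paper does not need your step (4). It shows directly, via Claim A.6 of \cite{bll+21} plus cycle canceling with integral costs, that an $\epsilon$-centered point with $\mu\le\epsilon/(Wn^2)$ is $\ell_\infty$-close to an optimal flow avoiding the star edges. Your claim that ``accuracy follows from the duality gap'' is exactly this lemma, and it needs that integrality argument.
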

While the streaming area often focuses on small number of passes (e.g., constant or polylog), it was proven that polynomial number of passes may be necessary for flows in the weighted case.
In the presence of capacities as large as sub-exponential in $n$, \cite{AssadiCK19} showed that a streaming algorithm for computing the size of the max-flow in $p$ passes needs $\Omega(n^2/p^5)$ space. This lower bound was later improved to $\Omega(n^2/p^3)$ space \cite{AssadiGLMM24}. 
For smaller polynomially-sized capacities, \Cref{thm:main_stream} is the first upper bound for computing the exact size of the max-flow in the weighted case (when interested in the size, we can simply round the result to nearest integer). It is also the first result to return the flow $f \in \R^E$ with high accuracy rather than the cut or cost/size of the flow.

Previous work was either highly inaccurate with polynomial error when returning the flow \cite{ChakrabartiJWY25}, or when exact, previous work was only for the unweighted undirected case \cite{RubinsteinSW18} and returned only the cut but not the flow. 
Observe that for unweighted graphs, any max flow uses at most $O(n^{1.5})$ edges
\ifdefined\arxiv
(\Cref{thm:unweightedflow}),
\else 
(see full version for details),
\fi
but for weighted graphs the number of used edges can be as large as $O(n^2)$. 
So minimizing space complexity is generally harder for the weighted case.

\subparagraph{Regarding Accuracy}
If we care only about the cost of the flow (or size of max-flow), then the algorithm's output is exact by choosing $\epsilon \le 1/3$ and then rounding the computed cost to the nearest integer.

If the minimum-cost flow is unique, then 
rounding each $f_e$ to the nearest integer also results in the exact flow on each edge. 
If the flow is not unique, then the algorithm might compute a fractional flow and thus rounding is not guaranteed to provide an exact solution for the edges.
However, given the logarithmic complexity dependence on $\epsilon$, we can reduce the error to an arbitrary small $\epsilon = 1/\poly(n,W)$ on each edge. Thus we get an almost exact fractional min-cost flow $f\in\R^E_{\ge0}$.

Usually, uniqueness can be guaranteed via Isolation Lemma \cite{ChariRS95,DaitchS08}, which adds small random perturbation to the problem instance, but this requires large additional space to store $\Omega(n^2)$ random bits. 
It is open whether $o(n^2)$ random bits suffice to isolate maximum/min-cost flows, and in our streaming algorithm we cannot afford to store these.
This issue does not occur in the setting of communication complexity, as each player can independently store their own perturbed edge weights.

\subparagraph{Communication Complexity}

There is a general reduction from 2-party communication to streaming, resulting in $O(space\times passes)$ amount of communication, by simulating the streaming algorithm and sending the entire memory in each pass.
While our streaming algorithm needs $\tilde O(n^{1.5})$ total space, it generates only $\tilde O(n)$ new information in each of the $\tilde O(\sqrt{n})$ passes. Thus it suffices to send only the $\tilde O(n)$ additional information in each pass, leading to $\tilde O(n^{1.5})$ communication.

\begin{restatable}[Min-cost Flow, Communication]{theorem}{mainCommunication}\label{thm:main_communication}
    There is a randomized communication protocol that, given an input graph $G=(V,E)$ with integer capacities $u\in[1,W]^E$ and costs $c\in[1,W]^E$, vertex demand vector $b\in\R^V$, where the edges are split among two parties,
    computes w.h.p.~an exact min-cost flow in $\tilde{O}(n^{1.5}\log(W))$ communication.
    At the end, each party knows the amount of flow on each of their own edges.
\end{restatable}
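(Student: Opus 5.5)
We derive \Cref{thm:main_communication} by simulating the streaming algorithm of \Cref{thm:main_stream} between the two parties. The key is that the parties do not forward the streaming algorithm's whole memory in each pass. Instead, each party keeps its own copy of the persistent $\tilde O(n^{1.5})$-size state, and only the $\tilde O(n)$ bits that change per pass are exchanged.

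\textbf{Step 1: remove fractionality and fix the accuracy.} Each party perturbs the costs of its own edges with locally stored random bits, in the style of the Isolation Lemma \cite{ChariRS95,DaitchS08}. Concretely, it replaces $c_e$ by $c_e\cdot \poly(n,W) + r_e$ with $r_e$ random in $[1,\poly(n)]$. With high probability the min-cost flow of the perturbed instance is unique, it is integral, and it is also a min-cost flow of the original instance. Storing these bits costs no communication, which is exactly the remark made after \Cref{thm:main_stream}. The new weight bound is $W' = \poly(n,W)$, and we run with $\epsilon = 1/3$. Then $\log(W'/\epsilon) = O(\log n + \log W)$, and rounding each $\of_e$ to the nearest integer recovers the exact flow on edge $e$.

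\textbf{Step 2: simulate a pass.} The streaming algorithm runs the robust IPM of \cite{bll+21} with $\tilde O(\sqrt n)$ iterations, one (or $O(1)$) passes per iteration. In each pass, every edge contributes to aggregate quantities over the $n$ vertices, such as sketched Laplacian or sparsifier entries, sums of the form $\mA^\top \cdot(\text{per-edge vector})$, and entries of the JL-type sketches. All of these are linear, or at least mergeable, over the edges. Each party therefore computes its own contribution to the aggregate over its own edges and sends it to the other party, or to one designated party, which merges the two. Both parties then hold identical copies of the $\tilde O(n)$-bit update. This update consists of new vertex potentials or dual variables, and the coarse per-iteration data the encoding stores so that $f_e$ and the slack can later be recomputed from the edge itself plus this data.

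The central claim to verify is that the encoding of the current primal point is a function of the history of these $\tilde O(n)$-size per-pass updates together with the edge's own data. This is precisely the structure the decoding $\cD(M,e)$ exploits. Under this claim, a party can evaluate the IPM state on its own edges at any time without communication. Total communication is then $\tilde O(\sqrt n)$ passes $\times$ $\tilde O(n\log W)$ bits, which equals $\tilde O(n^{1.5}\log W)$. Randomness used by the sketches, such as JL matrices and sampling seeds, is drawn by one party and sent once, or taken as shared public coins. Since these sketches are over vertex dimensions or use $k$-wise independent seeds, this costs only $\tilde O(n)$ bits.

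\textbf{Step 3: output.} After the last iteration each party queries the data structure of \Cref{thm:main_stream} on each of its own edges, obtains $\of_e$, and rounds it to get $f_e$. No further communication is needed.

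The main obstacle is Step 2. It requires confirming that every quantity the streaming algorithm adds to memory in a pass really is an $\tilde O(n)$-size, edge-decomposable aggregate, and that the $\tilde O(n^{1.5})$ persistent memory is just the concatenation of these per-pass pieces, so each party can reconstruct it locally. Any step that requires sampling edges (for example, spectral sparsification by leverage scores) must be handled by having each party sample its own edges with the shared leverage-score estimates and send only the $\tilde O(n)$ sampled edges.
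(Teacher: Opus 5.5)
Your proposal follows essentially the paper's route. You simulate the streaming IPM so that each iteration exchanges only its $\tilde O(n)$ new transcript data, perturb costs with locally stored randomness, and round each queried $\of_e$. The paper hands the partial pass state from one party to the other rather than merging per-party aggregates, but your merging also works, because the rounding counts defining $g$, the $\mA^\top(\cdot)$ sums, the JL products and the spectral sparsifiers are all additive over edges.

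The one correction is in the isolation step. Flows take integer values up to $W$, not $0/1$ values, so with perturbations drawn from $[1,N]$ the generalized isolation lemma only bounds the failure probability by about $mW/N$. You therefore need $N=\poly(n)\cdot W$ rather than $\poly(n)$, which still gives $\log W'=O(\log n+\log W)$. Alternatively, as the paper does, use a range of size $2mW$ for success probability $1/2$ and boost with $O(\log n)$ repetitions, checking feasibility via $O(n)$ exchanged vertex net flows and keeping the cheapest. Also, the IPM needs $\tilde O(\sqrt n\log W)$ iterations, not $\tilde O(\sqrt n)$, so your total is $\tilde O(n^{1.5}\log W)$ words, or $\tilde O(n^{1.5}\log^2 W)$ bits.
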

Unlike \Cref{thm:main_stream}, there is no $\epsilon$ dependence in \Cref{thm:main_communication}, because we can make the flow unique via Isolation-Lemma.
A subquadratic communication algorithm for weighted flows (but also the simpler weighted $st$-min-cut problem where solutions have size $O(n)$) was stated as open problem in \cite{BlikstadBEMN22}.

The same $\tilde O(n^{1.5})$-communication bound %
is also obtained in concurrent work \cite{GholizadehJ25}. We remark that 2-party communication is an easier model than multi-pass streaming, because each party can use unlimited amount of space. In particular, \cite{GholizadehJ25} does not give a streaming algorithm because each party explicitly stores $f_e$ on their own edges $e$, thus using $\Omega(n^2)$ space.

\subparagraph{Space and Bit-Complexity}
In above theorems, space is measured in words where each word can store a real number.
The algorithms also work over Word-RAM and finite bit numbers, where each number is stored as fixed-point number using $\tilde O(\log(nW/\epsilon))$-bit.
Thus if we want to measure space complexity of \Cref{thm:main_stream,thm:main_communication} in bits, all space, communication and time complexities increase by a logarithmic $\tilde O(\log(nW/\epsilon))$ factor.

\subsection{Techniques}

Our algorithm is based on solving the linear program representation of min-cost flow.
The linear program representation of related problems such as bipartite matching or transshipment to obtain streaming algorithms has been used before, e.g., in \cite{AssadiJJST22,lsz+23,AhnG11}. 
A natural approach for matching is to maintain the smaller $n$-dimensional dual iterate (i.e., fractional vertex cover).
However, this does not extend to min-cost/maximum flows, because the capacity constraints increase the dimension of the dual solution from $n$ to $m+n$.

Our algorithm works by implementing the interior point method/central path method by \cite{bll+21}, while maintaining the primal solution (i.e., flow) in low space, rather than the dual.
One can show that for flows on the central path, an $\tilde O(n)$ space representation is possible\footnote{Technical details: Centered flows are minimizers of some potential function $\Phi(x)$ subject to $\mA^\top x = b$ for edge-vertex incidence matrix $\mA$ and demand $b\in\R^V$. Minimizers satisfy $\nabla \Phi(x) = \mA y$ for some $y\in\R^V$, thus by storing $y$ we can reconstruct $x$.}.
Unfortunately, this observation does not directly imply a streaming algorithm because in each iteration, the central path method constructs intermediate flows that are not sufficiently centered to be stored in $\tilde O(n)$ space. 
This issue is exacerbated when using the fast $\tilde O(\sqrt{n})$ iteration central path of the Lee-Sidford-barrier \cite{ls14} as used in \cite{bll+21}. Their definition of centrality uses Lewis-weights whose efficient computation uses sketching. 
Due to the resulting approximation error, one cannot compute sufficiently centered flows to store them in $\tilde O(n)$ space.
Thus we develop a method that can store non-centered flows in small space.

We do this by storing the steps of each iteration rather than the flow $x$.
In the context of flows, each step of the central path method is equivalent to augmenting the current flow $x$ by an electric circulation.
Rather than storing flow $x$ directly, we store the sequence of $\tilde O(\sqrt{n})$ augmenting electric flows.
Each circulation is an $m$-dimensional vector (i.e., every edge carries some flow), but we show that each circulation can be stored in only $\tilde O (n)$ space. %
Thus we obtain a $\tilde O(n^{1.5})$ space representation of the min-cost flow.

\noindent
The main technical ideas/ingredients for storing electric circulations are as follows.

(i) The $\tilde O(\sqrt n)$ iterations framework \cite{ls14,bll+21} requires computation of Lewis-weights, a generalization of leverage scores and effective resistances. Via sketching, we can store the Lewis-weights in only $\tilde O(n)$ space. Given any of the $O(n^2)$ edges, we can then query/compute the (approximate) Lewis-weight from only the $\tilde O(n)$ space representation. 

(ii) While electric flows can be stored via vertex potentials (i.e., in $\tilde O(n)$ space), electric circulations generally need $\Omega(m)$ space. That is because electric circulations are a combination/difference $g-f$ of an arbitrary flow $g\in\R^E$ and an electric flow $f\in\R^E$, where $f$ and $g$ satisfy the same vertex demands. 
To store the electric circulation $g-f$ in low space, we must store $g$ in low space.
Via the robust interior point framework \cite{bll+21}, one can reduce the dimension of $g$ from $m$ to $\tilde O(1)$, and can thus be stored in small space. 
The main barrier is showing that this projection $E\to \{1,....,\tilde O(1)\}$ can be computed from the stored information.

(iii) Electric flows can only be stored via vertex potentials if we know the edges' resistances. Retrieving the flow on edge $e=(u,v)$ from vertex potentials $y\in\R^V$ (i.e., $f_e = (y_v-y_u)/r_e$) requires knowing the edge's resistance $r_e$.
In \cite{ls14,bll+21}, the resistance $r_e$ is defined w.r.t.~the edge's Lewis-weights, and the current congestion of that edge. This allows for an inductive argument:
Given an edge $e$ and its capacity $c_e$, assuming we know the amount of flow on edge $e$ after $t$ electric circulations, then we can compute the resistance $r_e$ of that edge from the current congestion and the stored Lewis-weight. From the resistance, we can then obtain how much flow the next electric circulation will route through $e$. This then yields the flow on edge $e$ after $t+1$ electric circulations.
After $\tilde O(\sqrt{n})$ iterations of this argument, we know the final amount of flow on that edge, i.e., how much flow is routed on $e$ by the min-cost flow.

\subparagraph{Remark on General Linear Programs}
Given that our work is based on interior point methods for general $n\times m$-dimensional ($m\ge n$) linear programs, it is a natural question whether our algorithm extends to other applications such as $\ell_1$- or $\ell_\infty$-regression, MDPs, etc.
Internally, our algorithm stores $\tilde O(\sqrt{n})$ many $n$-dimensional vectors, and $\tilde O(n)$ constraints of the linear program. So for a general linear program with $k$ non-zeros per columns ($k=2$ in case of flow), it would store $\tilde O(n^{1.5}+kn)$ numbers. However, this assumes Real-RAM model, i.e., when ignoring the bit-complexity.
When considering the actual number of bits (e.g., Word-RAM), the linear systems to be solved in each iteration pose additional technical challenges that do not occur when restricting to max/min-cost flow where the linear systems are simple Laplacians. 
Hence, this work focuses on the flow applications, where the (sparse) linear systems can be solved in low space in Word-RAM via Laplacian solvers.

\subsection{Other Related Work}

Before the semi-streaming lower bounds of $\Omega(n^2/p^{5})$ \cite{AssadiCK19} and $\Omega(n^2/p^3)$ \cite{AssadiGLMM24} for $p$ passes, there have been $n^{1+\Omega(1/p)}$ space lower bounds \cite{ChakrabartiG0V20,GuruswamiO16}.
For small number $o(\sqrt{\log n})$ passes, there is an $n^{2-o(1)}$ space lower bound for $st$-reachability, which extends to directed flows \cite{ChenKPSSY21}.

While our focus is on $st$-flows and thus $st$-min cuts, there has been work on global min cuts in the streaming model.
There are $\tilde O(1)$ pass $\tilde O(n)$ space algorithms for global min cuts \cite{MukhopadhyayN20,AssadiD21,RubinsteinSW18}.
These also transfer to the 2-party communication setting, where $\Theta(n^{1.5})$ upper and lower bound are known for minimum vertex cut \cite{BlikstadJMY25}. 

Our algorithm is based on solving linear programs in the streaming model. Other work using this approach (though not for flows, but instead for other linear programs such as bipartite matching) include \cite{lsz+23,AhnG11,AssadiKZ19,ChanC05}.
From the communication perspective, linear programs have been studied in \cite{VempalaWW20,GhadiriLPSWY24}.
The crucial aspect is that previous work studies linear programs of form $\min c^\top$ subject to $\mA^\top x = b, x\ge 0$ for tall matrices $\mA$, but max-flow requires additional capacity constraints $x \le u$.
Encoding those within the matrix $\mA$ turns the matrix from tall into an almost square matrix, thus losing any space/communication gains from the small width of $\mA$.

\section{Preliminaries}
\label{sec:preli}

We write ``with high probability'' (w.h.p.) if the failure probability is at most $1/\poly(n)$.
We use $\unit_e$ to denote the $e$-th standard unit vector.
For an integer $n$, we use $[n]$ to denote the set $\{1,2,\cdots,n\}$. We use $\tO(\cdot)$ to hide $\poly(\log m)$ factors. 
In addition to $O(\cdot )$ notation, for two functions $f,g$, we use the shorthand $f \lesssim g$ (resp. $\gtrsim$) to indicate that $f \leq C \cdot g$ (resp. $\geq$) for an absolute constant $C$. 
For a vector $v \in \R^m$, we use $\| v \|_{\infty}$ to denote its $\ell_{\infty}$ norm, i.e., $\| v \|_{\infty}: \max_{i \in [m]} |v_i|$.
We use $\| v \|_p$ to denote its $\ell_p$ norm, i.e., $\| v \|_p:= (\sum_{i=1}^m |v_i|^p)^{1/p}$.
For a positive semi-definite matrix $\mM$ we write $\|v\|_\mM := (v^\top \mM v)^{1/2}$.
For vectors $x,s,\tau\in\R^m$ we write $\mX,\mS,\mT$ for the diagonal matrices with $\mX_{i,i}=x_i, \mS_{i,i}= s_i, \mT_{i,i}=\tau_i$ on the diagonals for $i\in[m]$.
Similarly, for function $\phi:\R\to\R$, we write $\Phi(x)$ for the diagonal matrix with $\Phi(x)_{i,i}=\phi(x_i)$ on the diagonals for $i\in[m]$.
Given vectors $a, b \in \R^m$, we use $ab$ to denote entry-wise products, i.e., $(ab)_i = a_ib_i$.
We define $\cosh(x) := \frac{1}{2}(\exp(x) + \exp(-x))$. We define $\sinh(x):= \frac{1}{2} ( \exp(x) - \exp(-x) )$. Note that $\cosh(x)' = \sinh(x)$ and $\sinh(x)' = \cosh(x)$.

The \emph{leverage scores} $\sigma(\mA) \in \R^m_{>0}$ of a matrix $\mA\in\R^{m\times n}$ ($m \ge n$) are defined as $\sigma(\mA)_i := (\mA (\mA^\top \mA)^{-1} \mA^\top)_{i,i}$ for $i \in [m]$.

For $a,b\in\R$ and $\epsilon>0$ we write $a \approx_\epsilon b$ when $\exp(-\epsilon) a \le b \le \exp(\epsilon) a$.
We extend the notation to vectors when the approximation holds entry-wise.
For PSD matrices $\mA,\mB\in\R^{n\times n}$ we write $\mA \approx_\epsilon \mB$ when for all vectors $v\in\R^n$ $v^\top \mA v \approx_\epsilon v^\top \mB v$, i.e., $\mA,\mB$ are spectral approximations of each other.

    We say a vector $v\in\R^E$ is given in implicit representation with space $O(S)$ and query time $O(T)$, if we have a data structure that uses $O(S)$ space and supports queries that receive as input edge, edge-cost, and edge-capacity $(e,c_e, u_e)$ and then return $v_e$ in $O(T)$ time.

The line of work \cite{ag09,kl11,AhnGM12,KyngPPS17,klm+17,kmm+19,knst19,ckst19,lsz+23} developed streaming algorithms for spectral sparsifiers.

\begin{lemma}[{\cite[Theorem 1.1]{KyngPPS17}}]\label{lem:spectralsparsifier}
    Let $\mA \in \R^{E\times V}$ be an incidence matrix, $w\in\R^E$ be some weights where any $w_e$ can be queried in time $T_w$, and let $\epsilon > 0$.
    Then within a single pass over the rows of $\mA$ (i.e., edges), we can compute an $\tilde O(n/\epsilon^2)$-sized spectral sparsifier $\mH \approx_\epsilon \mA^\top \mA$. This takes $\tilde O(mT_w)$ total time.
\end{lemma}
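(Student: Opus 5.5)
This statement is imported from \cite{KyngPPS17}. The plan is to reduce it to their single-pass, dynamic-stream spectral sparsification algorithm. The only thing to add is the observation that weights are handled edge by edge.

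\textbf{Step 1: reweight on the fly.} We sparsify $\mA^\top \mW \mA$ with $\mW=\diag(w)$, reading the weighted rows $\sqrt{w_e}\,\unit_e^\top \mA$. When edge $e=(u,v)$ arrives in the pass, we query $w_e$ in time $T_w$ and feed the weighted edge $(u,v,w_e)$ to the sparsifier. Each edge is touched once, giving $\tilde O(mT_w)$ time plus the sparsifier's $\tilde O(1)$ amortized update cost per edge.

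\textbf{Step 2: invoke the sparsifier.} The algorithm of \cite{KyngPPS17} maintains $O(\log n)$ levels of $\ell_0$-sampling sketches, i.e.\ linear sketches of the incidence vectors. From these it recovers, at the end of the pass, a sample of edges taken with probabilities proportional to upper bounds on effective resistance times weight, and reweights them. This gives $\mH$ with $\tilde O(n/\epsilon^2)$ edges and $\mH\approx_\epsilon \mA^\top\mW\mA$ w.h.p. Weighted edges with weights in a $\poly(n)$ range are handled by splitting into $O(\log)$ weight classes, or directly, since their theorem covers weighted graphs. The sketch space is $\tilde O(n/\epsilon^2)$.

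\textbf{Main obstacle.} The only subtle point is that the weights must be available at stream time; the statement assumes exactly this through $T_w$-time queries. The remaining care is making sure the linear sketches' randomness fits in $\tilde O(n)$ space, which \cite{KyngPPS17} already ensures.
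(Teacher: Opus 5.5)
Your proposal is correct and matches the paper's argument: the paper cites the weighted single-pass sparsifier of \cite{KyngPPS17} and observes that querying $w_e$ in time $T_w$ as edge $e$ is read lets you feed the weighted edge to that algorithm. That is your Step 1 followed by a black-box use of the cited result in Step 2, with the target read as $\mA^\top\mW\mA$ as you do. One correction, which does not affect the black-box reduction, is that \cite{KyngPPS17} is an insertion-only online resampling algorithm rather than a dynamic-stream $\ell_0$-sketching scheme, so your remarks about linear sketches and storing their randomness describe a different family of sparsifiers and are not needed here.
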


This lemma was proven for weighted incidence matrices in \cite{KyngPPS17} where $w_e$ is given together with edge $e$ in the stream.
It directly implies the above, by instead querying $w_e$ when edge $e$ is read from the stream.

\section{Technique Overview}
\label{sec:overview}

Our algorithm and data structure are based on interior point methods for linear programs.
In particular, we show that the algorithm of \cite{bll+21} can be implemented in $\tilde O(n^{1.5})$ space.
The main barrier for this is storing the $m$-dimensional flow vector in subquadratic space.
Given that a lot of work on streaming and communication complexity for flows and matching is combinatorial in nature \cite{%
AssadiCK19,AssadiD21,RubinsteinSW18,BlikstadJMY25,IvanyosKLSW12,AssadiKL17,CrouchS14,McGregor05,FeigenbaumKMSZ05}, we start with a more graph-oriented perspective for readers with that background.
\Cref{sec:graphperspective} sketches how we store our flow $x$ in $\tilde O(n^{1.5})$ space, without going too far into the details about the interior point method.

The next subsections \ref{sec:overview:reviewipm} and \ref{sec:overview:implement} then give more details of our algorithms using the optimization perspective of interior point methods and central paths.
In order to outline time and space complexity guarantees of our %
streaming result \Cref{thm:main_stream}, we need more details about these optimization methods. 
After outlining these methods in \Cref{sec:overview:reviewipm}, we explain in \Cref{sec:overview:implement} how to implement this interior point methods in the streaming model, resulting in \Cref{thm:main_stream}.

\subsection{High-Level Idea: Storing Flows in \texorpdfstring{$\tilde O(n^{1.5})$}{} Space}
\label{sec:graphperspective}
\label{sec:overview:graph}
Here we outline how to store a flow $x\in\R^E$ in low-space, such that when given any edge $e\in E$, its cost $c_e$, and capacity $u_e$, then we can query/compute $x_e$.

Consider the linear programming definition of minimum cost flow. For an $E\times V$ incidence matrix $\mA\in\{-1,0,1\}^{E\times V}$, edge-cost vector $c\in\R^E$, demand vector $b\in\R^V$, and edge capacities $u\in\R^E$, the following linear program models minimum-cost flows.
\[
    \min_{x\in\R^E} c^\top x \text{ subject to }~%
    \mA^\top x = b~~~%
    0 \le x \le u
\]
A common idea for computing minimum cost flows is to start with some initial flow, then repeatedly augment the flow via negative cost cycles or circulations. When solving minimum cost flows via interior point methods such as \cite{m13,ls14,m16,amv20}, these circulations are so called ``electric circulations.'' 
That is, for some feasible flow $x \in \R^E$, we augment it by a circulation $\delta_x \in \R^E$ where
\begin{align*}
    x \gets&~ x + \delta_x\\
    \delta_x =&~ \left(\mI - \mR_x^{-1} \mA (\mA^\top \mR_x^{-1} \mA)^{\dagger} \mA^\top\right) g_x.
\end{align*}
Here $g_x\in\R^E$ is some flow and $\mR_x$ is an $E\times E$ diagonal matrix with $(\mR_x)_{e,e}$ being resistance of edge $e$. (Both $g_x$ and $\mR_x$ will be defined later to depend on $x$).
Flow $\delta_x$ is a circulation, because it satisfies 0-demand $\mA^\top \delta_x = 0$.
It is an electric circulation because
\begin{align}
f = -\mR_x^{-1} \mA (\mA^\top \mR_x^{-1} \mA)^{\dagger} \mA^\top g_x \label{eq:overview:f}
\end{align}
is an electric flow for demand $d=-\mA^\top g_x \in \R^V$ and resistances $\mR_x$ (i.e., among all demand $d$ flows, $f$ minimizes the energy $\sum_{e\in E} (\mR_x)_{e,e} f_e^2$).

The idea of our low space representation is that we only store some initial flow $x^0 \in\R_{\ge 0}^E$, and then store all the augmenting circulations $\delta_x^1,\delta_x^2,...$. 
We will argue that each $\delta_x^t \in \R^E$ can be stored in only $\tilde O(n)$ space. 
It was proven in \cite{ls14} that it takes $\tilde O (\sqrt{n} \log(W/\epsilon))$ iterations/augmentations to find a min-cost flow whose cost is at most an additive $\epsilon$ off from optimal. Thus $\tilde O(n^{1.5} \log (W/\epsilon))$ total space suffices\footnote{For the rest of the overview, we will hide the $\log(W/\epsilon)$ term in $\tilde O$-notation.}. We can then query the amount of flow on any edge $e\in E$ by computing $x^0_e + \sum_{t=1}^{\tilde O (\sqrt{n})} (\delta_x^t)_e$.

Crucial for our low space representation is that the $(t+1)$-th augmenting circulation $\delta_x^{t+1}$ depends on the $t$-th flow $x^t = x^0+\sum_{k=1}^t \delta_x^k$.
The fact that we can store circulation $\delta_x^{t+1}$ in only $\tilde O(n)$ space relies heavily on the following property: when querying/computing $(\delta_x^{t+1})_e$ for some edge $e$, we already know the flow $x^t_e$ on that edge.
This assumption is given by induction: if we know $x^{t-1}_e$, then we can compute $(\delta_x^t)_e$, thus we know $x^t_e = x^{t-1}_e + (\delta_x^t)_e$, and so forth.
This way we iteratively reconstruct $x^t_e$ and $(\delta_x^t)_e$ from $t=1,2,3,...$ all the way to the final/optimal flow on that edge $e$.

To outline these low-space representations, let us start with the initial flow $x^0$.

\ifdefined\arxiv
\subparagraph{Storing the initial flow $x^0$ (\Cref{sec:initial_point:x0}):}
\else 
\subparagraph{Storing the initial flow $x^0$:}
\fi
We would like to start with an initial flow $x_0$ that is feasible and can be stored in low space. Fortunately, standard constructions for the initial flow in non-streaming settings \cite{bll+21,ls14,GaoLP21,BrandGJLLPS22} can also be stored in low space and thus directly apply. Consider the following construction.

We pick $x^0_e = u_e/2$, i.e., each edge carries half its capacity as flow.
This flow $x^0$ does not yet satisfy the demand vector $b\in\R^V$, but we can fix this as follows:
add an extra star to the graph (i.e., one vertex that is connected to every other vertex) and route the missing/additional flow along the star edges.
The star edges are assigned very large $\poly(n\|c\|_\infty)$ cost so the optimal min-cost flow will not use them, i.e., addition of these edges to the graph does not change optimal solution.

We can store this additional star and the amount of flow $x^0_e$ on the star-edges in $\tilde O(n)$ space since there are only $n$ additional edges.
For all edges $e$ of the original graph, we can return $x^0_e = u_e/2$ since the capacity $u_e$ of the edge is given during the query.

\ifdefined\arxiv
\subparagraph{Storing the Electric Flow $f$ (\Cref{sec:ipm:primal,sec:ipm:dual}):}
\else 
\subparagraph{Storing the Electric Flow $f$:}
\fi
To argue why we can store each electric circulation $\delta_x$ in $\tilde O(n)$ space, we argue the storage of $g\in\R^E$ and the electric flow separately.

An electric flow $f$ (see \eqref{eq:overview:f}) is given by vertex potentials $\delta_y := (\mA^\top \mR_x^{-1} \mA)^{\dagger} \mA^\top g \in \R^V$ which can easily be stored in $\tilde O(n)$ space.
Then for any edge $e=(w,z)$, we can compute the electric flow $f_e$ on that edge via 
\[f_e = (\mR^{-1}_x\mA \delta_y)_e = ((\delta_y)_z - (\delta_y)_w)/(\mR_x)_{e,e}.\]
However, this also needs access to $(\mR_x)_{e,e}$.
To guarantee $\tilde O(\sqrt{n})$ iterations suffice, these resistances must be
\[(\mR_x)_{e,e}=\tau_e\cdot\left(\frac{1}{x_e^2}-\frac{1}{(u_e-x_e)^2}\right).\]
Here term $\tau\in\R^E$ are so called Lewis-weights, which can be interpreted as a measure of importance for each edge $e$, and they are a generalization of effective resistance. Without this $\tau$, the number of iterations would increase from $\tilde O(\sqrt{n})$ to $\tilde O (\sqrt{m}) = \tilde O(n)$, and we could no longer guarantee subquadratic space.

The term $1/x_e^2 -1/(u_e-x_e)^2$ is motivated by the resistance $(\mR_x)_{e,e}$ going towards $\infty$ when $x_e$ or $(u_e-x_e)$ are close to zero. 
When the resistance is high, then the electric flow will not route a lot of flow through that edge, which is needed to guarantee the capacity constraints $0\le x+\delta_x \le u$.

Observe that, given edge $e$ and its capacity $u_e$, we can compute $x_e$, so we can also compute $1/x_e^2 -1/(u_e-x_e)^2$.
If we can also compute $\tau_e$ and $g_e$, then we get $(\delta_x)_e$, and our iterative argument for computing $x^t_e=x^{t-1}_e+(\delta^t_x)_e$ for all $t=1,...,\tilde O(\sqrt{n})$ goes through.

The difficulty is that for any edge $e$, both $g_e$ and $\tau_e$ depend not just on $x_e$, but are a global property depending on the entire flow $x\in\R^E$. So we need to store additional information.

\ifdefined\arxiv
\subparagraph{Storing Weights $\tau\in\R^E$ (\Cref{sec:ipm:tau}):}
\else 
\subparagraph{Storing Weights $\tau\in\R^E$:}
\fi

For some $p = 1-1/\Theta(\log m)$, the (regularized) $\ell_p$-Lewis-weights from \cite{ls14} are defined as the scaling $\tau\in\R^E_{\ge0}$ that satisfy the recurrence relation
\begin{align}
    \tau = \sigma(\mT^{1/2-1/p} \mV^{-1/2} \mA) + \frac{n}{m} \notag
\end{align}
for $\mV_{e,e}=1/x_e^2-1/(u_e-x_e)^2$, $\mT_{e,e}=\tau_e$. \cite{bll+21} proved that the algorithm still works when the equality is only satisfied as some $(1\pm1/O(\log n))$-approximation.

Cohen and Peng \cite{CohenP15} proved that, if we start with some bad approximation $w$ of the Lewis-weights, then the following routine improves the approximation quality by a $1-p/2\approx 1/2$ factor
\begin{equation}
    w\gets(w^{2/p-1}(\sigma(\mW^{1/2-1/p}\mV^{-1/2}\mA)+n/m))^{p/2}. \label{eq:overview:taurecursion}
\end{equation}
Since $n/m\le \tau \le 1+n/m$, we can start with $w=\text{all-1-vector}$, and use some $\tilde O(1)$ iterations of \eqref{eq:overview:taurecursion} to get an accurate enough estimate of the Lewis-weight.
If we let $w^{(0)},w^{(1)},...,w^{(k)}=:\otau$ be the sequence of values computed this way, then the final $\otau$ will be a good approximation of the Lewis-weights.
We store these values in low space via the recurrence
\[
w^{(j)}_e = ((w^{(j-1)}_e)^{2/p-1}(\sigma((\mW^{(j-1)})^{1/2-1/p}\mV^{-1/2}\mA)_e+n/m))^{p/2}
\]
which allows us to query $w^{(j)}_e$ recursively for $j=1,...,k$, and thus store $\otau=:w^{(k)}\in\R^E$ in low space, if we can store $\sigma((\mW^{(j-1)})^{1/2-1/p}\mV^{-1/2}\mA) \in \R^E$ in low-space for $j=1,...,k=\tilde O(1)$.

This entire approach still yields an accurate estimate if we only approximately compute $\sigma(\cdot)$. This is done via the $\ell_2$-characterization of $\sigma(\cdot)$: for any matrix $\mM$ we have
\[
\sigma(\mM)_e := (\mM(\mM^\top \mM)^\dagger \mM^\top)_{e,e} = \|\mM(\mM^\top\mM)^{\dagger} \mM^\top\unit_e\|_2^2
\]
where $\unit_e$ is a standard unit-vector.
The $\ell_2$-norm can be $(1\pm\epsilon)$-approximated via an $O(\epsilon^{-2}\log n)\times|E|$ Johnson-Lindenstrauss sketching matrix $\mS$, i.e., we store for $j=1,2,...,k$
\[
\mP^{(j)}:=\mS^{(j)} (\mW^{(j-1)})^{1/2-1/p} \mV^{-1/2}\mA (\mA^\top (\mW^{(j-1)})^{1-2/p} \mV \mA)^{-1} \mA)^\dagger 
\]
where for $\epsilon = 1/\polylog(n)$ matrix $\mP^j$ is $\tilde O(1)\times n$ dimensional so takes only $\tilde O(n)$ space.
We can query for edge $e=(a,b)$ the value
\begin{align*}
\sigma((\mW^{(j-1)})^{1/2-1/p}\mV^{-1/2} \mA)_e 
\approx&~
\|\mP^{(j)} \mA^\top \mV^{-1/2} (\mW^{(j-1)})^{1/2-1/p}\unit_e\|_2 \\
=&~
\|\mP^{(j)} (\unit_b-\unit_a)\|_2^2 \cdot v_e^{-1} (w^{(j-1)}_e)^{1-2/p}.
\end{align*}
\begin{align}
w^{(j)}_e = ((w^{(j-1)}_e)^{2/p-1}(\|\mP^{(j-1)} (\unit_b-\unit_a)\|_2^2 \cdot v_e^{-1} (w^{(j-1)}_e)^{1-2/p}+n/m))^{p/2}. \label{eq:overview:tau:vertex}
\end{align}
Note that in addition to the $\tilde O(n)$ space for $\mP^{(1)},...,\mP^{(k)}$, here we need $v_e$ to compute $w^{(k)}_e$. We have access to that value since $v_e = 1/x_e^2-1/(u_e-x_e)^2$ and we already established that we can query $x_e$ when edge $e$ and its capacity $u_e$ are given to us.
While computation of $x_e$ requires $\tau_e$ and computing $\tau_e$ requires $x_e$, there is actually no circular dependency here.
That's because, if $x^t$,$\tau^t$ are the values computed during iteration number $t$ (i.e., $x^t = x^{t-1} + \delta_x^t$ where $\delta_x^t$ uses $\tau^t$),
then querying $x^t_e$ requires $\tau^t_e$, but querying $\tau^t_e$ needs $x^{t-1}_e$.
So after $t \le \tilde O(\sqrt{n})$ this loop terminates.
This does mean though, that we need to keep track/store all previous values of $\tau$.

In summary, we need $\tilde O(n)$ space per iteration to store $\tau$ while supporting queries to $\tau_e$, for a total of $\tilde O(n^{1.5})$ space over $\tilde O(\sqrt{n})$ iterations.

\ifdefined\arxiv
\subparagraph{Storing the flow $g$ (\Cref{sec:gradient}):}
\else 
\subparagraph{Storing the flow $g$:}
\fi
We here give only a very high-level description because details on how $g$ is defined require further details about the interior point method.
The exact choice of $g$ is crucial for the $\tilde O(\sqrt n)$ iteration count, and storing this flow in low space relies on recent developments in IPM-theory: for the classic Lee-Sidford IPM \cite{ls14}, it is unclear how to store $g$ in such low space, and our low space representation relies on the recently developed variants \cite{bll+21} that are robust against additional approximation errors.

In somewhat simplified terms, $g$ is defined w.r.t.~$x$, $\tau$ and reduced costs $s:=c-\mA \sum_t \delta_y^t$. (Note that for any edge $e=(w,z)$, we can also compute $s_e = c_e - \sum_t (\delta_y^t)_z - (\delta_y^t)_w$ since we already store the vertex potentials $\delta_y^t \in \R^V$ and cost $c_e$ is given at query time.)
Flow $g$ also has the property that if for two edges $e\neq e'$ we have $(\tau_e,x_e,s_e)=(\tau_{e'}, x_{e'},s_{e'})$, then $g_e=g_{e'}$ are the same.
Further, because of recent insights in robust interior point methods, we know the algorithm still works when $g$ is defined w.r.t.~approximations $\ox,\os,\otau$ that differ from the exact values by some $(1\pm\epsilon)$ approximation for $\epsilon=1/O(\log n)$.
Thus we can round each $\tau_e, x_e, s_e$ to multiples of $1+\epsilon$,
then $g\in\R^E$ contains only $\tilde O(1)$ distinct values.
We store these distinct values $\tilde g \in \R^{\tilde O(1)}$ in $\tilde O(1)$ space.

Then during a query, when receiving edge $e$, we (i) compute $x_e, s_e, \tau_e$, (ii) round them to the nearest multiple of $(1+\epsilon)$, (iii) access the corresponding $\tilde g_i$.

This is somewhat simplified, since the construction of $g$ from $x,s,\tau$ is nontrivial. However, those details require further discussion of how the IPM works, which are given in next subsection.
With these details, we can then also explain how to implement the construction of these representations in the semi-streaming model.
So far, we only discussed how to represent/store these values, but not how to compute them efficiently.
This is outlined in \Cref{sec:overview:streaming}, after giving details about the IPM in \Cref{sec:overview:reviewipm}.

\subparagraph{Remarks on 2-Party Communication}

The blackbox reduction from 2-party communication to streaming simulates the streaming algorithm. To simulate one pass over the input, party A passes over their own edge set, then sends their entire memory to party B who continues the current pass over their half of the edges.
This reduction would lead to only a trivial $\tilde O(n^2)$ communication bound as we have $\tilde O(\sqrt{n})$ iterations with $\tilde O(n^{1.5})$ space. 
However, observe that the $\tilde O(n^{1.5})$ space usage of our algorithm comes from storing each of $\delta_x^1,...,\delta_x^{\tilde O(\sqrt n)}$ in $\tilde O(n)$ space.
There is no need to send the entire memory to simulate the streaming algorithm in the 2-party model, because both parties already have the representation of $\delta_x^1,...,\delta_x^{t-1}$ when computing the representation of the new $\delta_x^t$ together when simulating the streaming algorithm. 
We will see in \Cref{sec:overview:streaming} that the streaming algorithm needs only $\tilde O(n)$ additional space and $\tilde O(1)$ passes to compute $\delta_x^t$. This implies $\tilde O(n)$ communication for each $\delta_x^t$, i.e., $\tilde O(n^{1.5})$ communication overall.

\subsection{Review of LS Central Path/Interior Point Method}
\label{sec:overview:reviewipm}

As context and motivation for our method, we start by discussing the method of \cite{ls14}. Readers familiar with the $\tilde O (\sqrt n )$ iteration interior point methods from \cite{ls14,bll+21} can skip to \Cref{sec:overview:streaming}.

For matrices $\mA \in \R^{m\times n}$, vectors $b\in\R^n$, $c\in\R^m$, and upper bounds $u\in\R^m$, this IPM solves linear programs of the form
\begin{equation}
\min_{\substack{x\in\R^m: \mA^\top x=b \\ 0\le x_e\le u_e \forall e\in[m]}} c^\top x. \label{eq:primal}
\end{equation}
The central path method of \cite{ls14}
repeatedly solves the following minimization problem. In each iteration, parameter $\mu\in\R_{>0}$ is decreased, and then the new minimizer $x_\mu$ is found by performing one Newton-step from the old minimizer $x_{\mu'}$.
\begin{equation}
    x_\mu:=\argmin _{\mA^\top x=b} f_\mu(x),\quad f_\mu(x) := c^\top x+\mu\sum_{i=1}^m \tau(x)_e\phi_e(x_e), \label{eq:central_path}
\end{equation}
where $\tau \in \R_{>0}^m$ and $\phi_e:\R_{>0}\to\R$ is defined as $\phi_e(z)=-\log(u_e-z)-\log(z)$. 
(For simplicity, we also write $\phi(x)\in\R^m$ for the vector with entries $\phi_e(x_e)$ for $e=1,..,m$.)
Note that as $x_e$ approaches 0 or capacity $u_e$, the $\phi_e(x_e)$ term goes to $\infty$, which keeps the minimizer $x_\mu$ within the feasible region. 
As $\mu \to 0$, the term $c^\top x$ starts to dominate; thus $x_\mu$ converges towards the optimal solution of the linear program for $\mu\to\infty$.
Here $\tau\in\R^m$ is a weight measuring the importance of each row of $\mA$.
\cite{ls14} chooses $\tau$ to be the regularized $l_p$ Lewis-weight for $p=1-\frac{1}{4\log(4m/n)}$. This is defined to be the solution to 
\begin{align}
    \tau=\sigma(\mT^{\frac{1}{2}-\frac{1}{p}}(\Phi''(x))^{-\frac{1}{2}}\mA) + \frac{n}{m}, \label{eq:overview:ipm:tau}
\end{align}
where $\Phi''(x)$ is a diagonal matrix with the 2nd derivatives $\phi''_e(x_e) = 1/x_e^2 - 1/(u_e-x_e)^2$ on the diagonal for $e=1,...,m$.

The curve $x_\mu$ for $\mu\in(\infty,0)$ is referred to as the central path.
It turns out that it is easy to find a point $x$ close to $x_\mu$ on this path for large $\mu$ 
\ifdefined\arxiv
(see Section~\ref{sec:initial_point:flow} for details). %
\else 
(see full version for details). %
\fi
Starting from this point, we iteratively take steps to decrease $\mu$ and move $x$ closer to the new minimizer $x_\mu$.

To do this, we must define a measure for how far $x$ is from $x_\mu$.
To satisfy $x=\argmin_{\mA^\top x =b} f_\mu(x)$ (\cref{eq:central_path}), optimality conditions tell us that we need $\mA^\top x=b$ and the gradient $\nabla f_\mu(x) = c + \mu \mT\phi'(x)$ must be orthogonal to the feasible hyperplane $\{x\mid \mA^\top x = b\}$; i.e., $c+\mA y+\mu \mT\phi'(x)=0$ for some $y\in\R^n$. 
We can rewrite this as $s=\mu\mT\phi'(x)=0$, where $s:=c+\mA y$. Now, to define what it means to follow the central path, we say that a triple $(x, s, \mu)$ is \textit{central} if
\begin{equation}
    \mA^\top x=b,\quad s=\mA y+c, \quad s+\mu\mT\phi'(x)=0. \label{eq:central}
\end{equation}
\noindent
As we cannot expect to be exactly centered, we measure centrality via the following potential function:
\begin{equation}
    \Psi\left(\frac{s+\mu\tau(x)\phi'(x)}{\mu\tau(x)\sqrt{\phi''(x)}}\right):=\sum_{e=1}^m \cosh\left(\lambda\left(\frac{s_e+\mu\tau(x)_e\phi'_e(x_e)}{\mu\tau(x)_e\sqrt{\phi''_e(x_e)}}\right)\right), \label{eq:centrality}
\end{equation}
where $\lambda = O(\log n)$. Note that the numerator is 0 (which is equivalent to minimizing \eqref{eq:centrality}) if and only if \eqref{eq:central} is satisfied.

Our goal is now to take steps where we alternate between decreasing $\mu$ (which increases \eqref{eq:centrality}) and updating $(x, y, s)$ such that \eqref{eq:centrality} decreases again. This way we approximately trace the curve $x_\mu$.

In the central path method by \cite{bll+21}, the improvements in each iterations are of the following form.
Given $(x^{t-1},s^{t-1},\mu^{t-1})$ at iteration number $t-1$, the next values for iteration $t$ are given by
\[
x^{t} \leftarrow x^{t-1} + \delta_x^{t},~~
s^{t} \leftarrow s^{t-1} + \delta_s^{t},~~
\mu^{t} \leftarrow \left(1-\frac{1}{\tilde O(\sqrt{n})}\right)\mu^{t-1}.
\]
To compute the movements $\delta_x^t,\delta_s^t$ we first calculate the weights $\tau^t\in\R^m$ of our current point $x^{t-1}$, and set the vector $g^t \in \R^m$ to be the gradient of \eqref{eq:centrality}, which intuitively moves $x$ as close as possible to $x_\mu$ on the central path. 
\cite{bll+21} have shown that instead of exact calculation of $\tau^t$, it suffices to compute weight $\otau^t$ that satisfy recurrence \eqref{eq:overview:ipm:tau} only with some $(1\pm 1/O(\log n))$-approximation.
Then, the movements are given by
\begin{align}
\mH^t \approx&_{1/O(\log m)} (\mA^\top(\oT^t)^{-1}\Phi''(x^{t-1})^{-1}\mA) \label{eq:steps}\\
\delta_x^{t} =&~ \Phi''(x^{t-1})^{-1/2}g^{t}-(\oT^{t})^{-1}\Phi''(x^{t-1})^{-1}\mA(\delta_y^{t}+\delta_c^{t})\notag\\
\delta_y^{t} =&~ (\mH^t)^{-1} \mA^\top \Phi''(x^{t-1})^{-1/2} g^{t} \notag\\
\delta_s^{t} =&~ \mu^{t} \mA \delta_y^{t} \notag\\
\delta_c^{t} =&~ (\mH^t)^{-1}(\mA^\top x^{t-1} - b) \notag\\
g^t = &~ \argmax_{w:~\|w\|_{\otau+\infty}} \langle w, -\nabla \Psi\left(\frac{s+\mu\otau(x)\phi'(x)}{\mu\otau(x)\sqrt{\phi''(x)}}\right) \rangle \notag
\end{align}
where $\mH^t$ is a spectral approximation, allowing us to solve the linear system approximately (i.e., we can use fast Laplacian system solvers).
Because of this approximation, we can no longer guarantee $\mA^\top x = b$, but inclusion of vector $\delta_c^t$ guarantees $\mA^\top x$ stays close to $b$.
Here $g^t$ is essentially the gradient of our potential function $\Psi$ (since we want to reduce that potential in each iteration), but we are taking a step that is bounded in a certain $(\tau+\infty)$-norm\footnote{$
    \|v\|_{\tau+\infty} := \|v\|_\infty + C \log(4m/n)\cdot\|v\|_\tau
    $
    where $\|v\|_\tau = (\sum_i \tau_i \cdot v_i^2)^{1/2}$ and $C$ is some large constant.} 
\ifdefined\arxiv
(\Cref{def:tauinftynorm}).
\else 
(see full version for details).
\fi
Calculation of this $g^t$ in low space will later be one of the main issues discussed in \Cref{sec:overview:streaming}.

In \cite{ls14} it was shown that $\tO(\sqrt{n})$ iterations of the central path method \eqref{eq:steps} suffice, i.e.~one must compute \eqref{eq:steps} for $t=1,2,...,\tO(\sqrt{n})$.
The naive implementation of \eqref{eq:steps} takes $\Omega(m)$ space which is $\Omega(n^2)$ in worst-case, since the vectors $x^t,s^t,g^t\in\R^m$ are $m$-dimensional. 
However, $O(n^2)$ space for min-cost flow is trivial, because then we could solve the problem in a single pass by storing the entire input graph. 
Therefore, our goal is to implement \eqref{eq:steps} using less than $O(n^2)$ space.

In the following, we outline how we can 
implement all calculations in \eqref{eq:steps} in the semi-streaming model.
In particular, we show that for any one $t$, \eqref{eq:steps} can be calculated in just $\tO(1)$ passes over the input.
Since we have $t=1,2,...,\tO(\sqrt{n})$, our algorithm takes $\tO(\sqrt{n})$ passes in total.

\subsection{Streaming Implementation}
\label{sec:overview:implement}
\label{sec:overview:streaming}

Our goal is to calculate \eqref{eq:steps} a total of $\tO(\sqrt{n})$ times in a semi-streaming setting with $\tilde O(1)$ passes over the input per iteration, and using only $\tO(n^{1.5})$ total space.

We already outlined in \Cref{sec:overview:graph} that we can store the values of the $t$-th iteration in total space $\tilde O(nt)$ by storing each iteration in $\tilde O(n)$ space.
The precise values being stored are:
\begin{itemize}
    \item $y^\ell, \delta_y^\ell,\delta_c^\ell \in \R^n$ for $\ell=1,...,t.$ This uses $O(nt)$ space.\\
    This implicitly represents $s^\ell\in\R^m$ because for any edge $e=(w,z)$ and its cost $c_e$, we can compute $s^\ell_e = c_e - (\mA y^\ell)_e = c_e - (y^\ell_{z} - y^\ell_w)$.
    \item We store $\tau^\ell$ for $\ell=1,...,t$ implicitly in $\tilde O(nt)$ space via a certain recursive JL-sketch.
    \item We store $g^t$ for $\ell=1,...,t$ implicitly in $\tilde O(t)$ space. (Details to be discussed further below).\\
    This implicitly represents $x^\ell\in\R^m$ because for any edge $e=(w,z)$ and its cost/capacity $c_e,u_e$, we can recursively compute 
    \[x^\ell_e = x^{\ell-1}_e + \phi''_e(x_e^{\ell-1})^{-1/2}g^\ell_e - (\otau_e^\ell)^{-1}\phi''_e(x^{\ell-1}_e)(\mA(\delta_y^\ell+\delta_c^\ell))_e.\]
\end{itemize}

The main remaining question is how to compute the above $\tilde O(n)$ representation for the next iteration $t+1$ in the streaming model with only $\tilde O(1)$ passes over the input.

\subparagraph{Lewis weights $\otau^t\in\R^m$}

In \Cref{sec:overview:graph} we established that in each iteration of the interior point method, the Lewis-weights are computed via $\tilde O(1)$ repetitions of
\begin{equation}
    w^{(j)}_e\gets((w^{(j-1)}_e)^{2/p-1}(\sigma((\mW^{(j-1)})^{1/2-1/p}\mM)_e+\frac{n}{m}))^{p/2}.
\end{equation}
where $\mM = \Phi''(x^t)\mA$. We start with $w^0=1$-vector, and assign $\otau^t = w^{(k)}$ for some $k=\tilde O(1)$. For each $t$, all $w^{(0)},...,w^{(k)}$ are stored, so that we can compute $w^{(j)}_e$ recursively from $w^{(j-1)}_e$ whenever an edge $e$ is received.

Here the main issue is calculating the $\tilde O(n)$ space representation of $\sigma((\mW^{(j-1)})^{1/2-1/p}\mM) \in \R^m$. 
We discussed in \Cref{sec:overview:graph} that this value can be represented via a Johnson-Lindenstrauss matrix $\mS^{(j)}$ and by computing
\begin{align}
\mS^{(j)} (\mW^{(j-1)})^{1/2-1/p} \mM (\mM^\top (\mW^{(j-1)})^{1-2/p} \mM)^{-1}. \label{eq:sketchfromleft}
\end{align}
Since an approximation suffices, we can replace $\mM^\top (\mW^{(j-1)})^{1-2/p} \mM$ by a spectral approximation, computed in one pass via \Cref{lem:spectralsparsifier} (for any given edge $e$, we have access to the corresponding row of $\mM = \Phi''(x^t)\mA$ because we have access to $x^t_e$.)\\
Once we have the spectral sparsifier, we can compute \eqref{eq:sketchfromleft} from left to right via one additional pass over the input.
This is explained in more detail in
\ifdefined\arxiv
\Cref{sec:ipm:tau}.
\else 
the full version.
\fi

\subparagraph{Constructing $y^t,\delta_y^t,\delta_c^t$}
These vectors are $n$-dimensional and can thus be computed/stored explicitly.
We have by \eqref{eq:steps} that
\begin{align}
\mH^t \approx&_{1/O(\log m)} (\mA^\top(\omT^t)^{-1}\Phi''(x^{t-1})^{-1}\mA) \\
\delta_y^{t} =&~ (\mH^t)^{-1} \mA^\top \Phi''(x^{t-1})^{-1/2} g^{t} \notag\\
\delta_c^{t} =&~ (\mH^t)^{-1}(\mA^\top x^{t-1} - b) \notag\\
y^t =&~ y^{t-1} + \delta_y^t \notag 
\end{align}
Note that when receiving edge $e$ on the stream, we can query $x^{t-1}_e$ (and thus $\Phi''(x^{t-1})_e$) and $\otau^t_e$ via previous paragraph.
Thus $\mH^t$ can be constructed in one pass over the input, by constructing spectral sparsifier via \Cref{lem:spectralsparsifier}. 
This takes $\tilde O(n)$ space.
We can then compute $\delta_y^t,\delta_c^t$ from right to left in one additional pass over the input.

\subparagraph{Constructing gradient $g^t\in\R^m$}

We are left with discussing how to construct vector $g^t$ from \eqref{eq:steps}, which can be seen as some gradient.
We uses a potential function of the form
$
\Phi(v^t) := \sum_{e \in [m]} \cosh(\lambda(v_e^t))
$
for some fixed $\lambda = O(\log m)$
and vector $v^t\in\R^m$ where
\[v^t_e:=\frac{s_e^t+\mu^t\otau_e^t\phi'_e(x_e^t)}{\mu^t\otau_e^t\sqrt{\phi''_e(x_e^t)}}\]
for each $e\in[m]$.
Here the gradient is
$
\nabla\Phi(v) := -\sum_{e \in [m]} \lambda\sinh(\lambda(v_e^t))
$. 
Observe that to obtain $(\nabla\Phi(v^t))_e$ for any $e\in[m]$, we only need to compute $v_e^t$.
Further, \cite{bll+21} proves that one does not need to compute the exact gradient $\nabla\Phi(v^t)$. Instead, it suffices to compute $\nabla\Phi(\tilde v^t)$ where $\tilde v^t_e$ is the appropriate vector in terms of $\tilde x^t,\tilde s^t,\tilde \tau^t$, which are each element-wise $(1\pm\epsilon)$ multiplicative approximations of $x^t,s^t,\tau(x^t)$ respectively for some $\epsilon = O(1/\log m)$.

To obtain $g^t$ from $\nabla \Phi(\tilde v^t)$, \cite{bll+21} defines %
\begin{align}
    g^t = \nabla\Psi(\tilde{v})^{\flat(\tilde{\tau})} := \argmax_{\|h\|_{\tilde\tau+\infty}=1}h^\top\nabla\sum_{e=1}^m\cosh(\lambda\tilde{v}_e). \label{eq:gradientoverview}
\end{align}
The maximization problem in \eqref{eq:gradientoverview} can be efficiently solved when rounding each entry of $\tilde v^t$ and $\tilde \tau^t$ to a multiple of some small $\delta=O(\epsilon)$. 
The additional approximation error from this rounding process can be charged to the approximation $\tilde \tau^t \approx \tau^t$, $\tilde v^t \approx v^t$.
Since the gradient is bounded, this discretization essentially reduces the optimization problem onto some smaller polylog-dimensional space, because we obtain at most some polylog distinct multiples of some $\epsilon=1/O(\log n)$.

Observe that within a single pass over the input, we can count how many entries of $\tilde v^t$ are rounded to the same value.
That is, within a single pass we compute two $\tilde O(1)$-dimensional vectors: $\tilde w^t$ containing the rounded values, and $\tilde k^t$ where $\tilde k^t_i$ counts how many entries have been rounded to $\tilde w^t_i$.
We can now solve \eqref{eq:gradientoverview} locally in $\tilde O(1)$ space, because (i) we know the output vector also has only $\tilde O(1)$ distinct values, (ii) we can compute the norm of the $m$-dimensional vector $h$ in \eqref{eq:gradientoverview} by using the counting vector (i.e., we know how often each entry will be duplicated).

\subparagraph{Summary, space efficient implementation.}

In each iteration of the central path method, we must compute \eqref{eq:steps}.
We have implicit access to $x^t,s^t$ which allows us to compute implicit access to $g^t$ and $\otau^t$.
We can then construct the spectral sparsifier $\mH^t$.
With these, we can then within a constant number of passes through the stream compute $\delta_y^{t+1},\delta_c^{t+1},y^{t+1}$.
We thus have all information that is required for the implicit representation of $x^{t+1},s^{t+1}$ so we can proceed with the next iteration.
After $\tO(\sqrt{n})$ iterations, our algorithm terminates. Since each iteration takes $\tO(1)$ passes, we take $\tO(\sqrt{n})$ passes in total.
The space requirement is $\tO(n^{1.5})$ since we store $y^t,\delta_c^t,\delta_y^t,\otau^t,g^t$ for $t=1,...,\tO(\sqrt{n})$.

\section{Low Space Implementation of IPM}
\label{sec:low_space}

In this section, we summarize the interior point method from \cite{bll+21} and show that it can be efficiently implemented in the streaming model. The following result states that, if we have access to an initial point via entry queries, then the interior point method can be implemented in $\tilde O(n^{1.5})$ space.

\begin{restatable}{theorem}{IPM}\label{thm:IPM}
    Given $\mu^\target \in \R_{>0}$ and $\epsilon$-centered (see \Cref{def:epscentered}) $(x^\init,s^\init,\mu^\init)$ in implicit representation (i.e., we can query any entry in time $T_0$), then in $\tilde O(T_0mn\log^2(\mu^\init/\mu^\target))$ time, $\tilde O (\sqrt{n} \log(\mu^\init/\mu^\target))$ passes, and $\tilde O (n^{1.5} \log(\mu^\init/\mu^\target))$ space, we construct $\epsilon$-centered $(x,s,\mu^\target)$ in implicit representation with query time $O(T_0+\sqrt{n} \log(\mu^\init/\mu^\target))$.
    The algorithm is randomized and the output is correct w.h.p.
\end{restatable}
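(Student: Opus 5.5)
We treat the IPM of \cite{bll+21} as a black box for correctness and only show that each of its $T=\tilde O(\sqrt n\log(\mu^\init/\mu^\target))$ iterations can be implemented with $\tilde O(1)$ passes, $\tilde O(n)$ additional words of storage, and $\tilde O(mn\cdot(\text{query time}))$ time, while keeping every iterate in implicit representation. The invariant after iteration $t$ is as follows. We store $\mu^t$ and the vectors $y^\ell,\delta_y^\ell,\delta_c^\ell\in\R^V$ for $\ell\le t$. We store the JL-sketched matrices $\mP^{(\ell,j)}$ for $j\le k=\tilde O(1)$ that define $\otau^\ell$ via \eqref{eq:overview:tau:vertex}. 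We also store the $\tilde O(1)$-sized tables (rounded values, counts, solved $\tilde g$) that define $g^\ell$. From this data, a query $(e,c_e,u_e)$ recomputes $x^0_e$ in time $T_0$. It then runs $\ell=1,\dots,t$, computing $\otau^\ell_e$ from $x^{\ell-1}_e$ and the $\mP^{(\ell,j)}$; $s^{\ell-1}_e=c_e-(\mA y^{\ell-1})_e$; the rounded $\tilde v_e$ and the table lookup $g^\ell_e$; and finally $x^\ell_e$ from the update formula in \eqref{eq:steps}. Each step costs $\tilde O(1)$, which gives query time $O(T_0+t)$ up to polylog factors.

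To advance from $t-1$ to $t$, we carry out the following passes in order, with every edge row evaluated through the implicit query.
\begin{enumerate}[label=(\roman*)]
\item Compute $\otau^t$ by $k=\tilde O(1)$ Cohen--Peng rounds \eqref{eq:overview:taurecursion}. In each round, one pass builds a spectral sparsifier of $\mA^\top(\mW^{(j-1)})^{1-2/p}\Phi''(x^{t-1})^{-1}\mA$ via \Cref{lem:spectralsparsifier}. A second pass accumulates $\mS^{(j)}(\mW^{(j-1)})^{1/2-1/p}\Phi''^{-1/2}\mA$ edge by edge. Multiplying by the Laplacian-solver inverse of the sparsifier then gives $\mP^{(t,j)}$, which has $\tilde O(n)$ entries.
\item One pass computes the rounded $\tilde v^t_e$ values and their multiplicities. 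We then solve the $\tilde O(1)$-dimensional version of \eqref{eq:gradientoverview} locally; $\|h\|_{\tilde\tau+\infty}$ is evaluable from the counts once $\otau$ is also bucketed.
\item One pass builds the sparsifier $\mH^t$.
\item One more pass accumulates $\mA^\top\Phi''^{-1/2}g^t$ and $\mA^\top x^{t-1}-b$, both $n$-dimensional. Solving with $\mH^t$ yields $\delta_y^t,\delta_c^t$. We then set $y^t=y^{t-1}+\delta_y^t$ and $\mu^t$.
\end{enumerate}
Each pass makes $m$ queries of cost $\tilde O(T_0+\sqrt n\log(\cdot))$. Over all iterations the time is therefore $\tilde O(T_0 m\sqrt n\log\cdot\sqrt n\log)=\tilde O(T_0mn\log^2)$, and the space is $\tilde O(n)\cdot T$.

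Correctness reduces to checking that our approximations fit the robustness tolerances of \cite{bll+21}. Specifically, $\otau^t$ must satisfy \eqref{eq:overview:ipm:tau} up to $1\pm1/O(\log m)$; $\mH^t$ must be a $1/O(\log m)$ spectral approximation; and $g^t$ must be the exact $\flat$-maximizer for some $\tilde x,\tilde s,\tilde\tau$ within $1\pm\epsilon$ of the true values. The last condition holds because rounding to multiples of $1+\epsilon$ is such a choice.

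JL and sparsifier failures are union-bounded over $\tilde O(\sqrt n)$ iterations. One subtlety needs care: the same random sketch $\mS^{(j)}$ must be reproduced at query time. We handle this by storing the sketched product $\mP$ rather than $\mS$ itself, so queries never need the sketch. During the accumulation pass, $\mS$'s column for edge $e$ is generated from a stored $O(\log n)$-wise independent seed, which keeps the space $\tilde O(n)$.

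The main obstacle is step (ii). The $\flat$ operator couples all coordinates through the $\ell_2(\tau)$ part of the norm, so we must argue that bucketing both $\tilde v_e$ and $\otau_e$ yields a maximizer that is constant on buckets. This follows because the objective and the norm are symmetric and convex within a bucket, so averaging over a bucket never hurts. We must also verify that the bucket count is polylogarithmic: $\tilde v$ is bounded by $O(1)$ when centered, and $\otau$ ranges over $[n/m,1+n/m]$, which gives $O(\log m/\epsilon)$ geometric buckets.
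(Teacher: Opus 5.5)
Your overall plan matches the paper's proof. It uses the same transcript invariant with recursive per-edge queries. It computes Lewis weights by Cohen--Peng rounds, each with a sparsifier and a stored sketched product $\mP$ (\Cref{thm:compute_tau}). It uses a bucketed $\flat$-gradient (\Cref{thm:compute_g}), and obtains $\mH^t,\delta_y^t,\delta_c^t$ from a sparsifier pass plus Laplacian solves. Finally it sums over the $\tilde O(\sqrt n\log(\mu^\init/\mu^\target))$ iterations of \Cref{lem:ipm}. Your averaging argument for bucket-constant maximizers is a valid self-contained substitute for Lemma~7.5 of \cite{bln+20}. Your seeded generation of sketch columns fills in a detail the paper leaves implicit. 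However, the tolerance you reduce correctness to is wrong, and as specified one step would not meet the IPM's hypothesis.

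The precondition on the centrality vector in Algorithm~\ref{alg:short_step} is additive: $\|\ov-v\|_\infty\le\gamma/20$. Here $v$ uses the exact $\tau(x)$, and $\gamma=\epsilon/(C\lambda)=\Theta(\epsilon^2/\log m)$ is polylogarithmically smaller than $\epsilon$.

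Your $\tilde v_e$ is evaluated with $\otau_e$ in place of $\tau(x)_e$. Now $v_e$ contains the term $s_e/(\mu\tau(x)_e\sqrt{\phi''_e(x_e)})$, which by $\epsilon$-centrality equals $-\phi'_e(x_e)/\sqrt{\phi''_e(x_e)}\pm\epsilon$. For the two-sided log barrier this has magnitude close to $1$ whenever $x_e$ is near one of its bounds. So a relative error $\delta$ in $\otau_e$ shifts $\tilde v_e$ additively by up to roughly $\delta$. With $\delta=\Theta(1/\log m)$, as you specify, this is far above $\gamma/20$. The same cancellation means that multiplicative $1\pm\epsilon$ perturbations of $s$ (or $x$) are not an admissible way to define $g$ either.

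The paper handles this as follows (\Cref{def:wt_g}):
\begin{itemize}
\item It computes $\otau\approx_{\gamma/60}\tau(x)$, i.e.\ \Cref{thm:compute_tau} with $\epsilon_\tau=\gamma/60$.
\item It uses the exact $x^{t-1}$ and $s^{t-1}$.
\item It rounds $v$ additively to multiples of $\epsilon'=\gamma/60$.
\item It rounds $\otau$ to powers of $1+\epsilon'$ only inside the $\flat$-norm.
\end{itemize}
Algorithm~\ref{alg:short_step} as stated also asks for $\mH^t\approx_\gamma$ rather than a $1/O(\log m)$ approximation. Since $\gamma=1/\polylog(m)$, this repair changes only polylogarithmic factors: sketch and sparsifier sizes become $\tilde O(n\gamma^{-2})$, and there are $O(\gamma^{-2}\log m)$ buckets. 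Once the accuracies are set this way, your pass, space, time and query bounds go through.
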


Here, $\epsilon$-centered roughly means on the central path that the algorithm follows and is formalized in \Cref{def:epscentered}. %

\subsection{Overview of Analysis}
\label{sec:ipm:overview}
We will show how to implement the interior point method (\Cref{alg:short_step}) in low space in the streaming model.
The interior point method was previously outlined in \Cref{sec:overview:reviewipm}, in particular \eqref{eq:steps} stated the required calculations.
To implement this interior point method, we store several objects in memory.
We call the collection of these objects that are accessible during the $t$-th iteration of the interior point method the ``$t-1$-transcript of the \textsc{IPM}''.
\begin{definition}[$t-1$-transcript of the \textsc{IPM}]\label{def:objects}
We call the following collection of values the $t-1$-transcript of the \textsc{IPM}. Here $\cT(\cdot)$ is a function describing the time complexity of a query.
\begin{itemize}
    \item $\delta_y^1, \cdots, \delta_y^{t-1} \in \R^n$,  $\delta_c^1,\cdots,\delta_c^{t-1} \in \R^n$,  $y^1 , \cdots, y^{t-1} \in \R^n$ , $\mu^{1}, \cdots, \mu^{t-1} \in \R_{>0}$
    \item $g^1 , \cdots, g^{t-1} \in \R^m$ in implicit representation, i.e.~for any $e=(u,v)$, $1 < \ell < t$ we can query $g^1$ in time $T_0$ and $g^\ell_e$ in time ${\cal T}(t-1)$. %
    \item $x^1,\dots, x^{t-1}\in\R^m$ in implicit representation, i.e.~for any $e=(u,v)$, $1< l<t$, we can query $x^1$ in time $T_0$ and $x_e^l$ in time ${\cal T}(t-1)$.
    \item $s^1,\dots,s^{t-1}\in\R^m$ in implicit representation, i.e.~ for any $e=(u,v)$, $1\le l<t$, we can query $s_e^l$ in time $O(1)$.
    \item $\tau^1, \cdots,\tau^{t-1}\in\R^m$ in implicit representation, i.e.~for any $e=(u,v)$, $1 \le \ell < t$ we can query $\otau^\ell_e$ in time $T_{x}+\tilde O(1)$, where $T_x$ is the time needed to query $x^{t-1}_e$.\\
    In general, the time to query $\otau^\ell_e$ is $\cT(t-1)+\tilde O(1)$ time. However, for many of our use cases when we want to find $\tau_e^t$ we also want to calculate $x_e^{t-1}$ itself as well; in that case, we can temporarily save the value of $x^{t-1}_e$ and expedite query time for $\tau^t_e$ to just $\tilde O(1)$ time. 
\end{itemize}
\end{definition}

For the query complexity $\cT(\cdot)$ of the transcript (\Cref{def:objects}), we prove in \Cref{sec:ipm:next} that a transcript can be computed with the following query complexity:
\begin{restatable*}{proposition}{queryTime}\label{prop:querytime}
    ${\cal T}(t-1)=T_0+\tilde O(t)$, where $T_0$ is the time complexity of querying any $x^1_e$. %
\end{restatable*}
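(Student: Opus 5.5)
The plan is to prove ${\cal T}(t-1)=T_0+\tilde O(t)$ by induction on $t$. The key observation is that a query for edge $e$ reconstructs $x^1_e,\dots,x^{t-1}_e$ in order. Each step from $x^{\ell-1}_e$ to $x^\ell_e$ should cost only $\tilde O(1)$ extra time once $x^{\ell-1}_e$ is cached in a temporary variable. We must avoid re-querying earlier iterates from scratch, since that would give a quadratic recursion ${\cal T}(t)={\cal T}(t-1)+{\cal T}(t-2)+\dots$.

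Concretely, a query for $x^{t-1}_e$ (or $g^{t-1}_e$) on $e=(w,z)$ with $c_e,u_e$ will run a single loop $\ell=1,\dots,t-1$.
\begin{enumerate}
\item We first obtain $x^1_e$ in time $T_0$.
\item For each $\ell\ge 2$, we compute $s^{\ell-1}_e=c_e-(y^{\ell-1}_z-y^{\ell-1}_w)$ in $O(1)$ time from the stored $y$'s.
\item Using the cached $x^{\ell-1}_e$, we compute $\otau^\ell_e$ by running the $k=\tilde O(1)$ steps of the recursion \eqref{eq:overview:tau:vertex}. Each step needs one evaluation $\|\mP^{(j)}(\unit_z-\unit_w)\|_2^2$ of a stored $\tilde O(1)\times n$ sketch, costing $\tilde O(1)$, plus $\phi''_e(x^{\ell-1}_e)$, which is $O(1)$ given $x^{\ell-1}_e$ and $u_e$. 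This is exactly the "expedited" $\tilde O(1)$ query time for $\tau$ noted in \Cref{def:objects}.
\item From $(x^{\ell-1}_e,s^{\ell-1}_e,\otau^\ell_e,\mu^{\ell-1})$ we form $\tilde v_e$ and round it (and $\otau^\ell_e$) to the grid. We then look up the corresponding entry of the stored $\tilde O(1)$-size vector representing $g^\ell$. With a sorted array or hash map this takes $O(\log)$ time, so $g^\ell_e$ costs $\tilde O(1)$.
\item Finally, we set $x^\ell_e=x^{\ell-1}_e+\phi''_e(x^{\ell-1}_e)^{-1/2}g^\ell_e-(\otau^\ell_e)^{-1}\phi''_e(x^{\ell-1}_e)^{-1}(\mA(\delta^\ell_y+\delta^\ell_c))_e$. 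The last term is a difference of two stored vertex entries, so this is $O(1)$ time.
\end{enumerate}
Summing over the loop gives $T_0+(t-1)\cdot\tilde O(1)=T_0+\tilde O(t)$.

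The formal argument will phrase this as a strengthened inductive claim: given transcript $t-1$, one can output the whole sequence $(x^\ell_e,g^\ell_e,\otau^\ell_e)_{\ell<t}$ in time $T_0+\tilde O(t)$. Adding iteration $t$ to the transcript only appends an $\tilde O(n)$-size block, namely $\delta_y^t,\delta_c^t,y^t,\mu^t$, the $k$ sketches $\mP^{(j)}$, and the $\tilde O(1)$ gradient table. Consequently, the next loop iteration costs $\tilde O(1)$ and the claim propagates from $t-1$ to $t$.

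The main obstacle is confirming that every per-iteration quantity needed for edge $e$ at step $\ell$ depends on the graph only through $e$'s own previous values $x^{\ell-1}_e,s^{\ell-1}_e$ and $\tilde O(n)$ stored global data. In particular, the Lewis-weight recursion inside iteration $\ell$ must use only $x^{\ell-1}_e$ and the intermediate $w^{(j-1)}_e$, with no circular dependence. The gradient lookup must also depend only on the rounded local triple. Both hold by the constructions described in \Cref{sec:overview:graph}.
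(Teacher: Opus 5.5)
Your proposal is correct and follows essentially the same route as the paper. The paper proves the recurrence ${\cal T}(t)={\cal T}(t-1)+\tilde O(1)$ with base case $T_0$ by invoking \Cref{thm:compute_g} and \Cref{cor:access_x}, whose query bounds rest on your caching idea (temporarily saving $x^{t-1}_e$ so that $\overline{\tau}^t_e$ and $g^t_e$ cost only $\tilde O(1)$ extra); your single loop over $\ell=1,\dots,t-1$ is that recurrence unrolled, with the per-step costs written out.
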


Note that the $t$-transcript is just a collection of most objects constructed within the first $t-1$ iterations of the interior point method (\textsc{ShortStep} in \Cref{alg:short_step}).
Hence the name \emph{transcript}, as it stores the values of previous iterations.
The main technical result is to show that we can perform iteration $t$ using the $t$-transcript. By storing the resulting $(t+1)$-transcript, this then inductively implies that we can perform the interior point method for any number of iterations.

The aim of the rest of this section is to show that we can indeed compute a $t$-transcript in low space under the streaming model.
The $t$-transcript requires us to compute the following values explicitly:
$(\delta_y^{t},\delta_c^{t}, y^{t}, \mu^{t})$, and we need to implicitly maintain $x^{t}$, $s^{t}$, $\tau^t$ and $g^t$.

In this analysis, many variables depend on other variables, either from this iteration or from the previous iteration of the transcript. To clarify the relationships as well as to confirm we have no circular dependencies, the following \Cref{fig:varmap} is a diagram of the order of variables calculated to arrive at the $t+1$-transcript from the $t$-transcript.

\begin{figure}[!ht]
\centering
\resizebox{0.8\textwidth}{!}{%
\begin{circuitikz}

\draw (1,0) rectangle (4,6)node[pos=0.5]{$(t-1)$-transcript}; 
\draw (7,1) circle (0.5cm) node { $\tau^t$} ;
\draw (7,3) circle (0.5cm) node { $g^t$} ;
\draw [->, >=Stealth] (6.5,1) -- (4,1);
\draw [->, >=Stealth] (6.5,3) -- (4,3);
\draw [->, >=Stealth] (6.5,3) -- (5.5,5);
\draw  (5,5) circle (0.5cm) node { $\mu^t$} ;
\draw  (10,5) circle (0.5cm) node { $\delta_y^t$} ;
\draw  (10,1) circle (0.5cm) node { $\delta_c^t$} ;
\draw  (13,6) circle (0.5cm) node { $y^t$} ;
\draw  (13,4) circle (0.5cm) node { $x^t$} ;
\draw  (13,2) circle (0.5cm) node { $s^t$} ;
\draw [->, >=Stealth] (9.5,5) -- (7.5,3);
\draw [->, >=Stealth] (9.5,5) -- (7.5,1);
\draw [->, >=Stealth] (9.5,1) -- (7.5,1);
\draw [->, >=Stealth] (12.5,6) -- (10.5,5);
\draw [->, >=Stealth] (12.5,4) -- (10.5,5);
\draw [->, >=Stealth] (12.5,4) -- (7.5,3);
\draw [->, >=Stealth] (12.5,4) -- (7.5,1);
\draw [->, >=Stealth] (12.5,4) -- (10.5,1);
\draw [->, >=Stealth] (12.5,2) -- (5.5,5);
\draw [->, >=Stealth] (12.5,2) -- (10.5,5);
\end{circuitikz}
}%

\caption{Variable dependencies. E.g., querying $x^t_e$ requires the value of $\tau^t_e$. \label{fig:varmap}}
\end{figure}

\subsection{Recap of IPM}

\begin{algorithm}[!ht]\caption{Interior Point Method, Algorithm 1 in \cite{bll+21}
}\label{alg:short_step}
\begin{algorithmic}[1]
\Procedure {IPM}{$\mA$, $x^{\init}$, $s^{\init}$, $\tau^{\init}$, $l$, $u$, $\mu$, $\mu^{\final}$}
	\State Fix $\tau(x) := w(\phi''(x)^{-\frac{1}{2}})$, as defined in \Cref{def:lewisweights}.
 	\State Define $\alpha=\frac{1}{4\log(4m/n)}$, $\epsilon = \frac{\alpha}{C}$, $\lambda=\frac{C\log(Cm/\epsilon^2)}{\epsilon}$, $\gamma=\frac{\epsilon}{C\lambda}$, $r=\frac{\epsilon\gamma(1-p)}{C\sqrt{n}} = \frac{\epsilon^2\gamma}{\sqrt{n}}$.
	\State $t\gets 1$, $x^t\gets x^{\init}$, $s^t\gets s^{\init}$, $\tau^t\gets \tau^\init$, $\mu^t\gets \mu$
	\While{$\mu^t>\mu^{\final}$}
		\State $\mu^{t+1}\gets(1-r)\mu^t$
		\State $(x^{t+1}, s^{t+1}, \tau^{t+1})\gets \Call{ShortStep}{x^t, s^t, \tau^t, \mu^t, \mu^{t+1}}$
		\State $t\gets t+1$
	\EndWhile
	\State \Return $(x^t, s^t)$
\EndProcedure
\Procedure {ShortStep}{$x$, $s$, $\tau$, $\mu$, $\mu^{\text{(new)}}$}
    \State Find $\otau$ such that $\otau\approx_\epsilon \tau(x)$
	\State Let $v=\frac{s+\mu\tau(x)\phi'(x)}{\mu\tau(x)\sqrt{\phi''(x)}}$ and let $\|\ov-v\|_{\infty}\le\gamma/20$.
	\State Let $g=-\gamma\nabla\Psi(\ov)^{\flat(\otau)}$ where \[\nabla\Psi(\ov)^{\flat(\otau)}:=\argmax_{\|h\|_{\tau+\infty}=1}h^\top\nabla\Psi(\ov)\] where $\Psi(\ov)=\sum_{i=1}^m\cosh(\lambda\ov)$
	\State Let $\mH^t\approx_{\gamma}=\mA^\top\omT^{-1}\Phi''(x)^{-1}\mA$
	\State Let $\delta_y^{t}:=\mH^{-1}\mA^\top\Phi''(x)^{-\frac{1}{2}} g$, $\delta_c^{t}:=\mH^{-1}(\mA^\top x-b)$
	\State $\delta_x^{t}\gets \Phi''(x)^{-\frac{1}{2}}g-\omT^{-1}\Phi''(x)^{-1}\mA(\delta_y^{t+1}+\delta_c^{t+1})$
	\State $\delta_s^{t}\gets \mu\mA\delta_y^{t+1}$
	\State $x^{t}\gets x+\delta_x^{t}$ and $s^{t}\gets s+\delta_s^{t}$.
	\State \Return $(x^{t}, s^{t}, \tau^{t})$.
\EndProcedure
\end{algorithmic}
\end{algorithm}

We start by reviewing the important definitions and lemmas that \cite{bll+21} uses in their interior point method.

The barrier function \cite{bll+21} uses is given by $\phi(x)=-\log(x-l)-\log(u-x)$. Note $\phi'(x)=-\frac{1}{x-l}+\frac{1}{u-x}$ and $\phi''(x)=\frac{1}{(x-l)^2}-\frac{1}{(u-x)^2}$.

During the IPM, we maintain tuples $(x, s, \mu)$. Given a current point $(x, s, \mu)$, we define a \textit{weight function} $\tau:\R^m\to \R^m_{>0}$. We use regularized Lewis weights.

\begin{definition}[Lewis weights]\label{def:lewisweights}
    For $p=1-\frac{1}{4\log(4m/n)}$ define the Lewis weights for $x$ to be the solution to
    \[
        \tau(x)=\sigma(\mT^{\frac{1}{2}-\frac{1}{p}}\Phi''(x)\mA)+\frac{n}{m} \text{ where } \mW := \diag(\tau).
    \]
\end{definition}

We now formally define the notion of following the central path. We say our current iterate are $\epsilon$-centered if the following condition holds. Throughout the algorithm, the points will stay $\epsilon$-centered. In particular, the initial point must be $\epsilon$-centered.

\begin{definition}[$\epsilon$-centered point]\label{def:epscentered}
    We say that $(x,s,\mu)\in\R^m\times\R^m\times\R^m_{>0}$ is $\epsilon$-centered for $\epsilon\in(0,1/80]$ if the following properties hold, where $C_{\mathrm{norm}}=C(1/p)$ for a constant $C\ge100$.
    \begin{enumerate}
        \item (Approximate centrality) $\left\|\frac{s+\mu\tau(x)\phi'(x)}{\mu\tau(x)\sqrt{\phi''(x)}}\right\|_\infty\le\epsilon$.
        \item (Dual Feasibility) There exists a vector $z\in\R^n$ with $\mA z+s=c$.
        \item (Approximate Feasibility) $\|\mA^\top x-b\|_{\mA^\top(\mT(x)\Phi''(x))^{-1}}\le\epsilon\gamma/C_{\mathrm{norm}}$.
    \end{enumerate}
\end{definition}

The following potential is used to guarantee $\epsilon$-centrality. The interior point method essentially performs a gradient descent w.r.t.~this potential.

\begin{definition}[Centrality Potential]
    We track the following centrality potential.
    \[
        \Psi(x,s,\mu):=\sum_{i=1}^m\psi\left(\frac{s_e+\mu\tau(x)_e\phi'_e(x_e)}{\mu\tau(x)_e\sqrt{\phi''_e(x_e)}}\right)
    \]
    for $\psi(x):=\cosh(\lambda x)$, where $\lambda=\Theta(\log(m)/\epsilon)$ (see \Cref{alg:short_step} for exact definition of $\lambda$).
\end{definition}

The interior point method sometimes measures the length of vectors in a mixed norm, defines as
\begin{definition}[{\cite[Definition 4.9]{bll+21}}]\label{def:tauinftynorm}
    For a vector $v\in\R^m,\tau\in\R^m_{\ge0}$, we define
    $$
    \|v\|_{\tau+\infty} := \|v\|_\infty + C \log(4m/n)\cdot\|v\|_\tau
    $$
    where $\|v\|_\tau = \sqrt{\sum_i \tau_i \cdot v_i^2}$ and $C$ is some large constant.
\end{definition}

We can now state the precise statement of the IPM of \cite{bll+21}, given by \Cref{alg:short_step} and \Cref{lem:ipm}. 

\begin{lemma}[{\cite[Lemma 4.12]{bll+21}}]\label{lem:ipm}
Consider a call to \textsc{IPM} (\Cref{alg:short_step}).
Let $x^\init$, $s^\init$, $\mu$ be such that $\mA z+s^\init=c$ for some $z$, let $\tau^\init$ be such that $\|T(x^\init)^{-1}(\tau^\init=\tau(x^\init))\|_\infty\le\epsilon$, and let $\epsilon,\gamma$ be as defined in \Cref{alg:short_step}, and $\left\|\frac{s+\mu\tau(x^\init)\phi'(x^\init)}{\mu\tau(x^\init)\sqrt{\phi''(x\init)}}\right\|_{\infty}\le \epsilon$ and $\|\mA^\top x^\init -b\|_{(\mA^\top(\mT(x^\init)\Phi''(x^\init))^{-1}\mA)^{-1}}\le \epsilon\gamma(1-p)/C$.
Then \Cref{alg:short_step} returns
$x,s \in \R^m$ such that $A y + s = c$ for some $y\in\R^n$ and $c^\top x-\min_{\substack{\mA^\top x=b\\ l_i\le x_i\le u_i \forall i \in [m] }}c^\top x\lesssim n\mu$.
The algorithm takes $\tO(\sqrt{n}\log (\mu/\mu^\final))$ iterations.
Throughout all iterations, $\left\|\frac{s+\mu\tau(x)\phi'(x)}{\mu\tau(x)\sqrt{\phi''(x)}}\right\|_{\infty}\le\epsilon$.
\end{lemma}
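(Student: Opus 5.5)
This lemma is \cite[Lemma 4.12]{bll+21} restated in our notation, so the plan is to inherit it and only re-check that our parameter choices in \Cref{alg:short_step} match theirs. Concretely, $p$, $\alpha$, $\epsilon$, $\lambda$, $\gamma$, $r$ should agree with their Algorithm 1, and their approximate steps (with $\otau\approx_\epsilon\tau(x)$, $\|\ov-v\|_\infty\le\gamma/20$, and $\mH\approx_\gamma \mA^\top\omT^{-1}\Phi''(x)^{-1}\mA$) should be exactly the ones we allow. For a self-contained justification, I would reconstruct their argument in three steps.

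\textbf{Step 1: one-step invariant.} If $(x,s,\mu)$ is $\epsilon$-centered with potential $\Psi\le \poly(m)$, then after one \textsc{ShortStep} with $\mu^{\text{new}}=(1-r)\mu$ three things hold.
\begin{itemize}
\item Dual feasibility is preserved, since $\delta_s=\mu\mA\delta_y\in\mathrm{im}(\mA)$.
\item Approximate feasibility $\|\mA^\top x-b\|$ is preserved, because the correction $\delta_c$ contracts the residual by a constant factor whenever $\mH$ is a $\gamma$-spectral approximation.
\item The potential $\Psi$ decreases whenever it is large.
\end{itemize}
The potential bound works as follows. The chosen step $g=-\gamma\nabla\Psi(\ov)^{\flat(\otau)}$ has $\|g\|_{\tau+\infty}\le\gamma$. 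Its first-order effect on $v$ is approximately $g$ itself, up to errors controlled by the projection being nearly the identity on the relevant part. It decreases $\Psi$ by about $\gamma\lambda\|\nabla\Psi\|^*/\lambda$. The changes of $\tau(x)$ (Lewis weights are stable under $\|\cdot\|_{\tau+\infty}$-small multiplicative changes of $x$, which is why $p<1$ is chosen), the decrease of $\mu$ by the factor $1-r$, and the rounding errors all contribute second-order terms of order $\gamma^2\lambda^2\Psi$ or $r\sqrt n\lambda\Psi$. With $r=\epsilon^2\gamma/\sqrt n$ and $\gamma=\epsilon/(C\lambda)$, these terms are dominated, giving $\mathbb E$-free deterministic bounds of the form $\Psi^{\text{new}}\le(1-\gamma\lambda/C')\Psi+ O(m)$. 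Hence $\Psi\le \poly(m)$ throughout. Since $\cosh(\lambda v_e)\le\Psi$ and $\lambda=\Theta(\log m/\epsilon)$, this yields $\|v\|_\infty\le\epsilon$, which is the centrality claim.

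\textbf{Step 2: iteration count.} Since $\mu$ shrinks by $1-r$ per iteration, reaching $\mu^\final$ takes $O(r^{-1}\log(\mu/\mu^\final))=\tO(\sqrt n\log(\mu/\mu^\final))$ iterations, as $\epsilon,\gamma=1/\polylog$.

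\textbf{Step 3: duality gap.} From centrality, $s_e\approx-\mu\tau_e\phi'_e(x_e)$ up to $\epsilon\mu\tau_e\sqrt{\phi''_e}$. A standard weak-duality computation with the box constraints bounds $c^\top x-\OPT$ by $\mu\sum_e\tau_e\cdot O(1)$, together with a term from approximate feasibility. Since $\sum_e\tau_e\le n+n=O(n)$ by the leverage-score sum and regularization, the gap is $\lesssim n\mu$.

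The main obstacle is the potential-decrease inequality in Step 1. Specifically, it requires controlling how $\tau(x)$ moves under the step, and showing the step is close to $g$ despite the oblique projection with $\omT$ instead of $\mT$. This is exactly where \cite{bll+21} does the heavy lifting, and I would cite it rather than redo it.
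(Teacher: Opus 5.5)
Your proposal is correct and takes the same approach as the paper, which gives no argument of its own and simply imports Lemma~4.12 of \cite{bll+21}, with the parameters and tolerances of \Cref{alg:short_step} ($\otau\approx_\epsilon\tau(x)$, $\|\ov-v\|_\infty\le\gamma/20$, $\mH\approx_\gamma\mA^\top\omT^{-1}\Phi''(x)^{-1}\mA$) taken from there. Your self-contained sketch is only heuristic (for instance, the per-step contraction $(1-\gamma\lambda/C')\Psi$ is stronger than what the $\|\cdot\|_{\tau+\infty}$ dual-norm argument yields when $v$ is spread over many coordinates, where roughly a $\sqrt{n}\log(m/n)$ factor is lost), but since you defer the potential-decrease step to \cite{bll+21}, the proof does not depend on it.
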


\subsection{Implicit Representation of \texorpdfstring{$\tau$}{}}\
\label{sec:ipm:tau}
Recall that the first step of each iteration is to find the Lewis weights $\tau(x)$ of our new point $x$ where $\tau(x)$ is defined as the solution to
\[
    \tau(x)=\sigma(\mT^{1/2-1/p}\Phi''(x)\mA)+\frac{n}{m}
\]
where $\mT$ is the diagonal matrix with $\tau$ on the diagonal.

Specifically, in the step of the algorithm, we want to maintain an approximate $\otau\approx_\epsilon \tau(x)$. Here, we in fact maintain an $\otau \approx_{\epsilon_\tau}\tau(x)$ for $\epsilon_\tau:=\gamma/60<\epsilon$ where $\gamma$ and $\tau$ are globally defined above in the interior point method (\Cref{alg:short_step}). 
We here show how to compute such $\otau$. Note that for our choice of accuracy $\epsilon_\tau$, it is logarithmic in $n$, and thus can be subsumed into the $\tilde O$s.

\begin{theorem}\label{thm:compute_tau}
    There is a randomized algorithm that receives as input: $p\in [\frac{1}{2},2)$, a directed weighted graph $G=(V,E)$ given as stream, $x\in\R^E_{>0}$ with query time $T_x$, and error parameter $\epsilon_\tau$.

    Then, within $\tilde O(mT_x+\epsilon_\tau^{-2})$ time, $\tilde O(1)$ passes through the stream, and with $\tilde O(n\epsilon_\tau^{-2})$ space, the algorithm constructs w.h.p.~an implicit representation of $\otau$ with $\otau\approx_{\epsilon_\tau} \tau(x)$
    . Each edge query can be done in $T_x+\tilde O(1)$ time. 
\end{theorem}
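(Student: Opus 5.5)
We follow the Cohen--Peng fixed-point iteration \cite{CohenP15} for $\ell_p$-Lewis weights, as sketched in \eqref{eq:overview:taurecursion}, implementing each round with one sparsifier pass and one sketching pass. Write $\mM := \Phi''(x)^{-1/2}\mA$ (the scaling used in \eqref{eq:overview:ipm:tau}), whose row $e=(a,b)$ equals $\phi''_e(x_e)^{-1/2}(\unit_b-\unit_a)^\top$ and is computable from a query to $x_e$. Start with $w^{(0)} = \vec 1$, for which $w^{(0)} \approx_{O(\log m)} \tau(x)$ since $n/m \le \tau \le 1+n/m$. For $j=1,\dots,k$ set
\[
w^{(j)}_e = \left((w^{(j-1)}_e)^{2/p-1}\left(\tilde\sigma^{(j)}_e + n/m\right)\right)^{p/2},
\]
where $\tilde\sigma^{(j)}$ is an $\exp(\pm\epsilon')$-approximation of $\sigma((\mW^{(j-1)})^{1/2-1/p}\mM)$. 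Cohen--Peng show the map is a contraction in multiplicative distance with factor $|1-p/2| \le 3/4$ for $p\in[1/2,2)$. The error therefore satisfies
\[
\mathrm{err}_j \le \tfrac34\, \mathrm{err}_{j-1} + O(\epsilon').
\]
Choosing $k = O(\log(\log m/\epsilon_\tau)) = \tilde O(1)$ and $\epsilon' = \Theta(\epsilon_\tau)$ gives $\otau := w^{(k)} \approx_{\epsilon_\tau} \tau(x)$. The main care point is that the approximate leverage scores feed back into the next round's weights. This is handled exactly by the contraction: the map is Lipschitz in the multiplicative metric, so the per-round error $\epsilon'$ contributes at most $O(\epsilon')$ and the total stays bounded.

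\textbf{Round $j$ in the stream.} Suppose we store the representation of $w^{(j-1)}$, so a query $w^{(j-1)}_e$ costs $T_x + \tilde O(j)$, or $\tilde O(j)$ once $x_e$ is cached.

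In the first pass we apply \Cref{lem:spectralsparsifier} with edge weights $(w^{(j-1)}_e)^{1-2/p}/\phi''_e(x_e)$. This yields $\mH^{(j)} \approx_{\epsilon'} \mM^\top(\mW^{(j-1)})^{1-2/p}\mM$ with $\tilde O(n\epsilon'^{-2})$ edges.

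In the second pass we draw a JL matrix $\mS^{(j)}$ with $O(\epsilon'^{-2}\log n)$ rows, generated from $\tilde O(1)$-wise independent seeds so it needs no $\Omega(m)$ storage. We accumulate
\[
\mS^{(j)}(\mW^{(j-1)})^{1/2-1/p}\mM = \sum_e \mS^{(j)}_{:,e}\, (w^{(j-1)}_e)^{1/2-1/p}\,\phi''_e(x_e)^{-1/2}(\unit_b-\unit_a)^\top
\]
edge by edge, in $\tilde O(n\epsilon'^{-2})$ space. We then multiply on the right by $(\mH^{(j)})^{\dagger}$ using a Laplacian solver on each of the $\tilde O(\epsilon'^{-2})$ rows, and store the resulting $\mP^{(j)}$.

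A query for edge $e=(a,b)$ returns
\[
\tilde\sigma^{(j)}_e = \|\mP^{(j)}(\unit_b-\unit_a)\|_2^2 \,\phi''_e(x_e)^{-1}(w^{(j-1)}_e)^{1-2/p},
\]
which takes $\tilde O(1)$ time given $w^{(j-1)}_e$ and $x_e$. The error bound follows from two facts. First, $\mH^{(j)} \approx_{\epsilon'}$ the exact Gram matrix implies the row norms $\|(\mH^{(j)})^{\dagger/2}m_e\|$ are $\epsilon'$-approximations of the true ones. Second, the JL lemma with a union bound over all $\le n^2$ possible edges holds w.h.p. Together these give $\tilde\sigma^{(j)} \approx_{O(\epsilon')} \sigma$. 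For the pseudoinverse, the graph is connected (it contains the star) and all vectors are orthogonal to $\vec 1$, so the pseudoinverse causes no issue.

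\textbf{Complexity.} The $k=\tilde O(1)$ rounds use two passes each, giving $\tilde O(1)$ passes. Storing $\mP^{(1)},\dots,\mP^{(k)}$ takes $\tilde O(n\epsilon_\tau^{-2})$ space in total. An edge query computes $x_e$ once (time $T_x$) and then evaluates the recursion for $j=1,\dots,k$ in $\tilde O(1)$, giving $T_x + \tilde O(1)$. The total time is $\tilde O(mT_x)$ per pass plus solver time, which matches the stated bound up to the $\epsilon_\tau$ terms.
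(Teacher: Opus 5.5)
Your proposal is correct and follows essentially the paper's proof. You start from the all-ones vector and run $\tilde O(1)$ rounds of the update \eqref{eq:tau_update}; in each round you build a sparsifier via \Cref{lem:spectralsparsifier} and a JL sketch, and store the $\tilde O(\epsilon_\tau^{-2})\times n$ matrix $\mP^{(j)}$. A query makes one call to $x_e$ and then replays the recursion. Your direct contraction bound on the distance to $\tau(x)$ is an equivalent way of handling the error that the paper treats via \Cref{lem:lemma_5.3_bll+21} and \Cref{lem:lemma_5.5_bll+21}.

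One small slip in the error analysis. Write $\mB=(\mW^{(j-1)})^{1/2-1/p}\mM$ and $b_e=\mB^\top\unit_e$. What $\mP^{(j)}$ actually sketches is $\|\mB(\mH^{(j)})^{\dagger}b_e\|_2^2=b_e^\top(\mH^{(j)})^{\dagger}\mB^\top\mB(\mH^{(j)})^{\dagger}b_e$, not $\|(\mH^{(j)})^{\dagger/2}b_e\|_2^2$. You therefore need one extra line: $\mH^{\dagger}\mB^\top\mB\mH^{\dagger}\approx_{\epsilon'}\mH^{\dagger}\approx_{\epsilon'}(\mB^\top\mB)^{\dagger}$ on the relevant range. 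This only doubles the constant in your $O(\epsilon')$.
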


To prove \Cref{thm:compute_tau}, we first use the following lemmas.

\begin{lemma}[{\cite[Lemma 5.3]{bll+21}}]\label{lem:lemma_5.3_bll+21}
    Let $w$, $z$, $\osigma\in\R^m_{>0}$, $\epsilon,\epsilon_\sigma>0$ and $p\in(0,2)$ with $w\approx_{\epsilon}\sigma(\mW^{1/2-1/p}\mA)+z$ and $\osigma\approx_{\epsilon_\sigma}\sigma(\mW^{1/2-1/p}\mA)+z$. Define $w'=(w^{2/p-1}\osigma)^{p/2}$, then $w'\approx_{(\epsilon+\epsilon_\sigma)|1-p/2|+\epsilon_\sigma}\sigma(\mW'^{1/2-1/p}\mA)$.%
\end{lemma}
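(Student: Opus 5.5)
The plan is the Cohen–Peng fixed-point contraction argument, done entirely with multiplicative error exponents. Write $\sigma:=\sigma(\mW^{1/2-1/p}\mA)$, $\sigma':=\sigma(\mW'^{1/2-1/p}\mA)$, and let $a_i^\top$ be the rows of $\mA$. Introduce error vectors $\alpha,\beta\in\R^m$ with
\[
w=(\sigma+z)\,e^{\alpha},\qquad \osigma=(\sigma+z)\,e^{\beta},\qquad \|\alpha\|_\infty\le\epsilon,\ \|\beta\|_\infty\le\epsilon_\sigma .
\]
These exist by the hypotheses.

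The first step compares $w'$ with $w$. Dividing the definition $w'=(w^{2/p-1}\osigma)^{p/2}$ by $w$ gives
\[
\frac{w'}{w}=\Big(\frac{\osigma}{w}\Big)^{p/2}=e^{(\beta-\alpha)p/2},
\]
so $w'\approx_{(\epsilon+\epsilon_\sigma)p/2} w$ entrywise. Raising to the power $1-2/p$, the row reweighting satisfies
\[
\mW'^{1-2/p}=\mW^{1-2/p}\,\mD,\qquad \mD=\diag\big(e^{(\beta-\alpha)(p/2-1)}\big),
\]
and every entry of $\mD$ lies in $[e^{-\delta},e^{\delta}]$ with $\delta:=|1-p/2|(\epsilon+\epsilon_\sigma)$. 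This $\delta$ is exactly the first part of the claimed exponent, and the contraction factor $|1-p/2|$ appears here.

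The second step relates $\sigma'$ to $\sigma$. Use the formula
\[
\sigma'_i=(w'_i)^{1-2/p}\,a_i^\top\big(\mA^\top\mW'^{1-2/p}\mA\big)^{-1}a_i .
\]
Split this into a diagonal factor and a quadratic-form factor. The matrix satisfies $\mA^\top\mW'^{1-2/p}\mA\approx_{\delta}\mA^\top\mW^{1-2/p}\mA$, so $a_i^\top(\cdot)^{-1}a_i$ changes by at most a factor $e^{\pm\delta}$. The diagonal factor $(w'_i)^{1-2/p}$ equals $w_i^{1-2/p}D_{ii}$. Combining, $\sigma'_i=\sigma_i D_{ii}e^{\rho_i}$ with $|\rho_i|\le\delta$.

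The third step closes the loop and is the crux. Rewriting the definition of $w'$,
\[
w'_i=w_i^{1-p/2}\osigma_i^{p/2}=(\sigma_i+z_i)\,e^{(1-p/2)\alpha_i+(p/2)\beta_i}.
\]
The goal is to show this agrees with $\sigma'_i$ (resp.\ $\sigma'_i+z_i$, if the intended statement carries the regularizer $z$; the $z$-term is unaffected by reweighting and only improves the ratios) up to $e^{\pm((\epsilon+\epsilon_\sigma)|1-p/2|+\epsilon_\sigma)}$. Done naively, $D_{ii}$ and $\rho_i$ each contribute $\delta$, on top of the $\alpha,\beta$ terms in $w'_i$, which overshoots. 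The key point is that $D_{ii}=e^{(p/2-1)(\beta_i-\alpha_i)}$ is correlated with the exponent already present in $w'_i$. Writing
\[
\frac{w'_i}{\sigma_i D_{ii}}\;\approx\; e^{(1-p/2)\alpha_i+(p/2)\beta_i-(p/2-1)(\beta_i-\alpha_i)}=e^{\beta_i},
\]
the diagonal correction cancels the $\alpha$-dependence and leaves only $|\beta_i|\le\epsilon_\sigma$. Only the quadratic-form perturbation $\rho_i$, of size at most $|1-p/2|(\epsilon+\epsilon_\sigma)$, survives. This yields precisely $(\epsilon+\epsilon_\sigma)|1-p/2|+\epsilon_\sigma$.

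I expect the main obstacle to be this cancellation bookkeeping, including correctly accounting for the $+z$ term in the ratio (treating $\sigma_i+z_i$ versus $\sigma_i$ via $\tfrac{\sigma_i+z_i}{\sigma_i}\ge1$ monotonicity). The spectral-approximation step itself is routine.
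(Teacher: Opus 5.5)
Your route is the standard fixed-point argument behind this lemma; the paper gives no proof and only cites \cite{bll+21}. You compare $\mW'$ with $\mW$ through $w'/w=(\osigma/w)^{p/2}$, use spectral stability of $a_i^\top(\mA^\top\mW^{1-2/p}\mA)^{-1}a_i$, and exploit the identity $w'=(w'/w)^{1-2/p}\osigma$, which is exactly your cancellation. Your first two steps are correct as written, and for $z=0$ your third step completes the proof.

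The one real gap is the regularizer $z$. As printed, the conclusion omits $+z$, and that version is false in general. If $z_i\ge 10$ and $\epsilon,\epsilon_\sigma$ are small, then $w'_i=(\sigma_i+z_i)e^{(1-p/2)\alpha_i+(p/2)\beta_i}$ is far larger than $\sigma(\mW'^{1/2-1/p}\mA)_i\le 1$. Your monotonicity remark cannot repair this. It only gives the lower bound $w'_i/\sigma'_i=\frac{\sigma_i+z_i}{\sigma_i}e^{\beta_i-\rho_i}\ge e^{-\epsilon_\sigma-\delta}$, and your displayed $w'_i/(\sigma_iD_{ii})\approx e^{\beta_i}$ holds only when $z_i=0$.

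The intended statement is $w'\approx\sigma(\mW'^{1/2-1/p}\mA)+z$. This is the form in \cite{bll+21}, and the form the paper needs to feed Lemma~\ref{lem:lemma_5.5_bll+21} in the proof of Theorem~\ref{thm:compute_tau}. Close it termwise:
\[
w'_i=\sigma_iD_{ii}e^{\beta_i}+z_iD_{ii}e^{\beta_i},\qquad \sigma'_i+z_i=\sigma_iD_{ii}e^{\rho_i}+z_i .
\]
The $\sigma$-parts have ratio $e^{\beta_i-\rho_i}$. The $z$-parts have ratio $D_{ii}e^{\beta_i}=e^{(1-p/2)\alpha_i+(p/2)\beta_i}$. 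Both lie in $[e^{-\delta-\epsilon_\sigma},e^{\delta+\epsilon_\sigma}]$. By the mediant inequality (for positive $a,b,c,d$, $\frac{a+b}{c+d}$ lies between $\frac{a}{c}$ and $\frac{b}{d}$), the full ratio does too. So the $z$-term is not ``unaffected'': on the $z$-part the factor $D_{ii}$ does not cancel; it simply takes the place of $\rho_i$ in the error budget.
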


\begin{lemma}[{\cite[Lemma 5.5]{bll+21}}]\label{lem:lemma_5.5_bll+21}
    For any $w,z\in\R_{>0}^m$, $\epsilon>0$, and $p\in(0,2]$ with $w\approx_\epsilon\sigma(\mW^{\frac{1}{2}-\frac{1}{p}}\mA)+z$ we have $w\approx_{\epsilon}\tau(\mA)$.
\end{lemma}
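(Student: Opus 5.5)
The plan is to run the Cohen--Peng fixed-point iteration starting from $w$, use \Cref{lem:lemma_5.3_bll+21} to show that each step contracts the ``fixed-point error'' geometrically, and then sum the resulting step lengths. Define the map
\[
F(w) := \left(w^{2/p-1}\left(\sigma(\mW^{1/2-1/p}\mA)+z\right)\right)^{p/2}.
\]
Every solution of $\tau=\sigma(\mT^{1/2-1/p}\mA)+z$ is a fixed point of $F$. Set $w^{(0)}:=w$ and $w^{(k+1)}:=F(w^{(k)})$, and let $\epsilon_k$ be the smallest value with $w^{(k)}\approx_{\epsilon_k}\sigma(\mW^{(k)\,1/2-1/p}\mA)+z$, so that $\epsilon_0\le\epsilon$.

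\textbf{Contraction.} Apply \Cref{lem:lemma_5.3_bll+21} with $\osigma$ equal to the exact value $\sigma(\mW^{1/2-1/p}\mA)+z$. I read the lemma's conclusion as including the $+z$ term, and I take $\epsilon_\sigma\to 0$; since the bound is continuous in $\epsilon_\sigma$, this is harmless. This gives $\epsilon_{k+1}\le|1-p/2|\,\epsilon_k$. Because $p\in(0,2]$, we have $|1-p/2|=1-p/2<1$, and hence $\epsilon_k\le(1-p/2)^k\epsilon$. The case $p=2$ is trivial: then $F(w)=\sigma(\mA)+z$ does not depend on $w$, and $w^{(1)}$ is already the fixed point.

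\textbf{Step lengths.} Directly from the definition, $w^{(k+1)}/w^{(k)}=\bigl((\sigma(\mW^{(k)\,1/2-1/p}\mA)+z)/w^{(k)}\bigr)^{p/2}$, so $w^{(k+1)}\approx_{(p/2)\epsilon_k}w^{(k)}$. The relation $\approx$ corresponds to $\ell_\infty$-distance between logarithms, and this distance satisfies the triangle inequality. Therefore $(\log w^{(k)})_k$ is Cauchy and converges to some $\log w^*$, and
\[
\|\log w-\log w^*\|_\infty\le\sum_{k\ge0}\frac p2\,\epsilon\,(1-p/2)^k=\frac p2\cdot\frac{\epsilon}{p/2}=\epsilon .
\]
This exact cancellation is why the conclusion keeps the same $\epsilon$ rather than a larger constant multiple.

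\textbf{Identifying the limit.} The map $w\mapsto\sigma(\mW^{1/2-1/p}\mA)$ is continuous on $\R^m_{>0}$, since the leverage scores of a full-column-rank rescaling vary continuously. Also $z>0$ keeps every iterate bounded away from $0$. Hence $\epsilon_k\to0$ implies $w^*=\sigma(\mW^{*\,1/2-1/p}\mA)+z$, so $w^*$ is a regularized Lewis weight vector. The main obstacle is concluding $w^*=\tau(\mA)$, which needs uniqueness of the solution. I would take this from the standard fact (\cite{ls14,CohenP15}) that regularized Lewis weights are the unique optimizer of a strictly convex program, which is exactly what makes $\tau(\mA)$ well defined in \Cref{def:lewisweights}. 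Given uniqueness, $w\approx_\epsilon w^*=\tau(\mA)$, which is the claim.
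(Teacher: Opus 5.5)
The paper does not prove this lemma: it imports it from \cite{bll+21} and uses it only as a black box at the end of the proof of \Cref{thm:compute_tau}, so there is no in-paper argument to compare against.

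Your proof is correct and self-contained. The per-step contraction $\epsilon_{k+1}\le(1-p/2)\epsilon_k$ from \Cref{lem:lemma_5.3_bll+21} with exact leverage scores is right, the step lengths are indeed $\frac p2\epsilon_k$, and the geometric sum collapses to exactly $\epsilon$. You are also right to read \Cref{lem:lemma_5.3_bll+21} with $+z$ in its conclusion. As literally transcribed in the paper, it would be false already at an exact fixed point, where $w'=w$ and $\epsilon,\epsilon_\sigma$ can be taken arbitrarily small.

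The one ingredient you outsource, uniqueness, needs no convex-programming fact. Write $F(u)=\bigl(q(u)+u^{2/p-1}z\bigr)^{p/2}$ with $q(u)_e=\unit_e^\top\mA(\mA^\top\mU^{1-2/p}\mA)^{-1}\mA^\top\unit_e$. Suppose $u\approx_\delta v$.
\begin{itemize}
\item Then $u^{1-2/p}\approx_{(2/p-1)\delta}v^{1-2/p}$, so by Loewner monotonicity $q(u)\approx_{(2/p-1)\delta}q(v)$.
\item Likewise $u^{2/p-1}z\approx_{(2/p-1)\delta}v^{2/p-1}z$.
\item Raising the sum to the power $p/2$ gives $F(u)\approx_{(1-p/2)\delta}F(v)$, for \emph{arbitrary} $u,v\in\R^m_{>0}$.
\end{itemize}
So $F$ is a $(1-p/2)$-contraction in the complete metric $\|\log u-\log v\|_\infty$, and Banach's theorem gives existence and uniqueness of $\tau(\mA)$ at once.

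The same fact also shortcuts your iteration. Let $\delta:=\|\log w-\log\tau(\mA)\|_\infty$. Then
$\delta\le\|\log w-\log F(w)\|_\infty+\|\log F(w)-\log F(\tau(\mA))\|_\infty\le\frac p2\epsilon+(1-\frac p2)\delta$, hence $\delta\le\epsilon$. This needs no limits and no continuity of leverage scores.

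Your route has a compensating merit: it reuses exactly the lemma the paper already states, and it exhibits $\tau(\mA)$ as the limit of the same update the algorithm performs.
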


\begin{proof}[Proof of \Cref{thm:compute_tau}]%
    The algorithm starts by picking some initial candidate for $\otau$, and then iteratively refines it. Let us start with the refinement procedure.
    \paragraph{Refinement}
    Suppose we already have some candidate $\otau\approx\tau(x)$ but with an approximation factor that is worse than $\epsilon_\tau$. The general approach is to iteratively improve the approximation factor. If for $\epsilon_\sigma\le\epsilon_\tau/10$ we had a $\osigma\approx_{\epsilon_\sigma}\sigma(\omT^{1/2-1/p}\Phi''(x)\mA)$, then we can then perform the following update,
    \begin{equation}\label{eq:tau_update}
        \otau\gets(\otau^{2/p-1}(\osigma_e+\frac{n}{m}))^{p/2},
    \end{equation}
    which by \Cref{lem:lemma_5.3_bll+21} would decrease the approximation factor by about a $1-2/p$ factor.
    
    It remains to find such an $\osigma$. To do so, we take, for each edge $e$, 
    \[
        \osigma_e=\|\mR \omT^{1/2-1/p}\Phi''(x)^{-1/2} \mA \mM^{-1}\mA^\top\Phi''(x)^{-1/2}\omT^{1/2-1/p}\chi_e\|_2^2, 
    \]
    where $\mR$ is a random sparse Johnson-Lindenstrauss matrix giving an $\epsilon_\sigma/2$ approximation, and $\mM$ is a $(1\pm \epsilon_\sigma/2)$-spectral sparsifier for $\mA^\top\Phi''(x)^{-1/2}\omT^{1-2/p} (x)^{-1/2} \mA$, constructed by \Cref{lem:spectralsparsifier}. In each iteration, we store $\mP:=\mR \omT^{1/2-1/p}\Phi''(x)^{-1/2} \mA \mM^{-1}$, which is a $O((\epsilon_{\sigma}/2)^{-2} \log n)\times n$ matrix. 

    For convenient of writing proofs, we define a local variable $\mB$ (a weighted version of $\mA$, only used in this proof) and the respective $\mQ := \mB^\top \mB$,
    \begin{align*}
        \mB := & ~ \omT^{1/2-1/p}\Phi''(x)^{-1/2}\mA, \\
        \mQ := & ~  \mA^\top\Phi''(x)^{-1/2}\omT^{1-2/p}\Phi''(x)^{-1/2}\mA .
    \end{align*}

    To see why matrix $\mP$ allows us to represent the leverage scores $\osigma$ in small space, recall we wanted $\osigma\approx_{\epsilon_\tau/10}\sigma(\omT^{1/2-1/p}\Phi''(x)\mA)$. Indeed,
    \begin{align*}
        & ~ ~~ ~ \sigma(\omT^{1/2-1/p}\Phi''(x)^{-1/2}\mA)_{e,e}\\
        & ~ = \sigma( \mB )_{e,e} \\
        & ~ = \chi_e^\top \mB (\mB^\top \mB)^{-1} \mB^\top \chi_e \\ %
        & ~ = \chi_e^\top \mB \mQ^{-1} \mB^\top \chi_e \\
        & ~ = \chi_e^\top \mB \mQ^{-1} \mQ \mQ^{-1} \mB^\top \chi_e \\
        & ~ = \chi_e^\top \mB \mQ^{-1} \mB^\top \mB \mQ^{-1} \mB^\top \chi_e \\
        & ~ = \| \mB \mQ^{-1} \mB^\top \chi_e \|_2^2 \\
        & ~ \approx_{\epsilon_\sigma/2} \| \mB \mM^{-1} \mB^\top \chi_e \|_2^2 \\ %
        & ~ \approx_{\epsilon_\sigma/2}  \|\mR \omT^{1/2-1/p}\Phi''(x)^{-1/2} \mA\mM^{-1} \mB^\top \chi_e\|_2^2.
    \end{align*}
    where the first step follows from definition of $\mB$, the second step follows from definition of leverage score, the third step follows from $\mQ = \mB^\top \mB$, the fourth step follows from $\mQ \mQ^{-1} = I $,  the fifth step follows from definition of the $\mQ = \mB^\top \mB$ norm, the seventh step follows from definition of $\ell_2$ norm, the eighth follows from $\mM$ being a $(1\pm \epsilon_\sigma/2)$-spectral sparsifier of $\mQ$, the last step follows from JL property.

    \paragraph{Initial $\otau$ for refinement}
    The above procedure allows us to improve some $\otau$, but we need a starting candidate. Since $1+n/m\ge\tau(x)\ge\frac{n}{m}$, we can simply take $\otau=1$, which is a $(m/n)$-approximation. Then by \Cref{lem:lemma_5.3_bll+21}, to get a $\otau\approx_{\epsilon_\tau}\sigma(\omT^{1/2-1/p}\mA)+\frac{m}{n}$, we simply need to calculate the update
    \eqref{eq:tau_update} a total of $\log_{|1-p/2|}((m/n)(\epsilon_\tau/2)^{-1}) = \tilde O(1)$ times (where our approximation of the leverage scores $\epsilon_\sigma$ is negligibly small). Finally, by \Cref{lem:lemma_5.5_bll+21}, we have $\otau\approx_{\epsilon_\tau}\tau(x)$.

    \paragraph{Query Complexity}
    We now analyze the query complexity per edge. Throughout this process we store all the $\mP$s that we have used to get from 1 to our final $\otau$. For every edge $e$, we only need to calculate $\|\mP\mA\Phi''(x)^{-1/2}_{ee}\omT'^{1/2-1/p}_{ee}\chi_e\|_2^2$ where $\otau'$ is the previous $\otau$ calculated. We can temporarily save $\Phi''(x)^{-1/2}_{ee}$ when calculating this edge, so to calculate this we only need to track back to every $\otau_e$ we have seen, of which there are $\tilde O(1)$ of. This takes $T_x+\tilde O(1)$ time.

    \paragraph{Construction Complexity}
    We now analyze resource usage. We take $\tilde O(1)$ steps. In each step we use 1 pass through the stream, and query the previous $\otau$, which as above takes $\tilde O(T_x)$ time per edge and $\tilde O(mT_x)$ time total. We solve for $\mM$ using \Cref{lem:spectralsparsifier}, where querying the weight on each edge takes time $T_x+\tilde O(1)$, for a total of $\tilde O(mT_x)$. Thus in total we use $\tilde O(1)$ passes through the stream and $\tilde O(mT_x+\epsilon_\tau^{-2})$ runtime. For storage, we store $\tilde O(1)$ variations of our matrix $\mP$, which takes $\tilde O(n\epsilon_\tau^{-2})$ space, completing the proof. 
\end{proof}

\subsection{Implicit Representation of \texorpdfstring{$g$}{}}
\label{sec:ipm:gradient}
Our next step is to find the direction we should move our new point $x$ to, which is given by a gradient in the direction that maximizes centrality. Recall that centrality is defined as
\[
    v=\frac{s+\mu\tau(x)\phi'(x)}{\mu\tau(x)\sqrt{\phi''(x)}}.
\]

The goal of this subsection is to show the following.

\begin{theorem}\label{thm:compute_g}
    Given a $t-1$-transcript, within one pass over the input graph, we compute and store an exact $g=-\gamma\nabla\Psi(\ov)^{\flat(\otau)}$ where \[\nabla\Psi(\ov)^{\flat(\otau)}:=\argmax_{\|h\|_{\otau+\infty}=1}h^\top\nabla\Psi(\ov)\] 
    \[\|\ov-v^t\|_\infty\le\gamma/20\]
    \[\otau \approx_\epsilon \tau^t\]
    where $\Psi(\ov)=\sum_{e=1}^m\cosh(\lambda\ov_e)$ in implicit representation. This takes $\tilde O(\log m)$ space, $\tilde O(m \cT(t-1))$ time.
    
    Given edge $e$, querying any entry $g_e$ takes ${\cal T}(t-1)+\tilde O(1)$ time.
\end{theorem}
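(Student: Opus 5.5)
We first fix $\ov$ and $\otau$ by discretization, so that both can be computed locally from any single edge. Given edge $e$ with $(c_e,u_e)$, the $(t-1)$-transcript lets us query $x^{t-1}_e$ in time $\cT(t-1)$ and $s^{t-1}_e$ in $O(1)$. We keep $x^{t-1}_e$ temporarily, so that $\otau^t_e$ then costs only $\tilde O(1)$ more (\Cref{thm:compute_tau}). From these we compute the exact $v^t_e$ and round it to the nearest multiple of $\delta:=\gamma/40$, which gives $\ov_e$ with $\|\ov-v^t\|_\infty\le\gamma/20$. Similarly, we round $\otau_e$ to the nearest power of $(1+\delta')$ for some small $\delta'=\Theta(\epsilon)$, so that the rounded value still satisfies $\otau\approx_\epsilon\tau^t$; this absorbs the slack between $\epsilon_\tau$ and $\epsilon$.

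Next we bound the number of distinct classes. We may assume $\|v^t\|_\infty\le\epsilon\le 1$, since the iterate is $\epsilon$-centered by \Cref{lem:ipm}; clamping to $[-1,1]$ keeps the rounding error bound. Then $\ov_e$ takes $O(1/\gamma)=\tilde O(1)$ values. Since $n/m\le\tau\le 1+n/m$, the rounded $\otau_e$ takes $O(\log(m/n)/\delta')=\tilde O(1)$ values. So each edge falls into one of $K=\tilde O(1)$ classes indexed by pairs $(a_i,b_i)$. In one pass over the stream we compute, for each class $i$, the count $k_i$ of edges in it. This uses $\tilde O(1)$ space and $\tilde O(m\cT(t-1))$ time.

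The main step is to show that the maximizer of $h^\top\nabla\Psi(\ov)$ over $\|h\|_{\otau+\infty}=1$ can be taken constant on each class. The objective is linear, $\nabla\Psi(\ov)_e=\lambda\sinh(\lambda\ov_e)$ depends only on the class, and the norm is convex and invariant under permuting coordinates within a class (both $\|\cdot\|_\infty$ and the weighted $\ell_2$ part are). So averaging any maximizer $h$ over each class gives $\tilde h$ with the same objective value and $\|\tilde h\|_{\otau+\infty}\le 1$. Rescaling $\tilde h$ to unit norm only increases a nonnegative objective, so some maximizer is class-constant. Its entries $\eta_i$ then solve the $K$-dimensional convex problem: maximize $\sum_i k_i\eta_i\lambda\sinh(\lambda a_i)$ subject to
\[
\max_i|\eta_i| + C\log(4m/n)\Bigl(\sum_i k_i b_i\eta_i^2\Bigr)^{1/2}\le 1.
\]
We solve this locally to arbitrary precision, or exactly via the structure noted in \cite{bll+21}: the optimum thresholds $|\eta_i|$ at a common cap and is proportional to the gradient below it, so scanning the $K$ candidate thresholds suffices. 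We break ties deterministically so that $g$ is well defined. We then store the $K$ values $-\gamma\eta_i$ together with the rounding grid; this is the $\tilde O(\log m)$ space bound.

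A query for $g_e$ recomputes $x^{t-1}_e$, $s^{t-1}_e$, $\otau^t_e$ and $v^t_e$, rounds them to find the class $i$, and returns $-\gamma\eta_i$, in time $\cT(t-1)+\tilde O(1)$. Queries during later passes must reproduce exactly the same rounding, which holds because every value is determined by the stored transcript and the randomness already fixed in it. The main obstacle is the symmetrization step together with the exactness claim: one must check that the class-restricted problem is solved exactly, or accurately enough for \cite{bll+21}'s robust analysis, and that the extra rounding of $\otau$ stays within the $\approx_\epsilon$ tolerance.
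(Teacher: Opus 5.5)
Your proposal is correct and follows essentially the paper's route: round $v$ and $\otau$ onto grids giving $\tilde O(1)$ classes, count class sizes in one pass, solve the reduced $K$-dimensional problem exactly and store the solution, and answer a query by recomputing the edge's class. The one difference is that you justify the class-constant reduction by your own symmetrization argument, where the paper cites Lemma 7.5 and Corollary 7.4 of \cite{bln+20}.

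One small correction: the quantity you compute from $x^{t-1}_e,s^{t-1}_e,\otau^t_e$ uses $\otau^t$ in place of the exact Lewis weights in the $v^t$ of \Cref{alg:short_step}, so it is not the exact $v^t_e$. The bound $\|\ov-v^t\|_\infty\le\gamma/20$ therefore also needs the perturbation estimate coming from $\otau^t\approx_{\gamma/60}\tau(x^{t-1})$. The $\tau$-dependent term $s/(\mu\tau\sqrt{\phi''})$ is $O(1)$ because $|v^t|\le\epsilon$ and $|\phi'|\le\sqrt{\phi''}$. Your $\gamma/80$ rounding error leaves room for this, and the paper likewise asserts this step in a single line.
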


Note that although $g$ is exact for $\ov$ and $\otau$, $\ov$ and $\otau$ are not. We only require $\|\ov-v\|_\infty\le\gamma/20$ and $\otau\approx_\epsilon \tau$. Thus to store $g$ efficiently we define the following approximations. Specifically, the goal of these is to project $g$ onto a smaller polylog dimensional space.

\begin{definition}
We use $\round_{1+\epsilon'}(z)$ to denote the closest $(1+\epsilon')^i$ for $z$, we use $\round_{\epsilon'}(z)$ to denote the closest $j \cdot \epsilon'$ for $z$. 
\end{definition}

Our vector $g^t$ will be computed using the following \Cref{def:wt_g}.

\begin{definition}\label{def:wt_g}
Given $\epsilon',\mu > 0$, $x^{t-1},s^{t-1},\otau^t\in\R^m_{>0}$
we define $\tilde g^t \in \R^m$ as:
\begin{align*}
    \tilde g^t := (\nabla \Psi(\round_{\epsilon'}(\frac{s^{t-1}+\mu \otau^t\phi'(x^{t-1})}{\mu \otau^t\sqrt{\phi''(x^{t-1})}})) )^{\flat(\round_{1+\epsilon'}(\otau^t))}.
\end{align*}
\end{definition}

To calculate this, we use the following theorem.

\begin{restatable*}{theorem}{thmGradient}\label{thm:g_construction}
    Given an $\epsilon > 0$, $f:\R\to\R$, a directed weighted graph $G=(V,E)$ given as stream, and implicit vectors $v,\tau \in\R^E$
    satisfying $n/m \le \tau \le 1$ and $-1\le v\le1$
    such that for any given $e\in E$ we can query $v_e$ and $\tau_e$
    in $T$ time.
    
    Then we can construct an implicit representation of target vector $g=(f(\round_{\epsilon'}(v))^{\flat(\round_{1+\epsilon'}\tau)}$.
    Our construction
    takes $O(\epsilon'^{-2} \log m)$ space and $O(mT)$ time.
    
    Given edge $e$, querying any entry $g_e$ takes $T+O(1)$ time.%
\end{restatable*}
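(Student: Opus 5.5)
The plan is to exploit that after rounding, the input to the $\flat$-maximization takes only $O(\epsilon'^{-2}\log m)$ distinct ``types.'' The maximizer can then be chosen constant on each type class, so it is described by one number per class.

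\textbf{Step 1: counting pass.} Since $-1\le v\le 1$, $\round_{\epsilon'}(v_e)$ takes at most $2/\epsilon'+1$ values. Since $n/m\le\tau\le 1$, $\round_{1+\epsilon'}(\tau_e)$ takes at most $\log_{1+\epsilon'}(m/n)+2=O(\epsilon'^{-1}\log m)$ values. Hence the pair $(\round_{\epsilon'}(v_e),\round_{1+\epsilon'}(\tau_e))$ lies in a set $K$ of $O(\epsilon'^{-2}\log m)$ types. In one pass over the stream we query $v_e,\tau_e$ for each edge in $T$ time and increment a counter $k_{(a,b)}$ for its type $(a,b)$. This costs $O(mT)$ time and $O(|K|)$ space.

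\textbf{Step 2: reduce to a small problem.} Write $c_e:=f(\round_{\epsilon'}(v_e))$ and $\tilde\tau_e:=\round_{1+\epsilon'}(\tau_e)$; both depend only on the type of $e$. The problem $\max_{\|h\|_{\tilde\tau+\infty}=1}h^\top c$ has a convex feasible set and a linear objective. Both the norm and the objective are invariant under permuting coordinates of equal type. So for any maximizer $h$, averaging $h$ over all such permutations gives a vector that is still feasible (by convexity of the norm) and has the same objective value. That averaged vector is constant on each type class, so we may restrict to $h_e=\hat h_{\kappa(e)}$ with $\hat h\in\R^K$. The norm then becomes
\[
\max_{\kappa:k_\kappa>0}|\hat h_\kappa| + C\log(4m/n)\Bigl(\sum_{\kappa} k_\kappa \tilde\tau_\kappa \hat h_\kappa^2\Bigr)^{1/2},
\]
and the objective becomes $\sum_\kappa k_\kappa c_\kappa\hat h_\kappa$. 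All of these are computable from the counts. We solve this $|K|$-dimensional convex program locally, either exactly via the known structure of the $\flat$ operator (sort types by $|c_\kappa|/\tilde\tau_\kappa$ and threshold) or by a generic solver, and store $\hat h$.

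\textbf{Step 3: queries.} Given edge $e$, we query $v_e,\tau_e$ in $T$ time, round them, and look up $\hat h$ at the resulting type in $O(1)$ time using a hash table or array indexed by type. This gives $T+O(1)$ query time.

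\textbf{Main obstacle.} The step needing the most care is Step 2's symmetrization, since the maximizer may not be unique. The implicit $g$ must be a well-defined vector equal to some maximizer, and the averaging argument must show that the class-constant restriction loses nothing. I will also state that a fixed deterministic tie-breaking in the small solver makes the returned $g$ well defined. Exactness of the small solve (with respect to the rounded inputs) is what makes Theorem~\ref{thm:compute_g} claim an exact $g$. If an exact closed form is needed, I would derive it from the KKT conditions of the mixed norm, as in \cite{bll+21}.
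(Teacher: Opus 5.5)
Your proposal is correct and follows the paper's route. You round $(v_e,\tau_e)$ into $O(\epsilon'^{-2}\log m)$ types, count class sizes in one pass, solve the compressed frequency-weighted problem locally, and answer a query by recomputing the type of $e$ and looking up the stored value. The one difference is that you justify the compression with a symmetrization argument instead of citing \cite[Lemma 7.5]{bln+20}; that argument should be run on the relaxed constraint $\|h\|_{\tilde\tau+\infty}\le 1$, since the unit sphere itself is not convex. Your sort-by-$|c_\kappa|/\tilde\tau_\kappa$-and-threshold solver is exactly what \cite[Corollary 7.4]{bln+20} supplies in $O(k\log k)$ time.
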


With these, we can now prove our main result for this section.

\begin{proof}[Proof of \Cref{thm:compute_g}]
    We want to compute a $g = (\nabla\Psi(\ov))^{\flat(\otau)}$ where $\|\ov-v^t\|_\infty \le \gamma/20$ and $\otau\approx_\epsilon \tau^t$.
    We first claim that $\tilde g^t$ as defined in \Cref{def:wt_g} is a suitable choice for this $g$, when using $\epsilon'=\gamma/60$. 
    For this choice of $g$, we have $\ov=\round_{\epsilon'}(\frac{s^{t-1}+\mu \otau^t\phi'(x^{t-1})}{\mu \otau^t\sqrt{\phi''(x^{t-1})}})$. This satisfies $\|\ov-v^t\|_\infty \le \gamma/20$, because $x^{t-1}$ and $s^{t-1}$ are exact and $\otau^t\approx_{\gamma/60}\tau(x^{t-1})$ by \Cref{thm:compute_tau}.
    
    We want to construct this $g$ via \Cref{thm:g_construction} using $f=\nabla\Psi$, which requires 
    $$-1\le \underbrace{\frac{s^{t-1}+\mu \otau^t\phi'(x^{t-1})}{\mu \otau^t\sqrt{\phi''(x^{t-1})}}}_{=:v} \le1.$$ %
    This is satisfied by centrality guarantee of the IPM (Lemma~\ref{lem:ipm}).

    We now analyze the runtime of \Cref{thm:g_construction}. To query $v_e$ requires $x^{t-1}_e$, $s^{t-1}_e$, $\otau^{t}_e$, and $O(1)$ calculation time. The first can be queried in time ${\cal T}(t-1) $, while the second is $\tilde O(1)$. We can temporarily save the value of $x^{t-1}_e$, so that when we use \Cref{thm:compute_tau}, $\tau_e$ requires time $\tilde O(1)$ as our input parameter $T_x$ would be constant time. 
    Thus parameter $T$ in \Cref{thm:g_construction} is $O(\cT(t-1))$, which leads to a total runtime of $ O(m {\cal T}( t-1) )$. 
    Querying $g_e$ for an edge $e$ takes time ${\cal T}(t-1)+ O(1)$ by \Cref{thm:g_construction}. 
\end{proof}

\subsection{Implicit Representation of \texorpdfstring{$x$}{}, \texorpdfstring{$s$}{}}
\label{sec:ipm:primal}
The main result of this subsection is the following lemma, which states that we have access to $x^t_e$ and $s^t_e$ for any given edge $e$. Note that we do not have an update time for these, as we do not store anything that needs to be updated; these are computed ``on the fly.''
\begin{lemma}\label{cor:access_s}
    Given the $t-1$-transcript of the \textsc{IPM}, and an edge $e=(u,v)$ and its cost $c_e$, we can query $s^t_e$ explicitly in $O(1)$ time. 
\end{lemma}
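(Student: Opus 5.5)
The plan is to unroll the recursion for $s$ and reduce the query to reading two stored numbers. In \Cref{alg:short_step} the dual iterate is updated by $s^{\ell} = s^{\ell-1} + \delta_s^{\ell}$ with $\delta_s^{\ell} = \mu^{\ell}\mA\delta_y^{\ell}$ (up to the indexing of $\mu$ used in \textsc{ShortStep}). By dual feasibility of the initial point (\Cref{def:epscentered}, item 2), $s^{1} = c - \mA z^{1}$ for some $z^1\in\R^n$, and $z^1$ can be taken as part of the stored initial data. Inducting on $\ell$ then gives
\[
s^{\ell} = c - \mA y^{\ell}, \qquad y^{\ell} := z^{1} - \sum_{k\le \ell} \mu^{k}\delta_y^{k},
\]
possibly after a sign convention. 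The point is that the transcript stores the explicit vector $y^\ell\in\R^n$ with exactly this scaling. This uses only linearity of the update, and it is where I would check the indices carefully. If the stored $y^\ell$ were instead the unscaled $\sum\delta_y^k$, I would redefine it as the running scaled sum, which is computed once per iteration in $O(n)$ time and does not change the transcript's size.

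Given this, a query works as follows. For $e=(u,v)$ with cost $c_e$ supplied at query time, the row $\chi_e^\top\mA$ has only two nonzero entries, $\pm1$ at $u$ and $v$. Hence
\[
s^{t}_e = c_e - (y^{t}_v - y^{t}_u),
\]
with the orientation fixed by the incidence convention. The vector $y^t$ is explicit in memory, so we read two entries, subtract, and add $c_e$, which takes $O(1)$ time. Note that $x^t$, $\tau$, and $g$ are never needed, so this query does not invoke the recursive $\cT(\cdot)$ cost.

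The main obstacle is bookkeeping rather than mathematics. We must confirm that the stored $y^t$ is exactly the potential for which $s^t = c-\mA y^t$ holds, including the initial offset $z^1$ and the $\mu$ scaling in $\delta_s$. For the initial point from the star construction, $s^{1}$ is itself given as $c-\mA z^1$, so storing $z^1$ costs $O(n)$ space and no extra time. Once this identity is verified, the $O(1)$ query bound is immediate.
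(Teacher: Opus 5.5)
Your proposal is correct and is essentially the paper's argument: keep the dual potential $y^t\in\R^n$ explicitly in memory and evaluate $s^t_e = c_e - (\mA y^t)_e$ from the two endpoint entries in $O(1)$ time, without touching $x$, $\tau$, or $g$. Your extra bookkeeping is warranted, because the paper stores $y^t = y^{t-1}+\delta_y^t$ while \textsc{ShortStep} updates $s$ by $\mu\mA\delta_y$, so the potential that actually satisfies $\mA y^t + s^t = c$ is the $\mu$-scaled, sign-adjusted running sum $z^1 - \sum_k \mu^k\delta_y^k$ you describe, a point the paper's one-line proof glosses over.
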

\begin{proof}
    By \Cref{def:objects} of the $t$-transcript, we have $y^1,...y^t$ explicitly in memory.
    The slack is defined as $A y^t + s^t = c$, so for edge $e=(u,v)$ we have $s^t_e = c_e - (y^t_u - y^t_v)$ for all $t=1,...,t$.
\end{proof}
\begin{lemma}\label{cor:access_x}
    Given the $t-1$-transcript of the \textsc{IPM}, and an edge $e=(u,v)$ and its cost $c_e$, we can query $x^t_e$ explicitly in ${\cal T}(t-1)+\tilde O(1)$ time. 
\end{lemma}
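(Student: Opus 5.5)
We compute $x^t_e$ directly from the update rule of \textsc{ShortStep} in \Cref{alg:short_step}:
\[
x^t_e = x^{t-1}_e + \phi''_e(x^{t-1}_e)^{-1/2} g^t_e - (\otau^t_e)^{-1}\phi''_e(x^{t-1}_e)^{-1}\bigl(\mA(\delta_y^t+\delta_c^t)\bigr)_e .
\]
For $e=(u,v)$, the last factor is $(\delta_y^t+\delta_c^t)_u-(\delta_y^t+\delta_c^t)_v$ (up to the sign convention of $\mA$). The vectors $\delta_y^t,\delta_c^t\in\R^n$ are stored explicitly in the transcript, so this term costs $O(1)$. The capacity $u_e$ (and $l_e$) accompanies the query, so once $x^{t-1}_e$ is known, the values $\phi'_e(x^{t-1}_e)$ and $\phi''_e(x^{t-1}_e)$ also cost $O(1)$. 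The query therefore reduces to obtaining three numbers: $x^{t-1}_e$, $\otau^t_e$ and $g^t_e$.

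First, we query $x^{t-1}_e$ from the implicit representation in the $(t-1)$-transcript, at cost $\cT(t-1)$ by \Cref{def:objects}. We cache this value. Second, we query $\otau^t_e$ via the representation of \Cref{thm:compute_tau}, built for the point $x^{t-1}$. With $x^{t-1}_e$ cached, its input parameter is effectively $T_x=O(1)$. We then only evaluate the $\tilde O(1)$ stored sketch matrices $\mP$ on $\unit_u-\unit_v$ and apply the recursion \eqref{eq:tau_update} $\tilde O(1)$ times, for $\tilde O(1)$ time in total. Third, we query $g^t_e$ via \Cref{thm:compute_g}. Internally this recomputes $\ov_e$ from $s^{t-1}_e$ (in $O(1)$ time by \Cref{cor:access_s}), from $\otau^t_e$, and from $x^{t-1}_e$. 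It then rounds $\ov_e$ and $\otau^t_e$ and looks up the corresponding entry among the $\tilde O(1)$ stored distinct values. Since $x^{t-1}_e$ is cached, this also costs only $\tilde O(1)$ beyond the first query, not another $\cT(t-1)$.

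Summing the costs gives $\cT(t-1)+\tilde O(1)$. The main point to check is that no query triggers a second traversal of the recursion. Each of $\otau^t_e$ and $g^t_e$ depends on the $(t-1)$-level data only through $x^{t-1}_e$ and $s^{t-1}_e$, consistent with the dependency diagram in \Cref{fig:varmap}. We must therefore argue that the representations of \Cref{thm:compute_tau,thm:compute_g} can accept the cached value instead of re-querying. This is the step I expect to require care: \Cref{thm:compute_g} states its query time as $\cT(t-1)+\tilde O(1)$, so we should make explicit that that cost is precisely the single query of $x^{t-1}_e$ that we share. Finally, the base case $t=1$ uses $x^1_e$ from the initial point in time $T_0$, matching \Cref{def:objects}.
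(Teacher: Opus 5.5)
Your proposal is correct and follows the paper's proof essentially verbatim. You expand $x^t_e = x^{t-1}_e + (\delta_x^t)_e$ using the explicitly stored $\delta_y^t,\delta_c^t$, pay $\cT(t-1)$ once to obtain $x^{t-1}_e$, and reuse that cached value so that $\otau^t_e$ and $g^t_e$ each cost only $\tilde O(1)$; this is exactly the paper's ``temporarily save $x^{t-1}_e$'' argument, including your observation that the $\cT(t-1)$ term in \Cref{thm:compute_g} is this shared query.
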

\begin{proof}

    To find $x^t$, recall that we store
$\delta_c^1, \delta_c^2, \cdots, \delta_c^{t}$
and
$\delta_y^1, \delta_y^2, \cdots, \delta_y^{t}$
Note that
\begin{align*}
(\delta_x^{t})_e = (\Phi''(x^{t-1})^{-\frac{1}{2}}g^{t}_e-(\omT^{t})^{-1}\Phi''(x^{t-1})^{-1}\mA(\delta_y^{t}+\delta_c^{t}) )_e
\end{align*}

\noindent 
Using the object $(\delta_x^t)_e$ and given $x_e^{t-1}$, we can compute
$
(x^t)_e = x^{t-1}_e+  (\delta_x^t)_e
$. 
Thus, we complete the proof.

The above equation requires querying objects $x^{t-1}_e$, $g^{t}_e$, $\tau^{t}$. Querying $x^{t-1}$ takes ${\cal T}(t-1)$ time, and by temporarily saving this value we can query both $g^t_e$ and $\tau^t$ in $\tilde O(1)$ time. Thus the total runtime for every edge query is ${\cal T}( t-1) +\tilde O(1)$.
\end{proof}

\subsection{Constructing matrix \texorpdfstring{$\mH$}{}, dual \texorpdfstring{$\delta_y^{t}$}{} and \texorpdfstring{$y^t$}{}, and artificial variable \texorpdfstring{$\delta_c^t$}{}}
\label{sec:ipm:dual}
We now describe how we can construct the $(t+1)$-transcript when given the $t$-transcript.
The following lemma describes all the remaining necessary values.

\begin{lemma}\label{lem:induction_step_1}
Given the $t$-transcript of the \textsc{IPM} (\Cref{def:objects}),
then we can compute the following objects w.h.p.~in $O(1)$ passes (with polynomial time).
 Part 1. $\mH^{t}$, Part 2. $\delta_y^{t}$, Part 3. $\delta_c^{t}$, Part 4. $y^{t}$, Part 5. $\mu^{t}$. 
These additional values can be stored in $\tO(n)$ additional space.
\end{lemma}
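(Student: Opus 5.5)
The plan is to compute the five objects in the order dictated by the dependency diagram (\Cref{fig:varmap}). Throughout, each pass queries the needed edge quantities via the transcript as edges arrive in the stream. The key observation is that when edge $e=(u,v)$ together with $c_e,u_e$ is read, we can compute $x^{t-1}_e$ by \Cref{cor:access_x}. Caching that value, we obtain $\otau^t_e$ in $\tilde O(1)$ time by \Cref{thm:compute_tau} and $g^t_e$ by \Cref{thm:compute_g}. Hence every row of $\Phi''(x^{t-1})^{-1/2}\mA$, $(\omT^t)^{-1}\Phi''(x^{t-1})^{-1}$ and $\Phi''(x^{t-1})^{-1/2}g^t$ is available on the fly.

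\textbf{Part 1.} The matrix $\mA^\top(\omT^t)^{-1}\Phi''(x^{t-1})^{-1}\mA$ is a weighted Laplacian with edge weights $w_e = 1/(\otau^t_e\,\phi''_e(x^{t-1}_e))$, and each $w_e$ is queryable as above. I would invoke \Cref{lem:spectralsparsifier} with accuracy $\gamma$. In one pass this gives $\mH^t$ with $\tilde O(n/\gamma^2)=\tilde O(n)$ edges, which we store explicitly.

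\textbf{Parts 2 and 3.} In a second pass we accumulate the $n$-dimensional vectors $d:=\mA^\top\Phi''(x^{t-1})^{-1/2}g^t=\sum_e \phi''_e(x^{t-1}_e)^{-1/2}g^t_e\,(\unit_v-\unit_u)$ and $\mA^\top x^{t-1}=\sum_e x^{t-1}_e(\unit_v-\unit_u)$. Each is a sum of per-edge contributions, so $O(n)$ words suffice. After the pass we solve $\delta_y^t=(\mH^t)^{-1}d$ and $\delta_c^t=(\mH^t)^{-1}(\mA^\top x^{t-1}-b)$ offline. Since $\mH^t$ is an explicitly stored sparse Laplacian, this takes $\tilde O(n)$ space and polynomial time, using either a Laplacian solver or exact pseudo-inversion. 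Both passes could even be merged into one, but $O(1)$ passes is all that is claimed.

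\textbf{Parts 4 and 5.} These are immediate: $y^t = y^{t-1}+\mu^{t-1}\delta_y^t$, matching $\delta_s^t=\mu\mA\delta_y^t$ and $s=c-\mA y$ up to sign convention, and $\mu^t=(1-r)\mu^{t-1}$. All new stored objects are $n$-dimensional vectors, scalars, or the sparsifier, so the additional space is $\tilde O(n)$. The failure probability comes only from \Cref{lem:spectralsparsifier} and the sketches, so the result holds w.h.p.

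The main obstacle is the demand vector $\mA^\top x^{t-1}$. The spectral approximation means $x^{t-1}$ is not exactly feasible, so this vector must be computed exactly from the implicit representation. That requires the streamed per-edge $x^{t-1}_e$ values plus the stored star edges from the initial point. One also has to check that the nested query times from \Cref{cor:access_x} stay within the claimed polynomial time bound.
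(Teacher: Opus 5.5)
Your proposal is correct and follows essentially the same route as the paper: one pass with \Cref{lem:spectralsparsifier} builds $\mH^t$ from edge weights $1/(\otau^t_e\,\phi''_e(x^{t-1}_e))$ queried via the transcript, a pass accumulates $\mA^\top\Phi''(x^{t-1})^{-1/2}g^t$ and $\mA^\top x^{t-1}-b$, Laplacian solves on $\mH^t$ give $\delta_y^t,\delta_c^t$, and $y^t$ and $\mu^t$ are trivial explicit updates. Your $\mu^{t-1}$-scaled, sign-adjusted update for $y^t$ differs from the paper's literal $y^t=y^{t-1}+\delta_y^t$, but it is the version consistent with $\delta_s^t=\mu\mA\delta_y^t$ and $\mA y+s=c$, and either way it costs only $O(n)$ space.
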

\begin{proof}

{\bf Proof of Part 1.} 
To construct $\mH^t$. %
We use \Cref{lem:spectralsparsifier}, which tells us we need one pass over the input. For each edge, we need to find the edge weight $\phi_e(x^{t-1}_e)^{-1}(\tau^t_e)^{-1}$. 
These are given via \Cref{cor:access_x} and \Cref{thm:compute_tau}. The former requires time $\mathcal{T}(t-1)$, and the latter $\tilde O(1)$ if we temporarily save our value of $x^{t-1}_e$. Thus we can compute this in time $O(m {\cal T}(t-1) )$.

{\bf Proof of Part 2.} 
$
\delta_y^{t} = (\mH)^{-1} \mA^\top (\Phi''(x^{t-1})^{-\frac{1}{2}}) g^t
$. 
For this, first compute $\mA^\top (\Phi''(x)^{-\frac{1}{2}}) g$ in one pass and time $O(mT^{t-1})$. 
(This again requires us to compute $x_e$ and $g_e$ when we receive edge $e$ via \Cref{cor:access_x} and \Cref{thm:compute_g}.)
We then solve the Laplacian system on spectral sparsifier $\mH^t$ in time $O(m)$.

{\bf Proof of Part 3.}
We have $\delta_c = (\mH)^{-1}(\mA^\top x - b)$.
For this, we use one pass to compute $\mA^\top x -b$, then solve the Laplacian system on the spectral sparsifier, again using total time $O(m {\cal T}(t-1) )$.

{\bf Proof of Part 4.}
Compute and store %
$
y^{t} = y^{t-1} + \delta_y^{t}.
$

{\bf Proof of Part 5.} 
By definition from \Cref{alg:short_step}, $\mu^{t}$ is just $\mu^{t}=(1-r)\mu^{t-1}$
Therefore, we complete the proof.
\end{proof}

\subsection{Constructing the Next Transcript}
\label{sec:ipm:next}

We first show the following
\queryTime
\begin{proof}
    We use induction over the number of iterations. Given the first iteration takes time $T_0$, and \Cref{thm:compute_g,cor:access_x} tell us ${\cal T} (t) ={\cal T}( t-1)+\tilde O(1)$, the conclusion follows.%
\end{proof}

Then, our algorithm and its correctness follows direction from the above theorems:

\begin{algorithm}[ht]\caption{Our Streaming Algorithm 
}\label{alg:short_step_compress}
\begin{algorithmic}[1]
\Procedure {ShortStep}{$t$-transcript $\mathrm{Trans}^t$}
    \State Construct implicit representation of $\otau\approx_{\epsilon} \tau(x^t)$ via \Cref{thm:compute_tau}.
    \State // This takes $\tilde O(1)$ pass, $\tilde O(n)$ space, $\tilde O(m(t+T_0))$ time.
    \State Construct implicit representation of $g=-\gamma\nabla\Psi(\ov)^{\flat(\otau)}$ via \Cref{thm:compute_g}.
    \State // This takes 1 pass, $\tilde O (1)$ space, $\tilde O(m(t+T_0))$ time.
	\State Construct spectral sparsifier $\mH \approx_\gamma \mA^\top\omT^{-1}\Phi''(x)^{-1}\mA$ using \Cref{lem:spectralsparsifier}.
    \State // This takes 1 pass, $\tilde O(n)$ space, $\tilde O(m(t+T_0))$ time.
    \State Compute $\mA^\top\Phi''(x)^{-\frac{1}{2}} g$ (using implicit representation of $g$, \Cref{thm:compute_g})
    \State // Takes 1 pass, $O(n)$ space, $\tilde O(m(t+T_0))$ time.
	\State Store $\delta_y^{t+1}\gets\mH^{-1}\mA^\top\Phi''(x)^{-\frac{1}{2}} g$ via Laplacian solver on $\mH$, in $\tilde O(n)$ time, $\tilde O(n)$ space.
    \State Store $y^{t+1} \gets y^{t} + \delta_y^{t+1}$ in $O(n)$ space.
    \State Store $\delta_c^{t+1}\gets\mH^{-1}(\mA^\top x^{t}-b)$ by computing $(\mA^\top x^t-b)$ (using implicit representation of $x^{t}$), then calling Laplacian solver. Takes $1$ pass, $O(n)$ space, $\tilde O(m(t+T_0))$ time.
    \State Store $(t+1)$-transcript $\mathrm{Trans}^{t+1}=(\mathrm{Trans}^t, \mH, \delta_y, \delta_c, y, \mu, \otau, g)$. Increases size of transcript by $O(n)$.
	\State // For $\delta_x^{t+1}= \Phi''(x)^{-\frac{1}{2}}g-\omT^{-1}\Phi''(x)^{-1}\mA(\delta_y^{t+1}+\delta_c^{t+1})$, and
    $\delta_s^{t+1}= \mu\mA\delta_y^{t+1}$,
	\State // we have access to $x^{t+1}\gets x+\delta_x^{t+1}$ and $s^{t+1}\gets s+\delta_s^{t+1}$ via Lemmas \ref{cor:access_x} and \ref{cor:access_s}.
    \State \Return $(t+1)$-transcript $\mathrm{Trans}^{t+1}$
\EndProcedure
\end{algorithmic}
\end{algorithm}
\begin{lemma}\label{lem:ipm:periter}
    Given $t-1$-transcript of the IPM, we can compute w.h.p~the $t$-transcript in $\tilde O(1)$ passes, $\tilde O(m(t+T_0))$ time, and $\tilde O(n)$ additional space. %
\end{lemma}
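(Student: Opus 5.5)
The plan is to follow \Cref{alg:short_step_compress} line by line and add up the pass, time and space costs of each step. Every step is justified by a result already proved. Throughout, $\cT(t-1)=T_0+\tilde O(t)$ by \Cref{prop:querytime}, so any single-pass step whose per-edge work is $O(\cT(t-1))+\tilde O(1)$ costs $\tilde O(m(t+T_0))$ time.

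First, construct the implicit representation of $\otau^t\approx_{\epsilon_\tau}\tau(x^{t-1})$ via \Cref{thm:compute_tau}. Its input vector is $x^{t-1}$, which \Cref{cor:access_x} lets us query in time $T_x=\cT(t-1)$. This gives $\tilde O(1)$ passes, $\tilde O(m\cT(t-1)+\epsilon_\tau^{-2})$ time and $\tilde O(n\epsilon_\tau^{-2})=\tilde O(n)$ space, since $\epsilon_\tau=\gamma/60=1/\polylog(n)$. Second, with $\otau^t$ available, apply \Cref{thm:compute_g} to obtain $g^t$ in one pass, $\tilde O(1)$ space and $\tilde O(m\cT(t-1))$ time. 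Third, apply \Cref{lem:induction_step_1} to produce $\mH^t$, $\delta_y^t$, $\delta_c^t$, $y^t$ and $\mu^t$ in $O(1)$ further passes and $\tilde O(n)$ space. During these passes, the edge weights $\phi''_e(x^{t-1}_e)^{-1}(\otau^t_e)^{-1}$ and the entries $g^t_e$ are obtained per edge in $\cT(t-1)+\tilde O(1)$ time, because $x^{t-1}_e$ is cached before querying $\otau^t_e$ and $g^t_e$. The sparsifier comes from \Cref{lem:spectralsparsifier}, and the Laplacian solves on the $\tilde O(n)$-edge sparsifier take $\tilde O(n)$ time.

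Having done this, I will check that the resulting collection meets \Cref{def:objects} for the $t$-transcript. The vector $s^t$ is queryable in $O(1)$ time from $y^t$ by \Cref{cor:access_s}. The vector $x^t$ is queryable in $\cT(t-1)+\tilde O(1)$ time by \Cref{cor:access_x}, using the stored $\delta_y^t$, $\delta_c^t$, the implicit $g^t$ and the implicit $\otau^t$. Adding the step costs gives $\tilde O(1)$ passes, $\tilde O(m(t+T_0))$ time and $\tilde O(n)$ new space. The old transcript is only read, never copied, so it contributes nothing to the additional space.

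The main obstacle is keeping the dependency order acyclic and the per-edge query cost additive rather than multiplicative; this is what \Cref{fig:varmap} describes. Every new object must be queried using only objects of index $t-1$ or objects built earlier in the same iteration. In particular, a query to $\otau^t_e$ must reuse the cached $x^{t-1}_e$ instead of recursing through the transcript again, so that each pass costs $m\cdot(\cT(t-1)+\tilde O(1))$ and not $m\cdot\cT(t-1)\cdot\tilde O(1)$. The ``w.h.p.'' claim then follows from a union bound over the $\tilde O(1)$ randomized sparsifier and JL constructions in this iteration.
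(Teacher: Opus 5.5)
Your proposal is correct and follows the paper's proof, which likewise just sums the passes, time and space of each line of \Cref{alg:short_step_compress} using \Cref{thm:compute_tau}, \Cref{thm:compute_g}, \Cref{lem:induction_step_1}, \Cref{cor:access_x} and \Cref{cor:access_s}; you simply spell out the bookkeeping in more detail. One small correction to your remark on caching: a polylogarithmic multiplicative overhead within a single pass would be harmless for the $\tilde O(m(t+T_0))$ bound, and what caching $x^{t-1}_e$ actually protects is additivity of the query-time recursion $\cT(t)=\cT(t-1)+\tilde O(1)$ across iterations (\Cref{prop:querytime}), since a multiplicative recursion there would compound exponentially in $t$.
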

\begin{proof}
    Our statement follows directly from the above theorems by adding the number of passes, space, and time complexity in each line of the algorithm.
\end{proof}

We can now prove the following:
\IPM*
\begin{proof}
    By \Cref{lem:ipm}, the algorithm runs for $O(\sqrt{n}\log(\mu^\init/\mu^\target))$ iterations.\\
    We use \Cref{lem:ipm:periter}. We have $\tilde O(1)$ passes per iteration, so $O(\sqrt{n}\log(\mu^\init/\mu^\target))$ total.\\
    We can upper bound $t$ by $\sqrt{n}\log(\mu^\target/\mu^\init)$ to get $\tilde O(m(\sqrt{n}\log(\mu^\init/\mu^\target)+T_0)$ per iteration, or $\tilde O(T_0 mn\log^2(\mu^\init/\mu^\target))$ total across all iterations.\\
    Finally, we also use $\tilde O(n)$ additional space per iteration, for a total of $\tilde O (n^{1.5} \log(\mu^\init/\mu^\target))$ space.\\
    By \Cref{prop:querytime}, the query time is $O(T_0 +\sqrt{n} \log(\mu^\target/\mu^\init))$.
\end{proof}

We can also extend this to the 2-party communication setting.
\begin{theorem}\label{thm:2partyIPM}
    Consider the 2-party model, where both parties have the same
    $\mu^\target \in \R_{>0}$ and $\epsilon$-centered (see \Cref{def:epscentered}) $(x^\init,s^\init,\mu^\init)$ in implicit representation (i.e., a party can query any entry $x^\init_e$ or $s^\init_e$, provided they know the corresponding edge $e$ and its cost and capacity). 
    Then, with 
    $\tilde O (n^{1.5} \log(\mu^\init/\mu^\target))$ bits of communication, both parties jointly construct the same $\epsilon$-centered $(x,s,\mu^\target)$ in implicit representation.
    The communication protocol is randomized and the output is correct w.h.p.
\end{theorem}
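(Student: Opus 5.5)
The plan is to simulate the streaming algorithm of \Cref{thm:IPM} in the 2-party model. The key point is that both parties keep an identical copy of the transcript (\Cref{def:objects}), so they never re-send old iterations and only communicate the $\tilde O(n)$ new information produced in each iteration. The argument is an induction on $t$, and the invariant is that both parties hold the same $(t-1)$-transcript. The base case holds because both parties have the same implicit representation of $(x^\init,s^\init,\mu^\init)$. Given the invariant, either party can query $x^\ell_e,s^\ell_e,\otau^\ell_e,g^\ell_e$ for any of its own edges $e$ with no communication, by \Cref{cor:access_x,cor:access_s,thm:compute_tau,thm:compute_g}. Shared randomness (the JL matrices $\mR$ and the sparsifier randomness) can be simulated with public coins, or party A can send an $\tilde O(1)$-bit seed for a $\poly\log$-wise independent family.

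I would then go through the steps of \Cref{alg:short_step_compress} and check that each pass reduces to exchanging an $\tilde O(n)$-size summary.
\begin{enumerate}
\item For the $\otau$ refinement (\Cref{thm:compute_tau}), each round needs a spectral sparsifier $\mM$ and the sketch $\mP=\mR\mB\mM^{-1}$. For the sparsifier, each party sparsifies its own edge set, since a union of sparsifiers of two disjoint edge sets sparsifies the union. A then sends its $\tilde O(n)$-edge sparsifier to B, B computes $\mM$ and sends it back. For $\mP$, each party computes its own contribution $\mR_{\cdot,E_i}\mB_{E_i}$ as an $\tilde O(1)\times n$ matrix, and the two contributions are added.
\item For the gradient $g$ (\Cref{thm:g_construction}), only the count vector over the $\tilde O(1)$ rounded classes is needed. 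The parties sum their counts, after which each party solves the small maximization problem locally and deterministically.
\item For $\mH$, $\mA^\top\Phi''(x)^{-1/2}g$ and $\mA^\top x-b$, these are again sums over edges, so each party sends an $\tilde O(n)$-size object (a sparsifier or an $n$-vector). Both parties then solve the same Laplacian systems and get identical $\delta_y^t,\delta_c^t,y^t$. The scalar $\mu^t$ is computed locally.
\end{enumerate}
Altogether this is $\tilde O(n)$ words per iteration. Over $\tilde O(\sqrt n\log(\mu^\init/\mu^\target))$ iterations (\Cref{lem:ipm}) the total is $\tilde O(n^{1.5}\log(\mu^\init/\mu^\target))$ words. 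Correctness and $\epsilon$-centrality carry over verbatim from \Cref{thm:IPM}, because the simulated quantities satisfy the same approximation guarantees w.h.p.

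I expect the main obstacle to be that the bound is stated in bits, not words, and that both parties must hold the same transcript. To keep the transcripts identical, I will require that every locally computed quantity is a deterministic function of the shared transcript, the shared randomness, and the exchanged messages. This forbids each party from independently solving the final Laplacian systems with different sparsifiers, which is why a single merged $\mH$ is agreed on. For bits, every communicated number is rounded to $O(\log(nW/\mu^\target))$-bit fixed point. The robustness of the IPM of \cite{bll+21} to multiplicative $1\pm1/\polylog$ errors in $\otau,\ov$, and to spectral approximation in $\mH$, absorbs this rounding. Since $\log(nW)$-type factors are hidden in $\tilde O$ for polynomially bounded parameters, the bit bound matches the word bound up to polylog factors. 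If this absorption fails for $\delta_y,\delta_c$, I would instead argue that the rounding perturbs $x$ by only $1/\poly$ and is covered by the slack in the approximate feasibility condition of \Cref{def:epscentered}.
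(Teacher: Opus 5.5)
Your proposal is correct and follows the same argument as the paper: induct on a transcript held identically by both parties, simulate the streaming algorithm of \Cref{thm:IPM}, and exchange only the $\tilde O(n)$ new information per iteration, over $\tilde O(\sqrt{n}\log(\mu^\init/\mu^\target))$ iterations. The one difference is how a single pass is simulated. The paper uses the generic hand-off (A processes its edges and sends its $\tilde O(n)$ working memory to B, justified by \Cref{lem:ipm:periter}), whereas you merge per-party summaries (sparsifiers, sketches $\mR\mB$, class counts, and $\mA^\top(\cdot)$ sums); this is equally valid and more explicit, and your points on deterministic post-processing and bit-rounding go beyond what the paper spells out.
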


\begin{proof}
    We show the claim iteratively. Suppose both parties know the transcript up to iteration $t$. We claim that with only $\tilde O(n)$ bits of communication, they are able to jointly compute the transcript for iteration $t+1$. Combining this with \Cref{lem:ipm}, which states that the algorithm runs for $\tilde O(\sqrt{n})$ iterations, gives the theorem.

    To show the claim for a single iteration, note that both players can simulate the above steaming algorithm. Specifically, as we assume both parties have access to the $(t-1)$-transcript, they can jointly simulate a single pass over the input of the streaming algorithm while communicating only $\tilde O(n)$ space.
    First player A iterates over all their edges and performs exactly the calculations of the streaming algorithm. 
    By \Cref{lem:ipm:periter} a pass over the input uses only $\tilde O(n)$ additional space when having access to the $(t-1)$-transcript.
    So player B can continue simulating the current pass over the input after player A sends those additional $\tilde O(n)$ information.
    With this, the two parties jointly compute the new $t$-transcript, i.e., they now both have complete access to the $t$-transcript, giving us the theorem recursively.
    
\end{proof}

\section{Constructing the Gradient \texorpdfstring{$g$}{}}\label{sec:low:gradient_g}
\label{sec:gradient}

The main goal of this section is to prove the following theorem, which shows that we can construct exact gradient-like vectors with low space.

\thmGradient

The proof will make use the following two lemmas from \cite{bln+20}.

\begin{lemma}[{\cite[Corollary 7.4]{bln+20}}]\label{lem:corollary_7.4_in_bln+20}
Given $h,v\in\R^k$ we can compute in $O(k \log k)$ time
\begin{align*}
u = \arg\max_{\|v  w\|_2+\|w\|_\infty \le 1}
\langle h, w \rangle .
\end{align*}
\end{lemma}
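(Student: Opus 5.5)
The plan is to reduce the problem to a one-parameter family of box-constrained $\ell_2$ problems. Each of these has a threshold-type solution that can be read off after a single sort. First I normalize signs. Replacing $w_i$ by $\mathrm{sign}(h_i) w_i$ leaves both $\|vw\|_2$ and $\|w\|_\infty$ unchanged, so I may assume $h\ge 0$ and look for an optimal $w\ge 0$. Coordinates with $h_i=0$ are set to $w_i=0$. Coordinates with $v_i=0$ are costless in the $\ell_2$ term, so they will always sit at the $\ell_\infty$ cap. The feasible set is convex and compact and the objective is linear, so an optimum exists. I write the constraint as $\|vw\|_2\le 1-t$ and $\|w\|_\infty\le t$ with $t\in[0,1]$, and optimize over $t$ as an additional variable. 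This is the same problem, because at an optimum $\|w\|_\infty=t$ can be assumed.

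Second, for a fixed $t$, the inner problem is $\max \langle h,w\rangle$ subject to $\sum_i v_i^2 w_i^2\le (1-t)^2$ and $0\le w_i\le t$. I will apply the KKT conditions, which are necessary and sufficient here by convexity and Slater's condition. They give $w_i=\min(t,\lambda h_i/v_i^2)$ for a single multiplier $\lambda\ge 0$, with the $\ell_2$ constraint tight unless every coordinate is capped. Hence the set of capped coordinates is a prefix of the indices sorted by the ratio $r_i:=h_i/v_i^2$ in decreasing order, with $r_i=\infty$ when $v_i=0$. I sort once by $r_i$ in $O(k\log k)$ time. Let $S_j$ be the first $j$ indices. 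I precompute in $O(k)$ total the prefix and suffix sums $A_j=\sum_{i\in S_j}v_i^2$, $B_j=\sum_{i\in S_j}h_i$ and $C_j=\sum_{i\notin S_j}h_i^2/v_i^2$.

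Third, for each $j$ the candidate solution has $w_i=t$ on $S_j$ and $w_i=\lambda r_i$ off $S_j$. Tightness of the $\ell_2$ constraint forces $\lambda=\sqrt{((1-t)^2-t^2A_j)/C_j}$. The objective then becomes
\[
F_j(t)=tB_j+\sqrt{C_j}\sqrt{(1-t)^2-t^2A_j},
\]
which is concave on the interval where the radicand is nonnegative. Its maximizer $t_j^*$ comes from a quadratic equation in closed form, in $O(1)$ arithmetic operations. For this $t_j^*$ and the resulting $\lambda_j$ I check consistency: $\lambda_j r_i\ge t_j^*$ for the last index of $S_j$ and $\lambda_j r_i\le t_j^*$ for the first index outside $S_j$. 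Since the $r_i$ are sorted, this is an $O(1)$ test. Every consistent candidate is a feasible point.

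Finally, I argue that the true optimum $(w^\star,t^\star)$ produces a consistent candidate. By the second step, $w^\star$ has the threshold form for some prefix $j$. The value $F_j(t)$ for $t$ near $t^\star$ is attained by a feasible point, namely the same prefix with the $\lambda$ adjusted to $t$; this uses that the consistency inequalities are strict or can be handled by ties. So optimality of $w^\star$ forces $t^\star$ to be the maximizer of $F_j$. The algorithm therefore outputs the best consistent candidate over $j=0,\dots,k$, undoes the sign flip, and runs in $O(k\log k)$ total time.

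I expect this last step to be the main obstacle: making the optimality-over-$t$ argument rigorous at boundary cases. These include ties among the ratios $r_i$, the case $C_j=0$ where all coordinates are capped and $t=1/(1+\sqrt{A_j})$, and a radicand reaching zero. I would handle these by the joint KKT conditions of the full problem in the variables $(w,t)$, using subgradients of $\|\cdot\|_\infty$, rather than by the perturbation argument.
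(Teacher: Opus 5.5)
Your argument is correct. It differs from the paper only in that the paper gives no argument at all: it imports the statement verbatim from \cite[Corollary 7.4]{bln+20}.

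You derive it from scratch instead. The inner KKT conditions give the threshold form $w_i=\min(t,\lambda h_i/v_i^2)$; then one sort by $r_i$, prefix sums, and a closed-form maximization of the concave one-variable function $F_j$ for each prefix give the $O(k\log k)$ bound.

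The citation buys brevity. Your derivation buys transparency, and it also explains the companion grouping statement \Cref{lem:lemma_7.5_in_bln+20}, on which the proof of \Cref{thm:g_construction} relies. Coordinates with equal $(h_i,v_i)$ receive equal $w_i$. So running your procedure on group representatives, with $A_j,B_j,C_j$ weighted by the group sizes $|I_i|$, solves the full-dimensional problem in $O(k\log k)$ time.

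The boundary cases you flag close without the joint KKT system.
\begin{itemize}
\item Only the inequality at the first index outside $S_j$ matters for feasibility. Taking $S_j=\{i:\lambda^\star r_i\ge t^\star\}$ maximal makes that inequality strict. So for $C_j>0$, nearby candidates stay feasible by continuity of $\lambda_j(t)$, and $t^\star$ is a local, hence global, maximizer of $F_j$.
\item For $q(t)=(1-t)^2-A_jt^2$ one has $2qq''-(q')^2=-4A_j$. Hence $F_j$ is strictly concave when $A_j,C_j>0$, which forces $t_j^*=t^\star$.
\item If $C_j=0$, then $F_j(t)=tB_j$, maximized at the endpoint $t=1/(1+\sqrt{A_j})$.
\item If $A_j=0$, then $F_j$ is linear; take $t=1$ when $B_j\ge\sqrt{C_j}$. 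This also sidesteps the $0\cdot\infty$ in your consistency test.
\end{itemize}
A vanishing radicand with $C_j>0$ can occur only in the last case. The reason is that $t^\star>0$ together with a capped coordinate of finite ratio forces $\lambda^\star>0$.
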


Note that for $v = C \log(4m/n)\cdot \sqrt{\tau}$ (where $C$ is some constant) we have $\|vw\|_2 = C \log(4m/n) \|w\|_\tau$ and thus $\|v  w\|_2+\|w\|_\infty = \|w\|_{\tau+\infty}$ by \Cref{def:tauinftynorm}.
Thus we have
$$
u = \arg\max_{\|v  w\|_2+\|w\|_\infty \le 1}
\langle h, w \rangle = h^{\flat(\tau)}.
$$
Thus to get \Cref{thm:g_construction}, we will use $v=C \log(4m/n)\cdot \sqrt{\round_{1+\epsilon'}\tau}$ in \Cref{lem:corollary_7.4_in_bln+20,lem:lemma_7.5_in_bln+20}.

\begin{lemma}[{\cite[Lemma 7.5]{bln+20}}]\label{lem:lemma_7.5_in_bln+20}
Given $\oh, \ov, \of \in \R^k$, $h, v \in \R^n$, and a partition $\cup^k_{
i=1} I_i = [n]$, such that
\begin{itemize}
\item $h =\sum_{i=1}^k \oh_i \cdot {\bf 1}_{ I_i }$,
\item $ v = \sum_{i=1}^k \ov_i \cdot {\bf 1}_{ I_i }$ 
\item $\of_i = \sqrt{ |I_i| }$, for all $i \in [k]$ (This is the square root of the frequency)
\end{itemize}

Let $\ou \in \R^k$ be the maximizer of
\begin{align*}
\ou := \underset{
\ow \in \R^k ~:~ \| ( \of  \ov)  \ow \|_2+ \| \ow \|_{\infty} \leq 1}{\arg\max}
\langle \oh  \of  \of, \ow \rangle.
\end{align*}
We can define $u \in \R^n$ via $u = \sum_{i=1}^k \ou_i \cdot {\bf 1}_{ I_i } $, then
\begin{align*}
    \langle v, u \rangle = \max_{w \in \R^n ~:~ \| v  w \|_2 + \| w \|_{\infty} \leq 1} \langle h, w \rangle.
\end{align*}
\end{lemma}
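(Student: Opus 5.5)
The plan is a symmetrization argument: the $n$-dimensional problem should have an optimizer that is constant on every block $I_i$, and on block-constant vectors both the constraint and the objective collapse exactly to the $k$-dimensional quantities in the statement. I read the conclusion as saying that the lifted vector $u$ attains $\max_{w:\|vw\|_2+\|w\|_\infty\le 1}\langle h,w\rangle$, so its left-hand side should be the objective value $\langle h,u\rangle$. I will also assume every $I_i$ is nonempty; empty blocks can be discarded since they contribute nothing.

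\textbf{Step 1 (reduction on block-constant vectors).} For $w=\sum_{i=1}^k \ow_i\mathbf 1_{I_i}$, compute directly:
\[
\|vw\|_2^2=\sum_i |I_i|\,\ov_i^2\ow_i^2=\|(\of\ov)\ow\|_2^2,\qquad
\|w\|_\infty=\|\ow\|_\infty,
\]
\[
\langle h,w\rangle=\sum_i |I_i|\,\oh_i\ow_i=\langle \oh\of\of,\ow\rangle .
\]
The middle identity uses that the blocks are nonempty. Hence $w$ is feasible for the big problem if and only if $\ow$ is feasible for the small one, and the two objective values agree. In particular the lift $u$ of $\ou$ is feasible, and
\[
\langle h,u\rangle=\max_{\ow}\langle\oh\of\of,\ow\rangle=\max\{\langle h,w\rangle : w \text{ feasible and block-constant}\}.
\]

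\textbf{Step 2 (averaging).} It remains to show the unrestricted maximum is attained on block-constant vectors. Let $w^*$ be any maximizer; one exists because the feasible set $K=\{w:\|vw\|_2+\|w\|_\infty\le1\}$ is closed and, through the $\|w\|_\infty$ term, bounded. Define $\tilde w$ by replacing $w^*$ on each $I_i$ with its average over $I_i$. Equivalently, $\tilde w$ is the average of $\pi(w^*)$ over all permutations $\pi$ that permute coordinates within each block.
\begin{itemize}
\item Each such $\pi$ preserves $K$, because $v$ is constant on blocks and both norms are invariant under these permutations.
\item Each such $\pi$ preserves the objective, because $h$ is also constant on blocks.
\end{itemize}
Since $K$ is convex (it is a sublevel set of a sum of norms), $\tilde w\in K$, and by linearity $\langle h,\tilde w\rangle=\langle h,w^*\rangle$. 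Thus $\tilde w$ is a block-constant maximizer, and combining with Step 1 gives $\langle h,u\rangle=\max_{w\in K}\langle h,w\rangle$.

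\textbf{Main obstacle.} The only delicate point is Step 2: checking that block-averaging preserves feasibility. This is exactly convexity of $\|v\cdot\|_2+\|\cdot\|_\infty$ together with invariance under within-block permutations. A direct alternative is Jensen's inequality blockwise: averaging does not increase $\sum_{j\in I_i}w_j^2$ and does not increase $\max_{j\in I_i}|w_j|$. The remaining steps are direct computation.
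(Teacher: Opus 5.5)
The paper gives no proof of this statement. It is quoted from \cite{bln+20} and used as a black box, so there is no in-paper argument to compare against. Your symmetrization proof is correct and self-contained.

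For block-constant $w$, the identities $\|vw\|_2=\|(\of\ov)\ow\|_2$ and $\langle h,w\rangle=\langle\oh\of\of,\ow\rangle$ are right. The averaging step needs only two facts: the feasible set is convex, and both it and the objective are invariant under within-block permutations. Both follow from $v$ and $h$ being block-constant.

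A natural alternative would be to use the explicit form of the maximizer of $\max_{\|vw\|_2+\|w\|_\infty\le 1}\langle h,w\rangle$, and observe that coordinates with equal $(h_j,v_j)$ receive equal values. Your argument avoids any closed-form analysis and works for every convex constraint with the same block symmetry.

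Two small remarks.

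First, reading the left-hand side as $\langle h,u\rangle$ is not merely convenient but necessary. For $v=0$ and $h\neq 0$, the right-hand side is $\|h\|_1$ while $\langle v,u\rangle=0$. Moreover, you actually prove that $u$ is itself a maximizer, which is what the proof of \Cref{thm:g_construction} relies on.

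Second, empty blocks do not literally ``contribute nothing'' to the small problem, because they still enter $\|\ow\|_\infty$. The argument survives anyway, for two reasons. The lift satisfies $\|u\|_\infty\le\|\ou\|_\infty$. And every block-constant $w$ arises from some $\ow$ that vanishes on empty blocks, with the same objective and constraint value. This case matters here: in the application, $k=|I|\cdot|J|$ counts all rounding pairs, so many groups can be empty.
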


\begin{proof}[Proof of \Cref{thm:g_construction}]    
    We round $\tau_e$ to nearest $(1+\epsilon')^i$ for some integer $i$, which we are guaranteed we can query for each edge we receive from the stream. 
    The rounding to $(1+\epsilon')^i$ discretizes the vector $\tilde \tau$ which will later allow us to project the vector into a smaller dimensional space.
    
    We use $\round_{1+\epsilon'}( \tau_e )$ to denote the number after rounding $\tau_e$.
    We have $n/m \le \tau \le 1$, so there are at most $O(\epsilon'^{-1}\log m)$ possible values. Formally speaking, let $I$ be defined as 
\begin{align*}
    I :=  \{ i \in \Z ~|~ n/m\leq (1+\epsilon')^i \leq 1 \} 
\end{align*}
Then, 
\begin{align*}
    |I| = O(\epsilon'^{-1} \log m)
\end{align*}

    We also round $v_e$ to nearest $j \cdot \epsilon'$ for some integer $j$. Let us use $\round_{\epsilon'}(v_e)$ to denote the number after rounding $v_e$.

    We have $-1\le v \le 1$ 
    so there are at most $O(\epsilon'^{-1})$ possible values. 
    Formally speaking, we define set $J$ as follows
    \begin{align*}
        J := \{i \in \Z ~|~ 0 \leq j \cdot \epsilon' \leq 2\}
    \end{align*}
    Then 
    \begin{align*}
        |J| = O(\epsilon'^{-1})
    \end{align*}

    Let $k$ be the number of possible pairs $(i,j) \in I \times J$. 
    
    Using the the above simple argument, we know the maximum number of groups is
    \begin{align*}
        k 
        = & ~ |I| \times |J| \\
        = & ~ O(\epsilon'^{-1} \log m) \cdot O(\epsilon'^{-1}) \\
        = & ~ O(\epsilon'^{-2}\log m) 
    \end{align*}

    Let $p : E\to[k]$ be the function that for any edge $e$ returns the group we rounded to.
    
    Throughout our pass of the stream, we count how often we round to each pair $(i,j)$, which takes $k=O(\epsilon'^{-2}\log m)$ space and $O(mT)$ time.
    
    In Lemma~\ref{lem:lemma_7.5_in_bln+20} let $\og\in\R^k$ be the vector that counts how often we rounded to each group.
    Let $\oh, \ov \in\R^k$ such that 
    \begin{align*}
        f(\round_{\epsilon'} (v_e)) = \oh_{p(e)}
    \end{align*}
    and 
    \begin{align*}
        C \log(4m/n)\cdot\sqrt{\round_{1+\epsilon'}( \tau_e )} = \ov_{p(e)}.
    \end{align*}

    Then by Lemma~\ref{lem:lemma_7.5_in_bln+20} and \Cref{def:tauinftynorm} the maximizer
    \begin{align*}
   \ou = \argmax_{\ow \in \R^k: \|\of  \ov  \ow \|_2+\|\ow\|_\infty \le 1} \langle\oh  \of,\ow\rangle
    \end{align*}
    has the property
    \begin{align*}
   \ou _{p(e)} = \left(\argmax_{w: \|w\|_{(\round_{1+\epsilon'}(\tau))+\infty}\le 1} \langle w, f(\round_{\epsilon'}(v)) \rangle\right)_e = \left((f(\round_{\epsilon'}(v))^{\flat(\round_{\epsilon'}(\tau))}\right)_e=: g_e
    \end{align*}
    We can use Lemma~\ref{lem:corollary_7.4_in_bln+20} to compute this $\ou$ in $O(k\log k)=\tilde O(1)$ time.

    \paragraph{Query}
    Now consider the case where we want to query $g_e$ for a given edge $e$.
    To do this, we compute $p(e)$, which requires computing $v_e$ and $\tau_e$ and rounding them.
    Then we return $\ou_{p(e)}$.
    In total, this takes $T+O(1)$ time.
    
\end{proof}

\section{Initial Flow and Complete Algorithm}\label{sec:initial_point:flow}

We now show how to initialize our interior point method and start from a feasible point near the central path. Then, we show that after running our algorithm we indeed have found an optimal feasible solution and can query the flow along any edge of it to $\epsilon$ accuracy.
This will conclude the proofs of \Cref{thm:main_stream,thm:main_communication}.

\subsection{Initial Point \& Auxiliary Graph}
\label{sec:initial_point:x0}
We have shown the following: %

\IPM*
In this section we discuss how to create an initial $(x^\init, s^\init, \mu^\init)$ that is feasible and on the central path and calculate $T_0$. The construction of \cite{bll+21} satisfies this and can be implemented in low space, so we can exactly use their construction.

\begin{tcolorbox}
\paragraph{Construction from \cite{bll+21}}
We assume that $u_e$ and $c_e$ are integral, and let $W$ be the maximum absolute value of $u_e$ and $c_e$ over all edges $e$.

Given a graph $G = (V, E)$, let $\tilde{G} = (V \cup \{z\}, \tilde{E})$ where $\tilde{E} = \{(v, z), (z, v) \mid v \in V\}$. That is, we add a bi-directional star rooted at $z$ into $G$. We let $\begin{bmatrix} x^\init \\ \tilde{x}^\init \end{bmatrix}$ denote an initial flow in $\tilde{G}$ where $x^\init$ and $\tilde{x}^\init$ specify the flow values on $E$ and $\tilde{E}$, respectively. For each $e \in E$, we set $x^\init_e := u_e / 2$. For all star-edges $e \in \tilde{E}$, we first set $\tilde{x}^\init_e = 1$ (just to prevent them from having flow value too close to zero). Then, we additionally route the excess at each vertex induced by $x^\init$ using the star-edges from $\tilde{E}$. More formally, for each $v \in V$, if the flow excess at $v$ is $[\mA^\top x^\init - F_{es,t}]_v > 0$, then we set $\tilde{x}^\init_{(v,z)} := 1 + [\mA^\top x^\init - F_{es,t}]_v$ and $\tilde{x}^\init_{(z,v)} = 1$. Otherwise, the flow deficit at $v$ is $[F_{es,t} - \mA^\top x^\init]_v \geq 0$ and we set $\tilde{x}^\init_{(z,v)} = 1 + [F_{es,t} - \mA^\top x^\init]_v$ and $\tilde{x}^\init_{(v,z)} = 1$. The capacity $u_e$ and cost $c_e$ of each original edge $e \in E$ stay the same. For each star-edge $e \in \tilde{E}$, we set the capacity $\tilde{u}_e = 2 \tilde{x}^\init_e$ and $\tilde{c}_e := 50m \| u \|_\infty \| c \|_\infty$. Consider the modified linear program for minimum cost flow on $\tilde{G}$:
\begin{equation}\label{eq:modFlowLP}
\min_{\substack{\tilde{\mA}^\top \begin{bmatrix} x \\ \tilde{x} \end{bmatrix} = F_{es,t}\\
0\leq x_e \leq u_e \, \forall e \in E\\
0 \leq \tilde{x}_e \leq \tilde{u}_e \, \forall e \in \tilde{E}}} c^\top x + \tilde{c}^\top \tilde{x},
\end{equation}
where $\tilde{\mA}$ is an incidence matrix of $\tilde{G}$. 

For a small technical reason, we need to further modify the linear program since an incidence matrix is degenerate and our path following algorithm only works on a full rank matrix. Remove a single column that corresponds to the vertex $z$.

\end{tcolorbox}

We first show that this construction works in our low-space setting as well, and analyze the query time of each edge.

\begin{lemma}\label{lem:init_flow}
    The initial flow in the construction from \cite{bll+21} can be stored in $O(n)$ space. Given the capacity along an edge, the flow of each edge can be queried in $O(1)$ time.
\end{lemma}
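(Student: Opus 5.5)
The plan is to split the edges of the auxiliary graph $\tilde G$ into the original edges $E$ and the $2n$ star edges $\tilde E$, and handle the two classes separately.

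For an original edge $e\in E$, the construction sets $x^\init_e = u_e/2$. Since a query receives the edge together with its capacity $u_e$, we return $u_e/2$ in $O(1)$ time and store nothing for $E$.

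For the star edges we store the values explicitly. There are only $2n$ of them, so storing $\tilde x^\init_e$ (and also $\tilde u_e = 2\tilde x^\init_e$ and $\tilde c_e$) in arrays indexed by $v\in V$ costs $O(n)$ space and gives $O(1)$ lookup. The remaining question is how to compute these values in low space. The excess vector $\mA^\top x^\init - F_{es,t}\in\R^V$ is $n$-dimensional. We compute it in a single pass over the stream. Start with the vector $-F_{es,t}$. When edge $e=(w,v)$ with capacity $u_e$ arrives, add its contribution $\pm u_e/2$ to the entries of $w$ and $v$, with signs following the incidence convention of $\mA$. After the pass, each $\tilde x^\init_{(v,z)}$ and $\tilde x^\init_{(z,v)}$ is read off the sign of $[\mA^\top x^\init - F_{es,t}]_v$, exactly as in the construction.

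The cost $\tilde c_e = 50m\|u\|_\infty\|c\|_\infty$ is a single number. We obtain it in the same pass by keeping a counter for $m$ and running maxima of $u_e$ and $|c_e|$, all in $O(1)$ space.

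I do not expect a real obstacle here. The main points to check are that the excess vector fits in $O(n)$ memory and can be accumulated edge by edge, and that the query interface is consistent with the implicit-representation definition, namely that the capacity $u_e$ is supplied with every query. Taken together, this gives $T_0 = O(1)$ for querying $x^\init$.
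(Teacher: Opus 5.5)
Your proposal is correct and takes essentially the same approach as the paper. Original edges return $u_e/2$ from the capacity supplied with the query, and the star-edge flows are determined by the $n$-dimensional vertex excess, which is computed once and stored in $O(n)$ space; your explicit one-pass accumulation of $\mA^\top x^{\init} - F_{es,t}$ (and of the star-edge cost) just fills in details the paper leaves implicit.
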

\begin{proof}
    To keep track of the flow on edges incident to the star vertex we only need to calculate and store all the other vertex demands once, which takes $O(n)$ space and $O(1)$ time per query.
    
    For all other edges, the initial flow values are simply the half of the edge capacities, so we do not need to store them, as we read in the edge capacity every time we see an edge.
\end{proof}

Now, it is clear that the construction itself is feasible, so we need to show that it is centered. \cite{bll+21} also proves centrality.

\begin{lemma}[{Initial point for the modified min cost flow, Lemma 7.5 of \cite{bll+21}}]\label{lem:initialPointModFlow}
    Let $\begin{bmatrix} s^\init \\ \tilde{s}^\init \end{bmatrix} = \begin{bmatrix} c \\ \tilde{c} \end{bmatrix}$ and $\mu^\init = 100 m^2 W^3 \epsilon^{-1}$. Then, the point $\left( \begin{bmatrix} x^\init \\ \tilde{x}^\init \end{bmatrix}, \begin{bmatrix} s^\init \\ \tilde{s}^\init \end{bmatrix}, \mu^\init \right)$ is $\epsilon$-centered w.r.t. the linear program in \eqref{eq:modFlowLP}.

\end{lemma}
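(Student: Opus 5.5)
We verify the three conditions of \Cref{def:epscentered} directly for the point $\left(\begin{bmatrix} x^\init \\ \tilde{x}^\init \end{bmatrix}, \begin{bmatrix} c \\ \tilde{c} \end{bmatrix}, \mu^\init\right)$, treating the conditions in increasing order of difficulty.

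\textbf{Dual feasibility.} This holds trivially with $z=0$, since $s^\init=c$ and $\tilde s^\init=\tilde c$.

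\textbf{Approximate feasibility.} By construction the star edges route exactly the excess and deficit of $x^\init$. Hence $\tilde{\mA}^\top \begin{bmatrix} x^\init \\ \tilde{x}^\init \end{bmatrix} = F_{es,t}$ holds exactly. The bidirectional ``$+1$'' terms cancel on each star pair. After deleting the column of $z$, the residual is $0$, so the third condition holds with room to spare.

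\textbf{Approximate centrality.} This is the only real work. Each edge sits exactly at the midpoint of its capacity interval: $x^\init_e=u_e/2$ and $\tilde x^\init_e=\tilde u_e/2$. Therefore
\[
\phi'_e = -\frac{1}{x_e}+\frac{1}{u_e-x_e}=0.
\]
Consequently the centrality ratio for edge $e$ reduces to
\[
\frac{s_e}{\mu\,\tau_e\sqrt{\phi''_e}}.
\]
The barrier curvature at the midpoint should be taken as the sum of the two one-sided terms. The paper writes $\phi''$ with a minus sign, but this is evidently a typo; the intended quantity is the positive curvature. This gives
\[
\sqrt{\phi''_e}\ge \frac{1}{x_e}\ge \frac{1}{\max(u_e,\tilde u_e)}.
\]
The remaining bounds are as follows.
\begin{itemize}
\item $|s_e|\le \max(\|c\|_\infty,\tilde c_e)\le 50mW^2$.
\item $\tilde u_e\le 2(1+\text{excess})\le O(mW)$, since each excess is at most $nW/2\cdot$degree.
\item $\tau_e\ge n/m$, from the regularization term in \Cref{def:lewisweights}.
\end{itemize}
Combining these, each ratio is at most
\[
\frac{50mW^2\cdot O(mW)}{\mu^\init\, n/m}.
\]
With $\mu^\init=100m^2W^3\epsilon^{-1}$ this is $O(\epsilon\, m/n)$. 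This is not quite $\le\epsilon$, so the bound needs refinement.

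\textbf{Main obstacle: tightening the centrality bound.} The constant bookkeeping in this last step is the crux, in particular avoiding the loss of an $m/n$ factor from the lower bound on $\tau$. I would first try to use $\tau_e\ge n/m$ together with a tighter bound on the product $|s_e|\,x_e$ by splitting into two cases.
\begin{itemize}
\item \emph{Original edges.} Here $|s_e|x_e\le W\cdot W/2$, so the ratio is $O(W^2 m/(n\mu^\init))\ll\epsilon$.
\item \emph{Star edges.} Here $|\tilde c_e|\,\tilde x_e\le 50mW^2\cdot O(mW)=O(m^2W^3)$. The ratio is then $O(m^2W^3\cdot m/n)/\mu^\init$, which is still off by a factor of $m/n$.
\end{itemize}
If the star-edge case does not close, I would exploit that $\tau$ on the $O(n)$ star edges is large. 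The star is a set of $n$ nearly disjoint edges, each carrying leverage close to $1$ in the reweighted matrix, so $\tau_e=\Omega(1)$ there. This removes the $m/n$ loss.

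Failing that, I would defer to \cite[Lemma 7.5]{bll+21}, whose choice of $\mu^\init$ was calibrated precisely for this check. In that case the proof amounts to verifying that our low-space setting uses the identical LP \eqref{eq:modFlowLP}, so their lemma applies verbatim.
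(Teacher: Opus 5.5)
The paper does not verify this lemma itself. It only notes that the initial point is exactly the construction of \cite{bll+21} and imports their Lemma 7.5. So your last-resort fallback is precisely the paper's argument. Your preliminary checks are sound: dual feasibility with $z=0$, zero primal residual after deleting the column of $z$, $\phi'_e=0$ at the midpoint, and the correction of the sign typo in $\phi''$.

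The self-contained centrality bound you attempt on top of this does not close, and the repair you propose is false. After deleting the column of $z$, the rows of $(v,z)$ and $(z,v)$ are both $\pm\unit_v$: they are parallel, not ``nearly disjoint''. Their leverage scores therefore sum to at most $1$ and split in proportion to the row weights $\tau_e^{1-2/p}\phi''_e(\tilde x_e)^{-1}$. Here $\phi''_e(\tilde x_e)^{-1}$ is $1/2$ for the edge with $\tilde x_e=1$ and $(1+|\mathrm{exc}_v|)^2/2$ for its partner, where $\mathrm{exc}_v$ is the excess routed at $v$. From the fixed-point equation, the former has Lewis weight $O(1/(1+|\mathrm{exc}_v|)+n/m)$. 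Original edges at $v$ can draw leverage away from both. So $\tau_e=\Omega(1)$ does not hold uniformly on the star.

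You do not need it, though. The $m/n$ loss comes from your bound $\tilde u_e=O(mW)$; the ``$nW/2\cdot$degree'' line is garbled. Treat the two star edges at $v$ separately:
\begin{itemize}
\item The one with $\tilde x_e=1$ has ratio at most $50mW^2/(\sqrt2\,\mu^{\init} n/m)$, which is negligible.
\item The other has $\tilde x_e=1+|\mathrm{exc}_v|$ with $|\mathrm{exc}_v|\le\tfrac12(n-1)W+|b_v|=O(nW)$. This holds because in a simple digraph the in- and out-capacity at $v$ are each at most $(n-1)W$, and $|b_v|\le W$ (or at most $(n-1)W$ for any feasible instance).
\end{itemize}
With only $\tau_e\ge n/m$, the ratio for the second edge is $O(mW^2\cdot nW\cdot m/n)/\mu^{\init}=O(m^2W^3)/\mu^{\init}=O(\epsilon)$. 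This is exactly the scaling built into $\mu^{\init}$. Pinning the constant to $\le\epsilon$ is the bookkeeping done in the cited lemma.

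As your proposal stands, the statement is established only through the fallback citation, which coincides with the paper.
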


Thus this is a suitable initial point for our algorithm. 

Finally, we need to show that this modified flow does indeed solve the original flow problem. 
Let $\text{OPT}(\tilde{G})$ be the optimal value of the above LP. 
Since the star-edge have cost $>m\|c\|_\infty$, any flow that goes along these edges, i.e., some $(v,z),(z,w)$, would be cheaper if it was instead routed along any other $vw$-path using only original edges. So the optimal solution does not use these edges, and thus the optimal flow on the original graph and the new graph is identical.
This assumes that it is possible to re-route the flow. If that is impossible, then that would be there is no path from $v$ to $w$ in the residual graph and thus the demand of the original problem cannot be satisfied.
\begin{fact}\label{fact:sameopt}
    $\OPT(\tilde{G})=\OPT(G).$
\end{fact}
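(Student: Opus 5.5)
We prove the two inequalities $\OPT(\tilde G)\le \OPT(G)$ and $\OPT(\tilde G)\ge \OPT(G)$ separately, assuming (as the surrounding discussion does) that the original instance on $G$ is feasible; if it is infeasible we instead adopt the convention that the algorithm detects this via flow remaining on star edges, and the fact is read as applying only to feasible instances.

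\textbf{The easy direction.} Any feasible flow $x$ for $G$ extends to a feasible flow $(x,0)$ for \eqref{eq:modFlowLP}. The zero vector respects $0\le \tilde x\le \tilde u$, and the demands are unchanged because star edges carry nothing. The cost is identical, so $\OPT(\tilde G)\le \OPT(G)$.

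\textbf{The reverse direction.} Let $(x,\tilde x)$ be an optimal solution of \eqref{eq:modFlowLP}; we show we may take $\tilde x=0$.

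\begin{enumerate}
\item Decompose the flow $(x,\tilde x)$ into paths and cycles, and compare it against some fixed feasible flow $x^*$ of $G$ by forming the difference $(x,\tilde x)-(x^*,0)$. This difference is a circulation in the residual graph of $(x,\tilde x)$, so it decomposes into residual cycles.
\item Every cycle in this decomposition that traverses a star edge in the forward direction contains at least one such edge. Traversing that cycle in the reverse direction therefore lowers the cost by at least $\tilde c_e = 50m\|u\|_\infty\|c\|_\infty$, minus the cost of the remaining original edges on the cycle.
\item The remaining original edges contribute at most $m\|c\|_\infty$ per unit of flow. The flow amount is bounded by $\|u\|_\infty$, so these edges cannot offset the star cost in total.
\end{enumerate}

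Pushing flow back along these cycles strictly decreases the cost unless $\tilde x=0$. Hence an optimal solution has $\tilde x=0$, its $x$-part is feasible for $G$, and $\OPT(\tilde G)\ge\OPT(G)$.

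\textbf{Main obstacle.} The delicate step is the cost comparison when flow values are fractional. We must bound the cost change per unit of flow rather than in total. Per unit, a cycle through a forward star edge saves at least $\tilde c_e - m\|c\|_\infty > 0$, because $\tilde c_e$ alone exceeds $m\|c\|_\infty$ and each simple cycle uses at most $m$ original edges. This per-unit inequality is enough to finish, so the cruder total-flow bound in step 3 is not needed.
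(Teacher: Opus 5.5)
Your proof is correct. It rests on the same idea as the paper's: a star edge costs more than $m\|c\|_\infty$, which bounds the cost of any simple path of original edges, so star flow can always be rerouted more cheaply. You formalize the rerouting through a different decomposition, though.

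The paper argues locally. A unit of flow on $(v,z),(z,w)$ is replaced by a residual $v$--$w$ path of original edges, and if no such path exists the paper concludes the original demands are unsatisfiable. You argue globally. You fix a feasible $x^*$ for $G$, decompose the circulation between $(x,\tilde x)$ and $(x^*,0)$ into cycles, and observe that every cycle through $z$ has negative cost per unit.

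Your route buys rigor. The rerouting is guaranteed to exist because it comes from $x^*$, and the decomposition itself decides which incoming star edge is matched with which outgoing one. The paper's pairwise step needs exactly this re-pairing. For an arbitrarily chosen pair $(v,z),(z,w)$ there may be no residual $v$--$w$ path even though $G$ is feasible, e.g.\ with star flow $v\to z\to w$ and $w'\to z\to v'$ while the only original edges are the unused edges $v\to v'$ and $w'\to w$. You also make the easy direction $\OPT(\tilde G)\le\OPT(G)$ explicit. The paper's version buys only brevity.

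Two small clean-ups. First, the circulation that is feasible in the residual graph of $(x,\tilde x)$ is $(x^*,0)-(x,\tilde x)$, not its negation. You should also take a conformal (sign-consistent) cycle decomposition, so that pushing any subset of its cycles keeps each coordinate between its values in the two flows, and hence within capacity. Second, drop step 3. Bounding the flow amount by $\|u\|_\infty$ is not justified, since the star capacities $\tilde u_e$ can exceed it, and the per-unit bound in your last paragraph already suffices.
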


\subsection{Final Point \& Existence of Nearby Exact Solution}
\newcommand{\stx}{\begin{bmatrix} x \\ \tilde{x} \end{bmatrix}}
\newcommand{\sts}{\begin{bmatrix} s \\ \tilde{s} \end{bmatrix}}
\newcommand{\stfix}{\begin{bmatrix} x^\final \\ \tilde{x}^\final \end{bmatrix}}
\newcommand{\stoptx}{\begin{bmatrix} x^\OPT \\ \tilde{x}^\OPT \end{bmatrix}}
\newcommand{\stfis}{\begin{bmatrix} s^\final \\ \tilde{s}^\final \end{bmatrix}}

We now need to show that after running our algorithm we have indeed found a near optimal solution.

We prove the following lemma.

\begin{lemma}\label{lem:lastPointApproxOpt}
    Given an $\epsilon$-centered point $\left( \begin{bmatrix} x \\ \tilde{x} \end{bmatrix}, \begin{bmatrix} s \\ \tilde{s}\end{bmatrix}, \mu \right)$ of \eqref{eq:modFlowLP} where $\mu<\epsilon/n^2$, there exists $x^\OPT$ which is an optimal flow for $G$ for which for $\tilde x^\OPT=0$ we have $\|\stx-\stoptx\|_\infty\lesssim \epsilon W$.
\end{lemma}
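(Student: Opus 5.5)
We will prove the lemma in two stages. First we use centrality to show that $\stx$ is nearly optimal and nearly feasible. Then we use the integrality of the data to turn this into closeness to an exact optimal flow that puts no flow on the star edges.

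\textbf{Step 1 (near optimality and near feasibility).} For near optimality we invoke the duality-gap argument behind \Cref{lem:ipm}. Write $s = c - \tilde{\mA} y$ for the slack guaranteed by dual feasibility. For any feasible $x'$ we have $c^\top x - c^\top x' = s^\top(x-x') + y^\top \tilde{\mA}^\top(x-x')$. Approximate centrality gives $s_e \approx -\mu\tau_e\phi'_e(x_e)$ up to a factor $1\pm\epsilon$ in the normalized sense. Since $|\phi'_e(x_e)(x_e-x'_e)| \le O(1)$ for $x'_e \in [0,u_e]$ and $\sum_e \tau_e = O(n)$, the first term is at most $O(n\mu) \le O(\epsilon/n)$. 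The second term is controlled by approximate feasibility: $\tilde{\mA}^\top x - b$ is tiny in the $(\tilde{\mA}^\top(\mT\Phi'')^{-1}\tilde{\mA})^{-1}$-norm, and rerouting this residual demand costs at most $\poly(n,W)\cdot\epsilon\gamma$. We first round $\stx$ to an exactly feasible flow $\stx'$ by routing the residual demand through the star. Because $\mu<\epsilon/n^2$ makes every centrality-induced error polynomially small, this moves each coordinate by at most $\epsilon$. The resulting bound is $c^\top x' - \OPT(\tilde G) \lesssim \epsilon$, where $\OPT(\tilde G)=\OPT(G)$ by \Cref{fact:sameopt}.

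\textbf{Step 2 (from value closeness to $\ell_\infty$ closeness).} Near-optimality in value does not by itself give closeness in flow, since the optimal face may be large. We therefore argue via the flow decomposition and a circulation argument. Let $F$ be the optimal face of \eqref{eq:modFlowLP}. Take $x^*\in F$ minimizing $\|x'-x^*\|_1$, and decompose $x'-x^*$ into conformal cycles in the residual graph of $x^*$. A cycle that is itself feasible to cancel and has zero cost would contradict minimality. Hence every cycle in the decomposition has positive reduced cost, and since costs are integral that cost is at least $1$, after the isolation/integer-cost normalization where $\tilde c$ is also an integer. Summing over the cycles gives $\sum_C \text{flow}(C)\cdot 1 \le c^\top x' - c^\top x^* \lesssim \epsilon$, so each coordinate of $x'-x^*$ is $O(\epsilon)$. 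The residual-capacity interaction makes this charging argument delicate, because canceling a cycle toward $x^*$ requires the reverse orientation to be feasible at $x^*$. We handle this by performing the cycle decomposition on $x'-x^*$ itself, where conformality guarantees feasibility, and we lose at most a factor $W$ from bounding cycle costs by $nW$. This produces the $\epsilon W$ in the statement. Finally we check that $\tilde x^* = 0$. Any optimum using a star edge $(v,z),(z,w)$ could be rerouted more cheaply, as in the argument before \Cref{fact:sameopt}, so every optimal flow has $\tilde x^\OPT=0$, and $x^\OPT := x^*$ restricted to $E$ is optimal for $G$.

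\textbf{Main obstacle.} The hard part is Step 2: converting an additive $\epsilon$ guarantee on the cost into a per-edge $\ell_\infty$ guarantee. The key is to choose $x^*$ as the optimal flow closest to $x'$, so that every cycle in the difference has strictly positive integral cost. Then we must track carefully how Step 1's rounding and the bounded cycle lengths contribute the factor $W$.
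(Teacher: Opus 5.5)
Your two-stage plan matches the paper's: first an exactly feasible, near-optimal point close to the given one, then integral costs plus a cycle decomposition. Your Step 2 is sound. Choosing the optimum $x^*$ closest to $x'$ and decomposing $x'-x^*$ conformally forces every cycle to have integral cost at least $1$, so $\|x'-x^*\|_\infty$ is at most the cost gap. This is slightly sharper than the paper's bound $\mu n^2$. No factor $W$ is lost there, however, and the isolation lemma is irrelevant since $\tilde{c}$ is already integral.

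The genuine gap is Step 1. Approximate feasibility in Definition~\ref{def:epscentered} only says $\|d\|_{(\tilde{\mA}^\top(\mT\Phi'')^{-1}\tilde{\mA})^{-1}}\le\epsilon\gamma/C_{\mathrm{norm}}$ for $d:=\tilde{\mA}^\top x-b$. This bound does not depend on $\mu$, and centrality is likewise a relative condition. So $\mu<\epsilon/n^2$ does not make $d$ polynomially small. An edge with $x_e$ near $u_e/2$ has $\phi''_e(x_e)^{-1/2}$ of order $u_e$, so $d$ can be as large as order $\epsilon\gamma W$ in absolute terms. Routing this through star edges, which cost $50m\|u\|_\infty\|c\|_\infty$ per unit, is not a controlled perturbation. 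Your own estimate of the rerouting cost, $\poly(n,W)\cdot\epsilon\gamma$, is far from $\epsilon$. Hence neither ``each coordinate moves by at most $\epsilon$'' nor ``$c^\top x'-\OPT(\tilde{G})\lesssim\epsilon$'' follows, and feeding such a gap into Step 2 yields nothing useful. Separately, $|\phi'_e(x_e)(x_e-x'_e)|=O(1)$ fails for arbitrary $x'_e\in[0,u_e]$; only the one-sided bound that uses the sign of $s_e$ holds.

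The missing idea is to restore feasibility by the barrier-weighted electrical projection
\[
x^{\final}=x-(\mT\Phi'')^{-1}\tilde{\mA}\bigl(\tilde{\mA}^\top(\mT\Phi'')^{-1}\tilde{\mA}\bigr)^{-1}d,
\]
which spreads the correction over all edges in proportion to their local scale. Two facts bound its size:
\begin{enumerate}
\item Cauchy--Schwarz, $|a_e^\top L^{-1}d|\le\|a_e\|_{L^{-1}}\|d\|_{L^{-1}}$ with $L=\tilde{\mA}^\top(\mT\Phi'')^{-1}\tilde{\mA}$;
\item the Lewis-weight property that the leverage scores of $(\mT\Phi'')^{-1/2}\tilde{\mA}$ are $O(\tau)$.
\end{enumerate}
Together they give $\|\Phi''(x)^{1/2}(x^{\final}-x)\|_\infty\lesssim\epsilon$, which is Claim A.6 of \cite{bll+21}, used in Lemma~\ref{lem:lastPointApproxFinal}.

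Since $x^{\final}_e$ and $u_e-x^{\final}_e$ stay within a $(1\pm O(\epsilon))$ factor of $x_e$ and $u_e-x_e$, the weak-duality gap $\sum_e s_e^+x^{\final}_e+s_e^-(u_e-x^{\final}_e)$ is $O(\mu\sum_e\tau_e)=O(\mu n)$. There is no $y^\top d$ term because $x^{\final}$ is exactly feasible (Lemma 7.7 of \cite{bll+21}).

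This is also where the factor $W$ really comes from, and it cannot be removed by shrinking $\mu$: $|x^{\final}_e-x_e|\lesssim\epsilon\,\phi''_e(x_e)^{-1/2}\le\epsilon W$ by Fact~\ref{fac:x_y_a}. Your Step 2 applied to the gap $O(\mu n)=O(\epsilon/n)$ then adds only $O(\epsilon)$, giving the claimed $\epsilon W$ bound.
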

\begin{proof}
    Note from \Cref{fact:sameopt} we have $x^\OPT$ being an optimal flow for $G$ is equivalent to $\stoptx$ being an optimal flow for $\tilde G$.
    Then, this lemma follows directly from \Cref{lem:lastPointApproxFinal,lem:finalApproxOpt} when taking $\mu\le\frac{\epsilon}{n^2}$.
\end{proof}

\begin{lemma}[{Extension of Lemma 7.7 of \cite{bll+21}}]\label{lem:lastPointApproxFinal}
    Given an $\epsilon$-centered point 
    $\left( \begin{bmatrix} x \\ \tilde{x} \end{bmatrix},
    \begin{bmatrix} s \\ \tilde{s}\end{bmatrix}, \mu \right)$ 
    of \eqref{eq:modFlowLP} where $\mu<1/n$, there exists $\begin{bmatrix} x^\final \\ \tilde{x}^\final \end{bmatrix}$ which is a feasible flow for $\tilde{G}$ where $c^\top x^\final+\tilde{c}^\top\tilde{x}^\final\le\OPT(\tilde{G})+\mu n$. 
    Moreover, $\|\stx-\stfix\|_\infty\lesssim\epsilon W$.
\end{lemma}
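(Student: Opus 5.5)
The plan is to follow the rounding argument behind Lemma 7.7 of \cite{bll+21}, keeping track of how far the rounded point moves in $\ell_\infty$. The lemma has two parts. First, $\epsilon$-centrality must yield a nearby exactly feasible flow whose cost exceeds $\OPT(\tilde G)$ by at most $\mu n$. Second, that flow must be $O(\epsilon W)$-close to $\stx$ entrywise.

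\textbf{Feasibility repair.} The approximate feasibility condition of \Cref{def:epscentered} says that the demand error $d := \tilde{\mA}^\top \stx - F_{es,t}$ is small:
\[
\|d\|_{(\tilde\mA^\top(\mT\Phi''(x))^{-1}\tilde\mA)^{-1}} \le \epsilon\gamma/C_{\mathrm{norm}} .
\]
To cancel this error, I would take the weighted electric correction
\[
\delta := -(\mT\Phi'')^{-1}\tilde\mA\,(\tilde\mA^\top(\mT\Phi'')^{-1}\tilde\mA)^{-1} d .
\]
This $\delta$ satisfies $\tilde\mA^\top\delta = -d$. A direct computation shows $\|\delta\|_{\mT\Phi''}^2$ equals the squared norm of $d$ above. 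Since $\tau\ge n/m$ is only polynomially small, this gives the entrywise bound
\[
|\delta_e| \le \frac{\epsilon\gamma}{C_{\mathrm{norm}}\sqrt{\tau_e\phi''_e}} \le \epsilon\cdot\min(x_e, u_e-x_e)\cdot\sqrt{m/n} .
\]
The factor $\sqrt{m/n}$ needs care. I would absorb it by using instead the bound $\sqrt{\tau_e}\ge\sigma$-type lower bounds, or by accepting the loss into the constant hidden in $\lesssim$ together with $\gamma = O(\epsilon/\log m)$. If that fails, I would route the residual demand $d$ through the star edges, which have capacity $\ge 2$ and carry flow $\ge 1$, so that the repair stays inside the box. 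Either way, $x+\delta$ stays strictly within $[0,u]$ and moves each coordinate by at most $\epsilon\min(x_e,u_e-x_e)\le \epsilon W$.

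\textbf{Cost bound.} Dual feasibility gives $\tilde\mA z+s=c$. Approximate centrality gives $s_e \approx -\mu\tau_e\phi'_e(x_e)$ up to an additive $\epsilon\mu\tau_e\sqrt{\phi''_e}$. The standard duality-gap computation then bounds the gap for any feasible $x'$ near $x$:
\[
c^\top x' - \OPT \le \sum_e s_e\,(x'_e - x^*_e) \lesssim \mu\sum_e\tau_e = \mu\,O(n),
\]
using $|\phi'_e(x_e)|\cdot|x_e-x^*_e| \le 2$ on the box and $\sum_e\tau_e = O(n)$. Here $x^*$ is an optimal solution. The gap contribution from the correction $\delta$ is controlled by Cauchy--Schwarz in the $\mT\Phi''$ norm. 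Taking $\stfix := \stx+\delta$ yields both claims, with the constant in $\mu n$ handled by the $\lesssim$/rescaling as in \cite{bll+21}.

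\textbf{Main obstacle.} I expect the hard step to be the $\ell_\infty$ closeness $\|\delta\|_\infty\lesssim\epsilon W$. The issue is converting the $\mT\Phi''$-weighted norm bound into an unweighted one without losing polynomial factors from small $\tau_e$. I would first try the bound $|\delta_e|\le \|\delta\|_{\mT\Phi''}/\sqrt{\tau_e\phi''_e}$, then use $1/\sqrt{\phi''_e}\le \min(x_e,u_e-x_e)\le W$ together with the fact that the feasibility slack $\epsilon\gamma/C_{\mathrm{norm}}$ is small enough to dominate $\sqrt{m/n}$.
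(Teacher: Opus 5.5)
\textbf{There is a genuine gap.} Your correction $\delta$ is the right object, but the proposal never proves the one estimate everything depends on: an entrywise bound $\|\Phi''(x)^{1/2}\delta\|_\infty\ll 1$.

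This bound is needed beyond the ``moreover'' part. Without it you do not know that $0\le x+\delta\le u$. So even the existence of a \emph{feasible} $x^\final$ is unproven, and your duality-gap argument assumes a feasible $x'$.

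The bound you actually derive, $|\delta_e|\le \|\delta\|_{\mT\Phi''}/\sqrt{\tau_e\phi''_e}$, loses a factor $\tau_e^{-1/2}$. This factor can be as large as $\sqrt{m/n}=\Theta(\sqrt n)$ on dense graphs, and none of your remedies removes it:
\begin{itemize}
\item A polynomial factor cannot be hidden in the constant of $\lesssim$.
\item It cannot be dominated by $\epsilon\gamma/C_{\mathrm{norm}}$, which is only $1/\polylog(m)$ small.
\item Rerouting $d$ through the star edges fails. At a late central-path point the (very expensive) star edges need not carry flow $\ge 1$. Also, your bound on $d$ is in the $(\tilde{\mA}^\top(\mT\Phi'')^{-1}\tilde{\mA})^{-1}$-norm, so it gives no $\ell_\infty$ control of those changes.
\end{itemize}
This missing estimate is exactly what the paper imports as Claim A.6 of \cite{bll+21}: $\|\Phi''(x)^{1/2}(x^\final-x)\|_\infty\lesssim\epsilon$. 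It also cites Lemma 7.7 for the cost part. After that, its proof is just the conversion via $1/\sqrt{\phi''_e}\le W$ that you also use.

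\textbf{How to close the gap yourself.} Do not pass through $\|\delta\|_{\mT\Phi''}$; apply Cauchy--Schwarz row by row. Let $\mD:=\mT\Phi''(x)$, $\mM:=\tilde{\mA}^\top\mD^{-1}\tilde{\mA}$, and let $a_e$ be the $e$-th row of $\tilde{\mA}$. Then $\delta_e=-(\tau_e\phi''_e)^{-1}a_e^\top\mM^{-1}d$ and $a_e^\top\mM^{-1}a_e=\tau_e\phi''_e\,\sigma(\mD^{-1/2}\tilde{\mA})_e$, so
\[
\sqrt{\phi''_e}\,|\delta_e|\le\sqrt{\sigma(\mD^{-1/2}\tilde{\mA})_e/\tau_e}\;\|d\|_{\mM^{-1}} .
\]
Now use the Lewis-weight property. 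Write $\mD^{-1/2}\tilde{\mA}=\mT^{1/p-1}\,\mT^{1/2-1/p}\Phi''(x)^{-1/2}\tilde{\mA}$. Since $\tau\in[n/m,2]$ and $1/p-1=O(1/\log(4m/n))$, the row scaling $\mT^{1/p-1}$ has entries within a constant ratio of each other. Hence $\sigma(\mD^{-1/2}\tilde{\mA})_e\lesssim\sigma(\mT^{1/2-1/p}\Phi''(x)^{-1/2}\tilde{\mA})_e\le\tau_e$. This gives $\|\Phi''(x)^{1/2}\delta\|_\infty\lesssim\epsilon\gamma/C_{\mathrm{norm}}<1$, which yields box-feasibility and $|\delta_e|\lesssim\epsilon\min(x_e,u_e-x_e)\le\epsilon W$ at once.

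\textbf{A smaller error in your cost bound.} The claim $|\phi'_e(x_e)|\cdot|x_e-x^*_e|\le 2$ is false: let $x_e\to 0$ with $x^*_e=u_e$. Only the one-sided bound $\phi'_e(x_e)(x^*_e-x_e)\le 2$ holds. The centrality error term $\mu\tau_e\sqrt{\phi''_e}\,v_e(x_e-x^*_e)$ must be absorbed into those same one-sided terms rather than bounded separately. This works because $\sqrt{\phi''_e}\le 1/x_e+1/(u_e-x_e)$, so the error is at most $\epsilon$ times the absolute values of the two one-sided terms.
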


\begin{proof}
    The first part of the lemma is simply Lemma 7.7 of \cite{bll+21}.\\
    To show the moreover part, we use the following claim:
    \begin{claim}[{Claim A.6 of \cite{bll+21}}]\label{clm:finalclose}
        \[\left\|\Phi''\left(\stx\right)^{\frac{1}{2}}\left(\stfix-\stx\right)\right\|_{\infty}\lesssim\epsilon 
        \]
    \end{claim}
    Then we have for any edge $e$ (including star edges) $\phi''(x_e)=\left(\frac{1}{x_e^2}-\frac{1}{(u_e-x_e)^2}\right)\lesssim\epsilon$. Recall $0\le x_e\le u_e\le W$. Then 
    \[
        |x_e^\final-x_e|\lesssim\frac{\epsilon}{(1/x_e^2+1/(u_e-x_e)^2)^{1/2}}\le\frac{\epsilon}{(2/W^2)^{1/2}}\le \epsilon W.
    \]
    where the second step follows from $1/x_e^2 + 1/(u_e-x_e)^2 \geq 2/ W^2$ for all $0 \leq x_e \leq u_w \leq W$ (see Fact~\ref{fac:x_y_a}) and the last step follows from simple algebra.
\end{proof}

Now we need to show that this final point is also close to the exact optimal point to conclude that the point found by the IPM is too. Note that it suffices to show that $\stfix$ is close to an optimal point of $\tilde G$ as any optimal flow for $\tilde G$ would not use any of the new star edges.

\begin{lemma}\label{lem:finalApproxOpt}
    Let $\stfix$ be a feasible flow for $\tilde G$ where $c^\top x^\final+\tilde c^\top\tilde x^\final\le\OPT(\tilde G)+\mu n$. Then there exists an optimal flow $\stoptx$ of $\tilde G$ with $\tilde x^\OPT=0$ such that $\|\stfix-\stoptx\|_\infty\le\mu n^2$.
\end{lemma}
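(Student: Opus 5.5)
Let $\stfix$ be the given feasible flow, $\stoptx$ an optimal flow, and $\delta := \stfix - \stoptx$. I would argue via flow decomposition of this difference in the residual sense. The difference $\delta$ is a circulation in $\tilde G$, since both flows satisfy $\tilde\mA^\top(\cdot) = F_{es,t}$. First I would choose, among all optimal flows, one minimizing $\|\delta\|_1$ (equivalently, an optimal flow closest to $\stfix$ in $\ell_1$). By the argument behind \Cref{fact:sameopt}, we may also take $\tilde x^\OPT = 0$: any optimum using a star edge $(v,z),(z,w)$ can be rerouted more cheaply through original edges, so every optimum of $\tilde G$ already has zero flow on star edges.

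Next I would decompose $\delta$ into at most $|\tilde E|$ signed cycles $C_1,\dots,C_k$ with positive amounts $\alpha_1,\dots,\alpha_k$, conformal to the signs of $\delta$. Conformality gives three properties. Each cycle $C_i$, with its orientation, is a feasible augmenting direction at $x^\OPT$: it increases only edges where $x^\final > x^\OPT$, which are not at capacity in $x^\OPT$, and it decreases only edges where $x^\final < x^\OPT$, which are positive. Also, the cost change satisfies $c^\top x^\final + \tilde c^\top \tilde x^\final - \OPT = \sum_i \alpha_i\, c(C_i)$. Finally, $|\delta_e| \le \sum_{i: e\in C_i} \alpha_i$. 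By optimality of $x^\OPT$, every $c(C_i) \ge 0$. If some $c(C_i) = 0$, augmenting $x^\OPT$ by a small multiple of $C_i$ along it yields another optimal flow that is strictly closer to $\stfix$ in $\ell_1$, contradicting the choice. Hence $c(C_i) \ge 1$ for every $i$, because costs are integral.

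Then $\sum_i \alpha_i \le \sum_i \alpha_i c(C_i) \le \mu n$, and so $\|\delta\|_\infty \le \sum_i \alpha_i \le \mu n \le \mu n^2$.

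The main obstacle is the step showing that zero-cost cycles cannot appear. This needs the $\ell_1$-closest choice to be attained, which holds because the optimal face is a compact polytope. It also needs a check that augmenting by $\theta C_i$ for small $\theta>0$ reduces $\|\delta\|_1$ by exactly $\theta|C_i|$; this holds by conformality, since each edge of $C_i$ moves toward $x^\final$. A second subtlety is the star-edge cost $\tilde c$, which is integral, so integrality of cycle costs is preserved. The loose factor $n^2$ in the statement, compared with the $\mu n$ obtained here, leaves slack if only the bound $c(C_i) \ge 1/n$ were available.
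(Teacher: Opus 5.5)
Your proof is correct, but it takes a different route from the paper. The paper argues algorithmically. It starts from $x^\final$ and repeatedly cancels negative-cost cycles in the residual graph until it reaches an optimum. By integrality, each canceled cycle costs at most $-1$ per unit, so the total augmented amount is at most $\mu n$. Since each cycle touches at most about $n$ edges, the $\ell_1$-distance, and hence the $\ell_\infty$-distance, is at most $\mu n^2$.

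You instead fix the $\ell_1$-nearest optimum and take a conformal cycle decomposition of the difference. Minimality of the chosen optimum rules out zero-cost cycles. This plays exactly the role that ``only strictly negative cycles are canceled'' plays in the paper: both prevent the optimum from drifting along the optimal face.

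What your version buys:
\begin{itemize}
\item It is purely existential, so it sidesteps a point the paper leaves implicit. That point is whether cycle canceling on a fractional flow with real capacities terminates at an optimum, which holds for, e.g., minimum-mean cycle canceling but not for arbitrary cycle choices.
\item It bounds each $|\delta_e|$ directly by $\sum_i \alpha_i$ rather than passing through $\|\cdot\|_1$, so it yields the sharper bound $\mu n$. The paper's argument would give $\mu n$ too with the same per-edge observation.
\end{itemize}
The paper's argument is shorter and more intuitive.

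Both proofs rely in the same way on integrality of all costs, including $\tilde c$. Both also rely on the rerouting argument that no optimum of $\tilde G$ uses star edges, which tacitly assumes the original instance on $G$ is feasible.
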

\begin{proof}
    $\stfix$ is feasible but not optimal. Recall that we can get an optimal flow by repeatedly improving the value of a feasible flow by finding a negative cost cycle in the residual graph induced by this flow, then augmenting the flow by this cycle. Let $\stoptx$ be the optimal flow created by augmenting $\stfix$ in this way.\\
    Note that our edge costs are integral; thus, each cycle has cost at most -1. Since we have $c^\top x^\final+\tilde c^\top\tilde x^\final\le\OPT(\tilde G)+\mu n$, and each cycle can only use $n$ edges, we have $\|\stfix-\stoptx\|_\infty\le\|\stfix-\stoptx\|_1\le\mu n^2$.
\end{proof}

\subsection{Putting Everything Together}
\label{sec:everything_together}

In this section, we prove the main result \Cref{thm:main_communication,thm:main_stream}.

\mainStreamFlows*

\begin{proof}
We use the interior point method \Cref{alg:short_step}, specifically \Cref{thm:IPM}.
This method requires an $\epsilon$-centered initial point $(x,s,\mu)$ with query time $T_0$, and a target $\mu^\target$.

We use \Cref{lem:init_flow} to find an $\epsilon/W$-centered initial point,
which also gives us a $T_0=\tilde O(1)$ initial query time. 
By \Cref{lem:initialPointModFlow} this constructed $\epsilon/W$-centered point uses $\mu^\init=100m^2W^4\epsilon^{-1}$.

We now run our interior point method on this initial point with target parameter $\mu^\target=\epsilon/(Wn^2)$.
\Cref{thm:IPM} tells us we need $\tilde O(n^{1.5}\log(W/\epsilon)$ space and $\tilde O(\sqrt{n}\log(W/\epsilon))$ passes to construct the resulting flow $x$.
The total runtime is $\tilde O(mn\log^2(W/\epsilon))$.

The constructed flow $x\in\R^E$ is not an optimal min-cost flow, since (i) the cost is slightly too large, (ii) the flow might use the star-edge that were added in the initial point construction.

However, by \Cref{lem:lastPointApproxOpt} and accuracy parameter $\epsilon/W$ we see that for $\mu^\target=\epsilon/(Wn^2)$ our constructed $x$ is such that there exists an optimal solution $x^\OPT$ with $\|x-x^\OPT\|_\infty\le\epsilon$. 
In particular, this optimal flow does not route any flow on the star-edges that were added in the initial point construct, so $x^\OPT$ is an optimal flow for the original input graph.

Thus $f=x^\OPT$ is the optimal flow, and $\of=x$ is the flow for which we can query edges, and we have $\|f-\of\|_\infty\le\epsilon$.

By \Cref{thm:IPM}, we use $\tilde O(n^{1.5}\log (W/\epsilon))$ space with query time on each edge $\tilde O(\sqrt{n}\log(W/\epsilon))$.
\end{proof}

Next, we analyze our algorithm in the context of communication complexity. In this setting, we are only limited by the number of bits sent between the players and not the bits stored, so we can use the following isolation lemma to find an exact flow.
\begin{lemma}[{\cite{ds08}, or Lemma 7.8 of \cite{bll+21}}]\label{lem:isolation}
    Let $\Pi = (G, b, c)$ be an instance for minimum-cost flow problem where $G$ is a directed graph with $m$ edges, the demand vector $b\in \{-W,\dots, W\}^V$, the cost vector $c\in \{-W,\dots , W \}^E$ and the capacity vector $u \in \{0, \dots, W \}^E$.
    
    Let the perturbed instance $\Pi' = (G, b, c' )$ be such that $c'_e = c_e + z_e$ where $z_e$ is a random number from the set $\left\{\frac{1}{4m^2W^2},\dots,\frac{2mW}{4m^2W^2}\right\}$. Let $x'$ be a feasible flow for $\Pi'$ whose cost is at most $\OPT(\Pi') + \frac1{12m^2 W^3}$ where $\OPT(\Pi')$ is the optimal cost for problem $\Pi'$. With probability at least 1/2, there exists an optimal feasible and integral flow $x$ for $\Pi$ such that $\|x - x'\|_\infty\le 1/3$.
\end{lemma}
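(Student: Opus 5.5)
The plan is the classical isolation argument of Daitch--Spielman, followed by a rounding step that uses the additive slack $\frac{1}{12m^2W^3}$. Since the perturbation preserves feasibility, the feasible polytope $P=\{x:\mA^\top x=b,\,0\le x\le u\}$ is the same for $\Pi$ and $\Pi'$. It is integral because $\mA$ is totally unimodular, so both problems attain their optima at integral vertices of $P$.

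\emph{Step 1: isolation of a unique optimal vertex.} Among the integral vertices that are optimal for $\Pi$, I want the perturbation $z$ to single out exactly one, and to make that vertex also the unique optimum of $\Pi'$.
\begin{itemize}
\item All perturbations together contribute at most $\sum_e z_e u_e\le m\cdot\frac{2mW}{4m^2W^2}\cdot W=\frac12$. Any non-optimal integral vertex of $\Pi$ has $c$-cost at least $\OPT(\Pi)+1$, because $c$ is integral. Hence every optimum of $\Pi'$ among vertices is an optimal vertex of $\Pi$.
\item Among the optimal vertices of $\Pi$, the perturbed cost is $\OPT(\Pi)+z^\top x$ with $x$ integral in $\{0,\dots,W\}^E$.
\item I apply the isolation lemma with weighted coordinates, in the Daitch--Spielman form: each $z_e$ is uniform from a set of size $2mW$ and each coordinate takes at most $W+1$ values. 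Then with probability at least $1/2$ the minimizer of $z^\top x$ over this finite set of integral vectors is unique.
\end{itemize}
The standard proof fixes an edge $e$ and all $z_{e'}$ with $e'\ne e$. It then observes that there is at most one ``bad'' value of $z_e$ for each of the $O(W)$ pairs of values of $x_e$ that could tie. A union bound over the $m$ edges gives failure probability at most $mW/(2mW)=1/2$.

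\emph{Step 2: a gap.} Let $x^*$ be the unique optimal vertex of $\Pi'$. Every other vertex $w$ of $P$ has perturbed cost at least $\OPT(\Pi')+\frac{1}{4m^2W^2}$. This holds because perturbed costs are multiples of $\frac{1}{4m^2W^2}$ when $x$ is integral, and $c$ is integral.

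\emph{Step 3: rounding the near-optimal $x'$.} Write $x'$ as a convex combination of vertices of $P$. The total weight $\lambda$ on vertices other than $x^*$ satisfies
\[
\lambda\cdot\tfrac{1}{4m^2W^2}\le \tfrac{1}{12m^2W^3},
\qquad\text{so}\qquad
\lambda\le \tfrac{1}{3W}.
\]
Each coordinate of any vertex differs from $x^*_e$ by at most $W$. Therefore $|x'_e-x^*_e|\le \lambda W\le 1/3$, and $x=x^*$ is an optimal integral flow for $\Pi$, as required.

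I expect the main obstacle to be Step 1, where non-$0/1$ integer coordinates must be handled. The set of perturbations is designed for this: $2mW$ choices per edge against at most $W$ potential tie-values. I will follow the Daitch--Spielman argument for that step.
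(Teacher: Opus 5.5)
The paper does not prove this lemma. It imports it from \cite{ds08} and \cite[Lemma 7.8]{bll+21}, so the only comparison available is with the standard argument in those sources. Your proposal is a correct, self-contained reconstruction of that Daitch--Spielman argument:
\begin{itemize}
\item total unimodularity and $\sum_e z_e u_e\le\frac12$ show that every optimal vertex of $\Pi'$ is an optimal integral vertex of $\Pi$;
\item isolation for linear forms with coefficients in $\{0,\dots,W\}$ gives a unique optimal vertex $x^*$ with probability at least $1/2$;
\item integrality of $c$ and the granularity $\frac{1}{4m^2W^2}$ of $z$ give the gap;
\item the convex-combination bound $\lambda\le\frac{1}{3W}$ gives $\|x'-x^*\|_\infty\le\frac13$.
\end{itemize}

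The one step to state more carefully is the count of bad values in Step~1.
\begin{itemize}
\item As worded (``one bad value of $z_e$ per pair of values of $x_e$''), there are $\binom{W+1}{2}$ pairs. That gives $\Theta(W^2)$ bad values, not $O(W)$.
\item Even a bound of $O(W)$ would not yield the exact constant $1/2$. You need at most $W$ bad values per edge.
\end{itemize}
The correct count goes as follows. Fix $z_{e'}$ for all $e'\neq e$. For $j\in\{0,\dots,W\}$, let $m_j$ be the minimum of $\sum_{e'\neq e}z_{e'}x_{e'}$ over optimal vertices of $\Pi$ with $x_e=j$. Then $\min z^\top x=\min_j(jz_e+m_j)$, which is the lower envelope of at most $W+1$ lines with distinct slopes. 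Such an envelope has at most $W$ breakpoints, since the active slope strictly decreases at each one. If two minimizers differ in coordinate $e$, both slopes attain the minimum, so $z_e$ must be one of these breakpoints. Hence edge $e$ is bad with probability at most $W/(2mW)$, and the union bound over $m$ edges gives exactly $1/2$. With this fix the proof is complete.
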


\mainCommunication*
\begin{proof}
    \Cref{thm:main_communication} follows directly in the same way as \Cref{thm:main_stream} while applying \Cref{thm:2partyIPM}.
    For this, we need to show that we can indeed start with $(x^\init, s^\init, \mu^\init)$ for both parties.
    $\mu^\init$ is a single predetermined scalar, and
    for all non-star edges $e$, each party can query $x_e^\init$ and $s_e^\init$ for their own edge set, because they are simply $u_e/2$ and $c_e$, respectively, which each party knows for their own edge set.
    The flow on star edges $e'$ depend on the vertex demands: since there are only $O(n)$ vertices each party can add up the demands from their edges and send them to each other using $O(n)$ bits of communication to both know the total vertex demands and determine $x_{e'}^\init$ and $s_{e'}^\init$. 

    Then, we can continue with the algorithm (\Cref{thm:2partyIPM}) until both parties know the final transcript. Then each party can query the flow on their own edge set via the transcript.

    Now, to make the flow unique, we very slightly modify the above algorithm. Before we do anything we first perturb the edge costs $c$ to $c'$ according to \Cref{lem:isolation}. %
    Storing the perturbations takes $O(m)$ space, which can be larger than $O(n^{1.5})$, but crucially neither player needs to send this information to the other, as each keeps track of its own edges. 

    Then, after completing the algorithm, consider our final flow $\stfix$. 
    \Cref{lem:isolation} guarantees with probability at least $1/2$ that there exists an integral min cost flow $\stoptx$ with each edge value differing by at most $1/3$. Thus, with probability at least $1/2$, if whenever we query the flow along an edge we instead round it to the nearest integer, we recover $\stoptx$. Moreover, by \Cref{lem:finalApproxOpt}, we know $\|\tilde x^\final\|_\infty$ can be made less than $1/3$, so indeed the rounded $\stfix$ is feasible and optimal for $G$.

    Finally, we can boost our success probability to high probability by resampling the random edge costs and repeating our algorithm $O(\log n)$ times. We can verify whether each flow is feasible after rounding by comparing the vertex demands, i.e., having both parties exchange the net-flow on each vertex and checking that they add up to demand $b$, which uses $O(n)$ space. Then in the end, we choose the flow with the smallest cost among all constructed feasible flows. 
    
\end{proof}

\paragraph{Remark on Bit-complexity}
The bit complexity can be bounded via the following lemma from \cite{bll+21}.

\begin{lemma}[{\cite[Lemma 4.46]{bll+21}}]
    Let $W$ be the ratio of largest to smallest entry of $\phi''(x^\init)^{1/2}$ and let $W'$ be the ratio of largest to smallest entry of $\phi''(x)^{1/2}$ encountered throughout the algorithm.
    Then
    $$
    \log W' = \tilde O(\log W + \log(1/\mu^\final) + \log \|c\|_\infty).
    $$
\end{lemma}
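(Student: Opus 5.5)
The plan is to bound both sides of the ratio using only the $\epsilon$-centrality invariant of \Cref{lem:ipm}, together with an a priori bound on the slack $s$. A step-by-step argument does not suffice: each step changes $\phi''(x)$ by a constant factor, so tracking it iteration by iteration would give $\log W' \lesssim \log W + \tilde O(\sqrt n)$, which is too weak.

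\textbf{Lower bound on the smallest entry.} For every coordinate, $\sqrt{\phi''_e(x_e)} \gtrsim 1/u_e$, and $u_e$ is at most a polynomial in the initial data. Hence the smallest entry of $\phi''(x)^{1/2}$ never falls below $1/\poly(m,W)$, and this side contributes only $O(\log W + \log m)$ bits.

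\textbf{Reduction of the upper bound to a bound on $s$.} I use the elementary fact that $\sqrt{\phi''_e(x_e)}$ is within a constant factor of $\max\{1/(x_e-l_e),\,1/(u_e-x_e)\}$. In particular, if $x_e$ lies within distance $(u_e-l_e)/4$ of a boundary, then $\sqrt{\phi''_e(x_e)} \le 2|\phi'_e(x_e)|$. Otherwise $x_e$ is in the middle half and $\sqrt{\phi''_e} = O(1/(u_e-l_e))$, which is already harmless. For near-boundary coordinates, item 1 of \Cref{def:epscentered} gives
\[
|s_e| \ge \mu\tau_e|\phi'_e| - \epsilon\mu\tau_e\sqrt{\phi''_e} \ge \mu\tau_e(1/2-\epsilon)\sqrt{\phi''_e}.
\]
Since $\tau_e \ge n/m$, this yields $\sqrt{\phi''_e} \lesssim (m/n)\,|s_e|/\mu$. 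Because $\mu \ge \mu^\final$ throughout, it remains to show that $\|s\|_\infty \le \poly(m, W, \|c\|_\infty)$, or at least that $|s_e| \le \poly(m,W,\|c\|_\infty)$ on every coordinate where $x_e$ is near the boundary.

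\textbf{Bounding $s$.} This is the main obstacle, since $s = c - \mA y$ and $y$ is not a priori bounded. I would compare with the initial flow $x^\init$, which in the construction of \Cref{sec:initial_point:x0} sits exactly at the midpoint $u_e/2$ of every box, with margin at least $1/2$ in each coordinate. By dual feasibility (item 2 of \Cref{def:epscentered}),
\[
s^\top(x - x^\init) = c^\top(x - x^\init) - z^\top \mA^\top(x - x^\init).
\]
The first term is at most $m\|c\|_\infty W$. The second term is controlled by approximate feasibility (item 3): $\mA^\top x$ is close to $b = \mA^\top x^\init$ in a norm that I expect to lower-bound polynomially in the entries of $\phi''$. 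There is a circularity here, because that norm depends on $\phi''$ itself; I would first try to break it with a crude a priori bound, and otherwise use a step-wise argument with only logarithmic loss.

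Centrality also forces the sign of $s_e$ to be opposite to that of $\phi'_e(x_e)$ whenever $x_e$ is away from the middle. Then $x_e < u_e/2$ gives $s_e > 0$ and $x_e - x^\init_e < 0$, and symmetrically in the other case, so every such term $s_e(x_e - x^\init_e)$ is nonpositive. Within the middle region, $|s_e| \lesssim \mu\tau_e/(u_e-l_e)$ is small. Hence
\[
\sum_e |s_e|\,|x_e - x^\init_e| \le \poly(m,W,\|c\|_\infty).
\]
On near-boundary coordinates $|x_e - x^\init_e| \ge (u_e-l_e)/4 \ge 1/4$, which gives the desired polynomial bound on $|s_e|$ there. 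Combining the three steps, the largest entry is at most $\poly(m,W,\|c\|_\infty)/\mu^\final$ and the smallest is at least $1/\poly(m,W)$. Taking logarithms gives $\log W' = \tilde O(\log W + \log(1/\mu^\final) + \log\|c\|_\infty)$.
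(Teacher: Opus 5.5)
Your reduction steps check out for the flow instance. These are the near-boundary estimate $\sqrt{\phi''_e}\le 2|\phi'_e|$, the resulting bound $\sqrt{\phi''_e}\lesssim (m/n)|s_e|/\mu$, and the sign argument showing $s_e(x_e-x^\init_e)\le 0$ away from the midpoint. In the sign argument, the middle terms contribute only $O(\epsilon\mu\|\tau\|_1)=O(\mu n)$, which is polynomial since $\mu\le\mu^\init$.

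The gap is the cross term $z^\top(\mA^\top x-b)$ in $s^\top(x-x^\init)=c^\top(x-x^\init)-z^\top(\mA^\top x-b)$, which you leave open. Your suggested route cannot close it. Comparing the $\mM^{-1}$-norm with a Euclidean norm controls only the residual $\mA^\top x-b$, and that part is not actually circular, since $1/(\tau_e\phi''_e)$ is bounded above a priori. You would still need a bound on $z$, and as you note yourself, $z$ has none a priori: bounding $\mA z=c-s$ is exactly the bound on $s$ you are trying to prove. The fallback "step-wise argument with only logarithmic loss" is not specified, and your own opening remark explains why step-wise tracking loses $\tilde O(\sqrt n)$. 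So the key step, a polynomial bound on $|s_e|$ at near-boundary coordinates, is not established as written.

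The missing idea is to pair the two factors in dual norms and substitute $\mA z=c-s$. Set $\mM:=\mA^\top(\mT\Phi''(x))^{-1}\mA$. Item 3 of \Cref{def:epscentered} gives $\|\mA^\top x-b\|_{\mM^{-1}}\le\epsilon\gamma/C_{\mathrm{norm}}$. Cauchy--Schwarz then gives
\[
|z^\top(\mA^\top x-b)|\le\|z\|_{\mM}\,\|\mA^\top x-b\|_{\mM^{-1}},
\qquad
\|z\|_{\mM}=\|(\mT\Phi''(x))^{-1/2}(c-s)\|_2 .
\]
Now there is no circularity:
\begin{itemize}
\item For the $s$-part, centrality gives $|s_e|\le\mu\tau_e(|\phi'_e|+\epsilon\sqrt{\phi''_e})\le(1+\epsilon)\mu\tau_e\sqrt{\phi''_e}$, using $|\phi'_e|\le\sqrt{\phi''_e}$. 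So this part is at most $(1+\epsilon)\mu\|\tau\|_1^{1/2}=O(\mu\sqrt n)$.
\item For the $c$-part, $\tau_e\ge n/m$ and $\phi''_e\ge 8/(u_e-l_e)^2$ bound it by $(m/\sqrt n)\|c\|_\infty\|u-l\|_\infty$.
\end{itemize}
Since $\mu\le\mu^\init=\poly(m,W)/\epsilon$, the cross term is polynomially bounded, and the rest of your argument goes through.

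Your argument relies on the specific flow initial point: it is exactly feasible, sits at the box midpoints, and has capacities $\ge 1$. It therefore proves the lemma only for the paper's instance, not the general-LP statement of \cite{bll+21}. The paper gives no proof of its own and simply cites that result. For the flow instance, the lemma's $W$ (a ratio of entries of $\phi''(x^\init)^{1/2}$) and the capacity bound are polynomially related, because every vertex has a star edge of capacity $2$. That justifies your use of $\|u\|_\infty$ in place of the ratio.
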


Observe that by $\phi''(x) = 1/x_e^2 - 1/(u_e-x_e)^2$ and $\mu^\final = 1/(\epsilon \cdot \poly(n\max(\|c\|_\infty,\|u\|_\infty,\|b\|_\infty)))$, this lemma bounds the smallest value we might obtain for any $x_e$.
If we run the algorithm with some $\tilde O(\log (W/\epsilon))$ bit-length numbers, then we can guarantee the additive error is small multiplicative error, i.e., we always have a multiplicative approximation of $x$ and $\phi''(x)$.
Since the algorithm supports spectral approximation of $\mA^\top \mT^{-1}\Phi''(x)^{-1}\mA$, and entry-wise approximation of $\tau$ and $x$ when defining $g$ (\Cref{sec:gradient}), the algorithm and correctness is already accounting for such errors.
\cite{bll+21} have proven that defining the entire algorithm using some element-wise approximation $\ox,\os,\otau$ still guarantees convergence, where this was exploited to develop data structures that maintain/compute only $\ox,\os,\otau$ rather than exact $x,s,\tau$.

\section*{Acknowledgment}

Jan van den Brand is supported
by NSF Award CCF-2338816 and CCF-2504994. 
Albert Weng is supported by NSF Award CCF-2338816.

\bibliographystyle{alpha}
\bibliography{ref}

\appendix

\section{Discussion on Space Upper and Lower Bounds}\label{sec:lower_bound}

In this section we state the lower bound of \cite{ChakrabartiJWY25} for $\ell_p$-regression and restate it for undirected maximum flow (\Cref{thm:reduction_from_regression_to_flow}).
We also discuss that on unweighted (i.e., capacity 1) graphs, the max-flow can easily be stored in subquadratic space because it uses at most $O(n^{1.5})$ edges (\Cref{thm:unweightedflow}).

\begin{theorem}[{\cite[Theorem 5]{ChakrabartiJWY25}}]\label{thm:regressionlowerbound}
    Let $p\in(1,\infty]$ and let $q=p/(p-1)$ be its H\"older conjugate exponent.
    Let $\epsilon\in(0,1/(8q))$ and $d\in\N$.
    Any randomized algorithm that computes a $B$-bit summary of $a\in\{\pm1\}^d$ from which $\ox\in\R^d$ can be produced such that, with probability at least 2/3, we have $a^\top \ox = d$ and $\|\ox\|_p \le (1+\epsilon)\min_{a^\top x = d}\|x\|_p$ requires $B=\Omega(d)$.
\end{theorem}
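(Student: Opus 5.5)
The plan is a standard encoding argument. I will show that any admissible $\ox$ determines the sign pattern $a$ up to a small constant fraction of coordinates. Then a counting/entropy argument forces the summary to carry $\Omega(d)$ bits.

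\textbf{Step 1: the optimum.} By H\"older, every $x$ with $a^\top x = d$ satisfies $d \le \|a\|_q\|x\|_p = d^{1/q}\|x\|_p$. Hence $\min_{a^\top x=d}\|x\|_p \ge d^{1/p}$, and $x=a$ attains this. So the guarantee on $\ox$ reads $\|\ox\|_p \le (1+\epsilon)d^{1/p}$ and $a^\top\ox = d$.

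\textbf{Step 2: signs of $\ox$ recover most of $a$.} Let $S=\{i : \operatorname{sign}(\ox_i)\neq a_i\}$, counting zeros as disagreements, and write $|S|=\delta d$. Then
\[
d = a^\top\ox \le \sum_{i\notin S}|\ox_i| \le (d-|S|)^{1/q}\|\ox\|_p \le (1-\delta)^{1/q}(1+\epsilon)\, d ,
\]
so $(1-\delta)^{1/q}\ge (1+\epsilon)^{-1}$. This gives $\delta \le 1-(1+\epsilon)^{-q}\le q\epsilon < 1/8$. For $p=\infty$ we have $q=1$ and the same computation goes through. Thus the decoder outputting $\hat a=\operatorname{sign}(\ox)$ agrees with $a$ on at least $7d/8$ coordinates whenever $\ox$ is valid, i.e.\ with probability at least $2/3$.

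\textbf{Step 3: information lower bound.} Take $a$ uniform in $\{\pm1\}^d$. Fix the algorithm's randomness by averaging (Yao), so that the success probability over $a$ stays at least $2/3$ for some fixed random string. Let $M$ be the $B$-bit summary and $Z$ the indicator of success. We have $H(a\mid M)\le H(Z)+H(a\mid M,Z)$. On success, $a$ lies in the Hamming ball of radius $d/8$ around $\hat a(M)$, which contributes at most $H_2(1/8)d$ bits. On failure we use the trivial bound of $d$ bits. This gives
\[
H(a\mid M)\le 1+\tfrac23 H_2(1/8)d+\tfrac13 d .
\]
Therefore $B\ge I(a;M)\ge d\bigl(\tfrac23-\tfrac23H_2(1/8)\bigr)-1=\Omega(d)$, since $H_2(1/8)\approx 0.54<1$.

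The only delicate point is Step 2: the threshold $\epsilon<1/(8q)$ has to translate into a constant error fraction strictly below the value where $H_2$ reaches $1$. The inequality $1-(1+\epsilon)^{-q}\le q\epsilon$ handles this cleanly. Step 3 is routine, apart from the minor care of conditioning on the success event and derandomizing.
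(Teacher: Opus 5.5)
The paper gives no proof of this statement. It is quoted from \cite{ChakrabartiJWY25} and used only as a black box, with $p=\infty$ and $q=1$, in the appendix reduction to $st$-max flow. So there is nothing to match against, and the question is only whether your argument stands on its own. It does: it is a correct, self-contained proof.

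\textbf{Step 2.} This is where the threshold $1/(8q)$ comes from. You apply H\"older only on the coordinates where $\operatorname{sign}(\ox_i)=a_i$, which gives $d\le (d-|S|)^{1/q}(1+\epsilon)d^{1/p}$. Bernoulli's inequality $(1+\epsilon)^{-q}\ge 1-q\epsilon$ then turns $\epsilon<1/(8q)$ into fewer than $d/8$ sign errors.

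\textbf{Step 3.} This is a standard Fano-type counting argument, and you carry it out correctly:
\begin{itemize}
\item averaging out the randomness leaves a deterministic encoder and decoder that succeed on at least a $2/3$ fraction of uniform $a$;
\item splitting on the success indicator costs at most one bit;
\item on success, the Hamming-ball bound $2^{H_2(1/8)d}$ applies.
\end{itemize}

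A common alternative finish for such bounds is a reduction from INDEX, or unique decoding over a code with minimum distance above $d/4$. Your entropy route is somewhat more robust than either. It yields $B\ge \pi\bigl(1-H_2(\delta_0)\bigr)d-1$ for any success probability $\pi>0$ and any sign-error fraction $\delta_0<1/2$. So neither the constant $2/3$ nor the $1/8$ in the hypothesis is essential; for instance, $\epsilon<c/q$ for any constant $c<\ln 2$ would also suffice.
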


Now, we're ready to prove \Cref{thm:reduction_from_regression_to_flow}.
\begin{theorem}\label{thm:reduction_from_regression_to_flow}
    Any randomized algorithm $\cE$ that computes $\cE(G)\in\{0,1\}^B$ such that with probability at least 2/3, another algorithm $\cD(\cE(G))$ returns a $(1+\epsilon)$-approximate $st$-max flow for $\epsilon\in(0,1/36)$, requires $B=\Omega(n^2)$.
\end{theorem}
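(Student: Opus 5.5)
We reduce from \Cref{thm:regressionlowerbound} with $p=\infty$, hence $q=1$, so any $\epsilon<1/8$ is allowed there. Set $d=k^2$ with $k=\Theta(n)$. The plan is to encode a sign vector $a\in\{\pm1\}^d$ into a graph $G_a$ on $O(n)$ vertices such that an approximate max flow of $G_a$ yields a near-optimal $\ell_\infty$-regression solution $\ox$ for $a$. Then a $B$-bit summary $\cE(G_a)$, followed by $\cD$ and a fixed post-processing map, is a summary of $a$ of the kind forbidden there, so $B=\Omega(d)=\Omega(n^2)$.

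\textbf{Construction.} Index coordinates by pairs $(i,j)\in[k]^2$. Take the complete bipartite graph between left vertices $L=\{\ell_1,\dots,\ell_k\}$ and right vertices $R=\{r_1,\dots,r_k\}$, together with a source $s$ adjacent to all of $L$ and a sink $t$ adjacent to all of $R$. Coordinate $(i,j)$ controls edge $\ell_i r_j$. The key is to make the sign of $a_{ij}$ visible in the max flow. My first attempt is to give edge $\ell_i r_j$ capacity $2$ if $a_{ij}=+1$ and capacity $1$ if $a_{ij}=-1$, and to give the $s$- and $t$-edges capacity large enough that the bipartite edges form the bottleneck.

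\textbf{Decoding.} From an approximate max flow $f$ we set $\ox_{ij}=f_{\ell_i r_j}-\tfrac32$. This maps the exact max flow to $\ox=a/2$, and rescaling by $2$ gives $\ox=a$. In the $\ell_\infty$ problem $\min_{a^\top x=d}\|x\|_\infty$ the optimum is exactly $x=a$, with value $1$.

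\textbf{Main obstacle: approximation quality.} A $(1+\epsilon)$-approximate flow has total value at least $(1-O(\epsilon))\cdot\mathrm{OPT}$ and is capacity-feasible, so $\ox$ satisfies $\|\ox\|_\infty\le 1$ automatically. What fails is the constraint $a^\top \ox=d$: the flow deficit is spread over many edges, so $a^\top\ox\ge(1-O(\epsilon))d$ rather than $=d$. I would repair this by scaling, taking $\ox'=\ox\cdot d/(a^\top\ox)$. This gives $a^\top\ox'=d$ and $\|\ox'\|_\infty\le 1/(1-O(\epsilon))\le 1+\epsilon'$. Tracking constants is needed to get $\epsilon'<1/8$, and the threshold $1/36$ in the statement presumably comes from this step. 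The full accounting is: flow value is linear in $\sum_{ij}f_{ij}$, this sum is an affine function of $a^\top \ox$ after recentering, and in $a^\top\ox$ the deficit must be weighted by $|a_{ij}|=1$.

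If recentering to capacities $\{1,2\}$ does not make $a^\top\ox$ match the flow value exactly, I would instead use capacities $\{0,1\}$ and write $\ox=2f-1$. In that case $a^\top\ox$ counts agreements minus disagreements. Because the flow is capacity-feasible, it can only lose on $+1$ edges and cannot gain on $-1$ edges, so the deficit bound transfers linearly. The randomized success probability $2/3$ carries over directly. Since the reduction is deterministic post-processing, \Cref{thm:regressionlowerbound} gives $B=\Omega(k^2)=\Omega(n^2)$.
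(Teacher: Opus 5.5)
Your reduction has the same outline as the paper's: a complete bipartite gadget, reading $\overline{x}$ off the bipartite edge flows, and rescaling to enforce $a^\top\overline{x}=d$. Two steps fail as written, and the second is a genuine missing idea.

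\textbf{The $\ell_\infty$ bound.} In your primary construction (capacities $\{1,2\}$, $\overline{x}=2f-3$), the claim that $\|\overline{x}\|_\infty\le 1$ holds automatically is false. Capacity-feasibility bounds $f$ only from above. A $(1+\epsilon)$-approximate flow may leave individual bipartite edges empty: dropping the path $s\to\ell_i\to r_j\to t$ costs only $O(1)$ of value. That gives $|\overline{x}_{ij}|=3$. The paper avoids this by clamping: it uses capacities $2+a_i\in\{1,3\}$ and $\overline{x}_i\propto\max(1,f_i)-2\in[-1,1]$. Your $\{0,1\}$ fallback also fixes it, since negative coordinates are forced to $\overline{x}_{ij}=-1$ and positive ones lie in $[-1,1]$. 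But you introduced it for a different reason, and it should be your main construction.

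\textbf{The normalization.} In both versions, $\overline{x}'=\overline{x}\cdot d/(a^\top\overline{x})$ is not a fixed post-processing of the summary. The decoder does not know $a$, so it cannot compute $a^\top\overline{x}$, yet \Cref{thm:regressionlowerbound} demands $a^\top\overline{x}'=d$ exactly. Your affine-relation remark does not rescue this. In the $\{0,1\}$ version, $a^\top\overline{x}=d-2(F-|f|)$, where $F=|\{(i,j):a_{ij}=+1\}|$ is the true max-flow value. The decoder knows $F$ only up to a $(1+\epsilon)$ factor.

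The missing idea is that the encoder, who knows $a$, appends this side information to the summary. The paper appends $r=\sum_i a_i(c_i-\max(1,f_i))$. In your version, appending the integer $F$ suffices and costs only $O(\log n)$ bits. With that in place, $a^\top\overline{x}\ge(1-2\epsilon)d>0$, so $\|\overline{x}'\|_\infty\le 1/(1-2\epsilon)<9/8$ for $\epsilon<1/36$. Then \Cref{thm:regressionlowerbound} gives $B+O(\log n)=\Omega(n^2)$, and the rest of your argument goes through.
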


\begin{proof}
    Given $\cE$ we will construct a randomized algorithm that receives $a\in\{\pm1\}^d$ for $d=n^2$, and computes a $B=O(B)$ bit summary.

    Then we show that from this summary, we can produce an $\ox \in\R^d$ such that $a^\top \ox = d$ and $\|\ox\|_\infty < (1+1/8)\min_{a^\top x} \|x\|_\infty$.

    Hence by \Cref{thm:regressionlowerbound} ($p=\infty, q=1$), this implies $B=\Omega(n^2)$.

    \paragraph{Constructing the $O(B)$-bit summary}
    Given the vector $a\in\{\pm1\}^d$, we consider a complete bipartite graph $G$ with edges oriented from left to right where the capacity of the $i$-th edge is $2+a_i \in \{1,3\}$ forall $i \in [d]$.
    We add an extra vertex $s$ that connects to every vertex on the left, and an extra vertex $t$ reachable from every vertex on the right. The capacity of these edges is $4d$ so that the maximum flow will use $f_e=c_e$ for all edges of the bipartite graph.
    We run $\cE(G)$ to get the $B$-bit summary $s \in \{0,1\}^B$ of a $(1+\epsilon)$-approximate $st$-maximum flow.
    We also run the decoding algorithm $\cD(s)$ on that summary to obtain the approximate flow $f\in\R^E$ and compute $\sum_{i=1}^d a_i (c_i - \max(1,f_i))$. Attach this sum to our summary $s$.
    Thus we now have a $(B+\tilde{O}(1))$-bit summary.

    \paragraph{Constructing vector $\ox$ from $O(B)$-bit summary.}
    Given the summary, we can construct the $(1+\epsilon)$-approximate max flow $f$ and know 
    \begin{align}\label{eq:def_r}
    r := \sum_{i=1}^d a_i (c_i - \max(1,f_i)).
    \end{align}
    We define for all $i \in [d]$
    \begin{align}\label{eq:def_ox_i}
        \ox_i := (\max(1,f_i)-2) \cdot \frac{1}{1-r/d}.
    \end{align}
    Observe that for the true max-flow, each edge of the bipartite graph would be at full capacity. So if the flow was exact, we would have $f_i-2 = a_i$ and $r=0$. In particular, we would trivially have 
    \begin{align*}
        a^\top \ox = \sum_{i=1}^d a_i^2 = d.
    \end{align*}
    \noindent
    For our approximate flow, we still satisfy that property because
    \begin{align*}
        \sum_{i=1}^d a_i \ox_i = & ~ \sum_{i=1}^d a_i (\max(1,f_i)-2) \cdot \frac{1}{1-r/d} \\
        = & ~ \sum_{i=1}^d a_i \cdot (\max(1,f_i)-c_i+a_i) \cdot \frac{1}{1-r/d} \\
        = & ~ ( (\sum_{i=1}^d a_i^2 ) + \sum_{i=1}^d a_i\cdot(\max(1,f_i)-c_i) ) \cdot \frac{1}{1-r/d} \\
        = & ~ ( d+ (\sum_{i=1}^d a_i \cdot (\max(1,f_i)-c_i))) \cdot \frac{1}{1-r/d} \\
        = & ~ (d-r) \cdot \frac{1}{1-r/d} \\
        = & ~ d
    \end{align*}
where the first step follows from Definition of $\ox_i$ (see Eq.~\eqref{eq:def_ox_i}), the second step follows from $c_i = a_i+2$, the third step follows from separating two summations, the forth step follows from $\sum_{i=1}^d a_i^2 = 1$, the fifth step follows from Definition of $r$ (see Eq.~\eqref{eq:def_r}), and the last step follows from simple algebra.
    
    Lastly, this $\ox \in \R^d$ is also an approximation of 
    \begin{align*}
    \min_{a^\top x=d} \|x\|_\infty,
    \end{align*}
    because the optimal solution is $x=a$ which has $\|x\|_\infty=1$ and for our $\ox$ we have
\begin{align*}
    \| x \|_{\infty} \le & \frac{1}{1-r/d}
\end{align*}
where for the true max-flow $f^*$ of size $F\le 4d$ (by $c_i \le 4$ for all $i \in [d]$) we have
\begin{align}\label{eq:upper_bound_r}
r 
= & ~ \sum_{i=1}^d a_i (c_i - \max(1,f_i)) \notag \\
= & ~ \sum_{i=1}^d a_i (c_i - f^*) + \sum_{i=1}^d a_i (f^*_i -\max(1,f_i)) \notag \\
= & ~ \sum_{i=1}^d a_i (f^*_i -\max(1,f_i)) \notag \\
\le & ~ \sum_{i=1}^d (f^*_i -\max(1,f_i)) \notag \\
\le & ~ \epsilon F \notag \\
\le & ~  4 \epsilon d
\end{align}
where the first step follows from Eq.~\eqref{eq:def_r}, the last step follows from $F \leq 4d$, the second step follows from simple algebra, the third step follows from $c_i = f_i^*$ for all $i \in [d]$, the forth step follows each summation term is positive and $a_i \in \{\pm1\}$, the fifth step follows from property of $f_i$, and the last step follows from $F\leq 4d$.

Thus
\begin{align*}
\frac{1}{1-r/d}
\le
\frac{1}{1-4\epsilon}
\leq 1+1/8
\end{align*}
where the first step follows from Eq.~\eqref {eq:def_r}, the last step follows from $\epsilon \in (0, 1/36)$.
\end{proof}

Our work focuses on the weighted case. For unweighted graphs with capacity 1 on every edge, a max flow (assuming no unnecessary cycles) always uses at most $O(n^{1.5})$ edges. Thus they can be trivially stored in that much space.
It's interesting to see that now the same upper bound holds for the weighted case, despite graph with arbitrary capacities using up to $O(n^2)$ edges.

\begin{theorem}\label{thm:unweightedflow}
    Given a graph $G=(V,E)$ with $|V| =n$ and unit-capacity, then any size $F$ $st$-max flow (after removing cycles) uses at most $2 n \sqrt{F} $ edges. %

    In particular, since $F \leq n$, then we get $2n^{1.5}$ edges.
\end{theorem}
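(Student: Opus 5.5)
The plan is the classical layered-distance argument for unit-capacity flows, in the style of Even--Tarjan. Fix an acyclic $st$-max flow $f$ of value $F$, with $f_e\in\{0,1\}$ by integrality. Let $E_f$ be its support and $k$ a parameter, and split the count of $E_f$ into two parts.

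The first step is to look at the residual graph. Consider instead the flow $f'$ obtained after removing cycles, and think of $E_f$ as a union of $F$ edge-disjoint $st$-paths. I would bound the number of edges used by paths of length more than $k$ separately from the number used by short paths. Short paths are immediate: the $F$ paths have length at most $k$ each, so they contribute at most $F k$ edges. For long paths, I would use the fact that the support of an acyclic flow is a DAG. The key claim is that in any acyclic flow decomposition, each edge-disjoint path is a \emph{simple} path, hence has length at most $n$. This crude bound gives $Fn$, which is too weak, so a better argument is needed.

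The better argument goes through distances. Let $d(v)$ be the BFS distance from $s$ in the residual graph of the zero flow restricted to $E_f$, or more directly use the following standard fact for unit-capacity graphs: if a graph admits $F$ edge-disjoint $st$-paths, the distance from $s$ to $t$ is at most about $2n/\sqrt{F}$. I would instead argue via a min-cost decomposition. Replace $f$ by a flow of the same value on $E_f$ minimizing total edge count $|E_f|$ (i.e.\ min-cost flow with unit costs); this can only shrink the support, so it suffices to bound it, after which the original statement follows provided $f$ was chosen as this minimizer. Hmm—but the theorem says \emph{any} acyclic max flow, so that shortcut doesn't directly apply. I will therefore instead bound directly: decompose $f$ into $F$ edge-disjoint paths $P_1,\dots,P_F$ and order them by length. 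Let the layering of $G_f:=(V,E_f)$ by distance from $s$ be $L_0,L_1,\dots$. Since the flow is acyclic, $G_f$ is a DAG.

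The main obstacle is that edges can go backward or skip layers in $G_f$, so path lengths are not controlled by layer counts. What I would try first is this: for any $i$, the edges of $E_f$ leaving $L_{\le i}$ form an $st$-cut in $G_f$ carrying net flow $F$. With unit capacities and acyclicity, I would aim to show that at most $F$ edges cross forward between consecutive layers $L_i, L_{i+1}$ plus backward edges, and that backward edges are charged to forward ones. Then every edge goes between adjacent layers or backward, giving
\[|E_f|\le \sum_i |L_i|\,|L_{i+1}|\ \text{(dense part)} + F\cdot(\#\text{layers}).\]
Choosing a threshold $k=n/\sqrt F$: there are at most $\sqrt F$ layers of size at least $n/\sqrt F$... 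I would balance so that layers with $|L_i|+|L_{i+1}|$ small contribute $\le F$ each while at most $n\sqrt F/\,$something layers exist, yielding $2n\sqrt F$. The final sentence then follows since $F\le n$ (the out-degree of $s$ in a simple graph), giving $2n^{1.5}$.
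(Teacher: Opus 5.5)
There is a genuine gap. The route you settle on, BFS layers $L_0,L_1,\dots$ of $G_f$, breaks at exactly the step you flag as the main obstacle, and the closing ``balancing'' step is not an argument.

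In a BFS layering, edges cannot skip forward, but they can stay inside a layer or point backward. Backward edges destroy the cut count. For the cut $L_{\le i}$, conservation only gives
\[
\#\{\text{forward crossing flow edges}\} = F + \#\{\text{backward crossing flow edges}\},
\]
which is unbounded in general. A single backward edge from $L_j$ to $L_i$ crosses $j-i$ cuts, so there is no evident way to charge backward edges to forward ones. Your displayed inequality also omits intra-layer and backward edges entirely. Finally, BFS layers have uncontrolled sizes and there can be $\Theta(n)$ of them, so summing a per-layer bound of $F$ gives only $nF$, not $2n\sqrt{F}$. The Even--Tarjan distance fact you mention bounds the $s$--$t$ distance, not the size of the support, and is not needed here.

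The missing idea is to use a \emph{topological order} of the DAG $G_f$ instead of BFS distances, and to group vertices by count rather than by layer. This is the paper's proof:
\begin{enumerate}
\item Take a topological order with $s$ first and $t$ last. This is possible because an acyclic flow has no flow edge into $s$ or out of $t$.
\item Cut the order into $g$ consecutive blocks $V_1,\dots,V_g$ of $n/g$ vertices each. Edges inside blocks number at most $g\,(n/g)^2=n^2/g$.
\item Every other flow edge goes forward out of some block $V_k$, so it crosses the prefix cut $W_k=V_1\cup\dots\cup V_k$. An edge skipping several blocks is counted once, at the cut following its tail's block.
\item Since \emph{no} flow edge crosses $W_k$ backward and $f_e\in\{0,1\}$, exactly $F$ flow edges cross each $W_k$. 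Hence there are at most $gF$ cross-block edges.
\item Taking $g=n/\sqrt{F}$ gives $n^2/g+gF=2n\sqrt{F}$.
\end{enumerate}
You already had the right ingredient, namely that $G_f$ is a DAG. BFS layering throws away precisely the property (no backward edges) that pins each cut count to $F$.
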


\begin{proof}
After removing cycles from the max-flow, the edges used by the flow form a directed acyclic graph (DAG). From now, we consider the graph that consists of only the flow-edges. Since that is a DAG, consider the topological order.

Let $g$ be some parameter. 
We split the sequence of ordered vertices into $g$ groups of $n / g$ vertices. Let $V_1, V_2, \cdots, V_{g}$ denote those $n$ groups and each $|V_i| = n / g$ for all $i \in [g]$.

If we consider all the max-flow edges, there are now 2 types: 
\begin{itemize}
    \item Type I: edges within a single group, 
    \item Type II: edges that go from one group to another group.
\end{itemize}

First we will explain how to compute the number of Type I edges. 
Within a single group there are at most $(n/g)^2$ edges, and since there are $g$ groups, we have 
\begin{align*}
(n/g)^2 \cdot g = n^2/g %
\end{align*}
edges of that Type I in total.

For any set $W \subset V$, we use the $\delta(W)$ to denote the size of cut between $W$ and $V \setminus W$.

Now, let us consider edges of type II that go across different groups.
For any such edge $e$, assume it leaves some group $V_k$, then the edge is cut by the cut $W_k \subset V$ where
\begin{align*}
W_k := \bigcup_{i=1}^k V_i, \mathrm{~and~} V\setminus W_k = \bigcup_{i=k+1}^{g} V_i,
\end{align*}

\noindent
because all flow-edges are oriented left to right by topological order assumption.
Thus the number of type II edges is bounded by the sum of sizes of all such cuts, i.e., 
\begin{align*}
\sum_{k=1}^{g}\delta(W_k).
\end{align*}

\noindent
Since the graph is a DAG and $W_k$ contains the first $kn/g$ vertices of the topological order, the cut has only outgoing edges, in particular, $\delta(W_k)= F$ the size of the flow. Thus 
\begin{align*}
\sum_{k=1}^{g}\delta(W_k) = g \cdot F %
\end{align*}
So there are at most $gF$ many edges of type II.

Overall, if count two types of edges together, there are
\begin{align*}
n^2/ g + g \cdot F.
\end{align*}
To minimize the above quantity, we choose $g = n/\sqrt{F}$. Thus, we obtain
\begin{align*}
2 n \sqrt{F}
\end{align*}
By unit-capacity we have $F\le n$, implying at most $2n^{1.5}$ edges in the max-flow.

\end{proof}

\section{Basic Algebra Facts and Inequalities}

\begin{fact}\label{fac:x_y_a}
Let $0 \leq x \leq y \leq a$. Then we can show
\begin{align*}
\frac{1}{x^2} + \frac{1}{(y-x)^2} \geq \frac{2}{a^2}
\end{align*}
\end{fact}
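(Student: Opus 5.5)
The plan is to bound each of the two summands separately from below by $1/a^2$, using only the ordering $0 \le x \le y \le a$. (If $x=0$ or $y=x$, the corresponding term is $+\infty$, consistent with how $\phi''$ is used in \Cref{lem:lastPointApproxFinal}, and the inequality holds trivially. So I assume $0<x<y$ from now on.)

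For the first term, $0 < x \le y \le a$ gives $x^2 \le a^2$, hence $1/x^2 \ge 1/a^2$. For the second term, $0 < y-x \le y \le a$, since $x \ge 0$, so likewise $(y-x)^2 \le a^2$ and $1/(y-x)^2 \ge 1/a^2$. Adding the two bounds yields
\[
\frac{1}{x^2} + \frac{1}{(y-x)^2} \ge \frac{2}{a^2}.
\]

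If one wanted a sharper constant, the alternative route is to note $x + (y-x) = y \le a$ and apply convexity of $t\mapsto 1/t^2$ (Jensen at the midpoint). This gives $\frac{1}{x^2}+\frac{1}{(y-x)^2} \ge \frac{2}{(y/2)^2} = \frac{8}{y^2} \ge \frac{8}{a^2}$. The weaker bound above is all that is needed for \Cref{lem:lastPointApproxFinal}, so I would present the termwise argument.

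There is no real obstacle. The only point needing care is the degenerate boundary cases $x=0$ or $x=y$, handled by the convention that the expression is $+\infty$; in the application the IPM iterates stay strictly interior.
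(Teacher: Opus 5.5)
Your proof is correct. It takes a more elementary route than the paper.

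The paper first uses $0<y-x\le a-x$ to replace $\frac{1}{(y-x)^2}$ by $\frac{1}{(a-x)^2}$. It then appeals to convexity of $t\mapsto 1/t^2$ to write $\frac{1}{x^2}+\frac{1}{(a-x)^2}\ge 2f(a-x+x)=\frac{2}{a^2}$. As written, that convexity step is misstated: Jensen gives $2f(a/2)=\frac{8}{a^2}$ there, not $2f(a)$. The weaker bound $\ge 2f(a)$ that the paper records actually follows from monotonicity of $1/t^2$, not from convexity.

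Your termwise bound uses only $x\le a$, $y-x\le y\le a$, and that $1/t^2$ is decreasing on $(0,\infty)$. It needs no convexity and no detour through $a-x$. It gives exactly the $2/W^2$ lower bound that \Cref{lem:lastPointApproxFinal} uses, with $x=x_e$, $y=u_e$, $a=W$.

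Your side remark, applying Jensen directly to $x$ and $y-x$ (whose sum is $y\le a$), is the correct form of the paper's convexity idea and yields the sharper constant $8/a^2$.

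You also handle the boundary cases $x=0$ and $x=y$ explicitly through the $+\infty$ convention. The paper silently assumes $y>x$ (it writes $0<y-x$) and does not address $x=0$.
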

\begin{proof}
Since $0 < y-x \leq a - x$, then $(y-x)^2 \leq (a-x)^2$. This implies that 
\begin{align}\label{eq:x_y_a_1}
\frac{1}{(y-x)^2} \geq \frac{1}{(a-x)^2}
\end{align}

Let $f(x) := \frac{1}{x^2}$. Note that $f(x)$ is a convex function in $[0,a]$, thus
\begin{align}\label{eq:x_y_a_2}
f(x) + f(a-x) \geq 2 f( a-x + x) = \frac{2}{a^2}
\end{align}

Thus, we have 
\begin{align*}
\frac{1}{x^2} + \frac{1}{(y-x)^2} \geq & ~ \frac{1}{x^2} + \frac{1}{ (a-x)^2 } \\
\geq & ~ \frac{2}{a^2}
\end{align*}
where the first step follows from Eq.~\eqref{eq:x_y_a_1}, and the second step follows from Eq.~\eqref{eq:x_y_a_2}. 

Thus, we complete the proof.
\end{proof}

\end{document}